\newcommand{\oM}{\overline{\mathcal M}}
\def\oM{{\overline{\mathcal{M}}}}
\newcommand{\RT}{\mathrm{RT}}
\newcommand{\ZZ}{\mathbb{Z}}
\newcommand{\gl}{\mathrm{gl}}
\newcommand{\un}{{1\!\! 1}}
\newcommand{\Id}{\mathop{\mathrm{Id}}}
\newcommand{\nordbullet}{\mbox{\tiny ${\bullet\atop\bullet}$}}
\newcommand{\diff}[2]{\frac{\partial #1}{\partial #2}}
\newcommand{\qa}{\alpha}
\newcommand{\qb}{\beta}
\newcommand{\qd}{\delta}
\newcommand{\qg}{\gamma}
\newcommand{\qs}{\sigma}
\newcommand{\qt}{\tau}
\newcommand{\qe}{\varepsilon}
\newcommand{\qz}{\zeta}
\newcommand{\qp}{\partial}
\newcommand{\ql}{\lambda}
\newcommand{\Qo}{\Omega}
\newcommand{\mgn}{\overline{\mathcal M}_{g,n}}
\newcommand{\kk}[1]{\left(#1\right)}
\newcommand{\ddd}[1]{\llangle #1\rrangle_0}
\newcommand{\ddg}[1]{\llangle #1\rrangle}
\newtheorem{theorem}{Theorem}[section]
\newtheorem{proposition}[theorem]{Proposition}
\newtheorem{lemma}[theorem]{Lemma}
\newtheorem{corollary}[theorem]{Corollary}
\theoremstyle{remark}
\newtheorem{remark}[theorem]{Remark}
\theoremstyle{definition}
\newtheorem{definition}[theorem]{Definition}
\numberwithin{equation}{section}
\title[Universal identities]{Structure of Dubrovin-Zhang free energy functions and universal identities}
\author{Sergey Shadrin}
\address[S. Shadrin]{Korteweg--de Vries Instituut voor Wiskunde, Universiteit van Amsterdam, Postbus 94248, 1090GE Amsterdam, Nederland}
\email{s.shadrin@uva.nl}
\author{Zhe Wang}
\address[Z. Wang]{Division of Mathematics, Graduate School of Science, Kyoto University, Kyoto 606-8502, Japan}
\email{wang.zhe.65f@st.kyoto-u.ac.jp}
\begin{document}
	
	\begin{abstract}
		We prove a structural theorem relating the higher genera free energy functions of the  Dubrovin-Zhang hierarchies to the Witten-Kontsevich free energy function of the Korteweg-de Vries hierarchy. As an important application, for any given genus $g\geq 1$, we construct a set of universal identities valid for the free energy functions of any Dubrovin-Zhang hierarchy. In particular, we present some techniques that can be used to derive universal identities without relying on the geometry of the moduli space of stable curves of higher genus.
	\end{abstract}
	
	\maketitle
	
	\tableofcontents
	
	\section{Introduction}
	\label{ak}
	Since the proof of the Witten conjecture \cite{witten1990two} by Kontsevich \cite{kontsevich1992intersection}, which relates the topology of the moduli space of stable curves to the Korteweg-de Vries (KdV) hierarchy, people have gradually understood the deep relation between the 2D topological field theory and the theory of integrable hierarchies over past three decades. Many Witten-Kontsevich type theorems have been discovered and proved since then, where the corresponding integrable hierarchies (which were already known to mathematical physicists in completely different contexts) appear to universally govern partition functions constructed from different aspects of mathematical physics, for example, from quantum cohomology, matrix model, singularity theory, etc., {see \cite{bessis1980quantum,carlet2004extended,getzler2001toda,givental2001gromov,givental2005simple,t1973planar,witten1990two} and references therein.}
	To systematically study the emerging integrable hierarchies, Dubrovin and Zhang started a program in \cite{dubrovin2001normal} aiming at giving an  axiomatic characterization of topological integrable hierarchies, that is, the integrable evolutionary PDEs that control a certain  2D topological field theory.
	
    The paper \cite{dubrovin2001normal} has two main goals. The first one is to construct an integrable hierarchy from a given 2D topological field theory, and the second goal is to reproduce all the universal identities satisfied by all Gromov-Witten invariants (and all other enumerative invariants that fit this context) at full genera. Up to now, the first goal is completely achieved under the semisimplicity assumption. To state precisely, given a 
    semisimple Frobenius manifold with a calibration, there exists a unique tau-symmetric bihamiltonian integrable hierarchy, called the Dubrovin-Zhang (DZ) hierarchy, whose tau-function satisfies a family of linear Virasoro constraints. As for the second goal, Dubrovin and Zhang derive
    the topological recursion relations for genus zero and one from the construction of DZ hierarchies. Remarkably, those topological recursion relations are originally proved by studying the topology of the moduli space of stable curves. However, Dubrovin and Zhang's method indicates that those relations are hidden in every DZ hierarchies, even for those hierarchies that have no known relations to geometric enumerative problems or topology of moduli spaces. More generally, Dubrovin and Zhang prove in~\cite{dubrovin2005normal} that the tau functions of their hierarchies are given by the Givental formula~\cite{givental2001gromov,givental2001semisimple}, which is known to satisfy all universal relations coming from the relations among additive generators of the tautological ring of the moduli spaces of curves~\cite{faber2010tautological}. 
    
    In some cases it is clear what structural property  a Dubrovin-Zhang hierarchy {possesses} if {it is} governed by a particular set of tautological relations, see e.~g.~\cite{buryak2012polynomial,iglesias2022bi}. But it is largely unknown for more involved types of relations, and the whole theory lacks simple explicitly written universal identities that would reflect some explicitly understood universal properties of the Dubrovin-Zhang hierarchies and/or their tau-functions. 
    In particular, for practical applications it is useful to have universal identities that provide efficient tools to control the ingredients of the Dubrovin-Zhang hierarchies in terms of the dependent variables and their derivatives with respect to the spacial variable (the so-called jet variables). For instance, the constraints on the free energy functions written in terms of the Eguchi-Xiong operators~\cite{eguchi1998quantum} imply the so-called $3g-2$ property of the free energy function, cf.~\cite{buryak2012polynomial}.

    In this paper, we make a further step towards studying the universal identities by using the method of Dubrovin and Zhang. For a given genus $g\geq 1$, we construct a set of universal identities valid for the topological solution of the Dubrovin-Zhang hierarchy of any semisimple Frobenius manifold, and the operators that we involve have strong vanishing properties in the jet variables. These operators naturally generalize the Eguchi-Xiong operators to higher differential order.   
	
	The key ingredient for deriving these universal identities is a structural theorem for the free energy functions. To state the result precisely, recall that for 
	a semisimple Frobenius manifold with a fixed calibration, the higher genus free energy functions $\mathcal F_g$ can be written as functions depending on jet variables for $g\geq 1$. For example, consider the following free energy function of the 
	Gromov-Witten theory of the point:
	\[
		\mathcal F_g = \sum_{n\geq 0}\sum_{k_1,\dots,k_n\geq 0}\frac{t_{k_1}\dots t_{k_n}}{n!} \int_{\mgn}\psi_1^{k_1}\dots\psi_n^{k_n}.
		\]
	It is well-known that for $g\geq 1$, we have 
	\begin{equation}
		\label{ab}
		\mathcal F_g = F_g\left(u^{(1)},\dots,u^{(3g-2)}\right),
	\end{equation}
	here 
	\[
		u^{(s)} = \frac{\qp^{s+2} \mathcal F_0}{\qp t_0^{s+2}},\quad s\geq 1,
		\]
	and
	\[
		F_1 = \frac{1}{24}\log u^{(1)},\quad F_g\in \mathbb Q \left[\frac{1}{u^{(1)}},u^{(2)},\dots,u^{(3g-2)}\right],\quad g\geq 2.
		\]
For example, we have 
\[
	F_2 = \frac{u_{xx}^3}{360u_x^4}-\frac{7u_{xx}u^{(3)}}{1920u_x^3}+\frac{u^{(4)}}{1152 u_x^2};
	\]
here we use the notation $u_x = u^{(1)}$ and $u_{xx} = u^{(2)}$. Note that the 
Gromov-Witten theory of the point corresponds to the one-dimensional Frobenius manifold given by the potential
\[
	F = \mathcal F_0|_{t_0 = u, t_{>0} = 0} = \frac{u^3}{6}.
	\]
Generally speaking, 
if the underlying semisimple Frobenius manifold $M$ {is of dimension} $N$ and its first metric in the flat coordinates $v^1,\dots,v^N$ is denoted by $\eta$,
then its higher genera free energy functions $\mathcal F_g$ have the form
\[
	\mathcal F_g = F_g\left(v^\qa,v^{\qa,1},\dots,v^{\qa,3g-2}\right),
	\]
here \[
	v^\qa = \eta^{\qa\qb}\frac{\qp^2\mathcal F_0}{\qp t^{\qb,0}\qp t^{1,0}},\quad v^{\qa,s} = \left(\diff{}{t^{1,0}}\right)^s v^\qa,\quad s\geq 1,\quad \qa = 1,\dots,N,
	\]
and
\[
	F_1 = \frac{1}{24}\log\det \left(c_{\qa\qb\qg}v^{\qg,1}\right)+G(v),\quad F_g\in C^{\infty}(v)\left[\frac{1}{v^{\qa,1}},v^{\qa,1},v^{\qa,2},\dots,v^{\qa,3g-2}\right]
,\quad g\geq 2.	\]
Here and henceforth, we will always assume the Einstein summation rule for upper and lower Greek indices. Moreover, we will always raise or lower indices by the metric $\eta$. Note that in the above expressions, $G(v)$ is the so-called $G$-function defined in \cite{getzler1997intersection} (see also \cite{dubrovin1998bihamiltonian}), $c_{\qa\qb\qg}$ are functions defined by
\[
	c_{\qa\qb\qg} =\left.\frac{\qp^3 \mathcal F_0}{\qp t^{\qa,0}\qp t^{\qb,0}\qp t^{\qg,0}}\right|_{t^{\qa,0} = v^\qa,t^{\qa,1} = t^{\qa,2}=\dots=0},
	\]
and $C^{\infty}(v)$ is the ring of smooth function depending on variables $v^1,\dots,v^N$. Furthermore, {let us} denote by $(u^1,\dots,u^N)$ the canonical coordinates on $M$, {then} in terms of these coordinates, the metric $\eta$ is of the diagonal form
\[
	\eta = \sum_{i=1}^N f_i(u) (du^i)^2.
	\]
Each $u^i$ can be viewed as functions depending on $v^1,\dots,v^N$, therefore we see that $F_g$ for $g\geq 1$ can also be written as functions in jet variables of $u^i$, that is, $F_g$ can be viewed as a function in $u^i,u^{i,1},\dots,u^{i,3g-2}$, with
\[
u^{i,s} = \left(\diff{}{t^{1,0}}\right)^s u^i,\quad s\geq 1,\quad i = 1,\dots,N.
	\]
Now we can state the following structural theorem for $F_g$.

\begin{theorem}
	\label{ae}
Given 
a semisimple Frobenius manifold (with a choice of calibration) of rank $N$, its higher genus free energy function $F_g$ admits the following decomposition:
\begin{equation}
	\label{ah}
	F_g = \sum_{i=1}^N\left(\frac{1}{f_i(u)}\right)^{g-1}F_g^{KdV}(u^{i,1},\dots,u^{i,3g-2})+H_g,\quad g\geq 1,
\end{equation}
	here $F_g^{KdV}$ is the genus $g$ free energy function \eqref{ab} of the Gromov-Witten theory of the point and the function $H_g$ satisfies the conditions
\begin{equation}\label{eq:RestrictionOnH}
		\frac{\qp^n H_g}{\qp u^{i_1,k_1}\dots\qp u^{i_n,k_n}} = 0,\quad k_1+\dots+k_n\geq 3g-3+n,\quad n\geq 1.
\end{equation}
\end{theorem}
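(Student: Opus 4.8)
The plan is to recast both assertions of Theorem~\ref{ae} as a single statement about the top component of $F_g$ under a suitable grading, and then to extract that top component from the genus-by-genus structure of the Dubrovin--Zhang free energy in canonical coordinates.

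First I would introduce the \emph{reduced degree} $\tdeg$, the additive grading on $C^\infty(u)[1/u^{i,1},u^{i,1},u^{i,2},\dots]$ fixed by
\[
\tdeg\big(u^{i,s}\big)=s-1\ (s\ge1),\qquad \tdeg\big(\varphi(u)\big)=0 .
\]
This is well defined since the only denominators occurring are powers of the $\tdeg$-neutral variables $u^{i,1}$. The first step is to observe that condition~\eqref{eq:RestrictionOnH} amounts to $\tdeg(H_g)\le 3g-4$: for a fixed monomial $M$, the maximum of $\sum_j(k_j-1)$ over all nonvanishing iterated derivatives $\partial^n/\partial u^{i_1,k_1}\cdots\partial u^{i_n,k_n}$ is attained by differentiating with respect to every numerator jet factor of order $\ge2$, and equals exactly $\tdeg(M)$; since $k_1+\dots+k_n\ge 3g-3+n$ is the same as $\sum_j(k_j-1)\ge 3g-3$, the prescribed vanishing holds for every monomial of $H_g$ iff each has reduced degree at most $3g-4$. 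Hence Theorem~\ref{ae} is equivalent to the pair of statements $\tdeg(F_g)\le 3g-3$ and $[F_g]_{3g-3}=\sum_{i=1}^N (1/f_i(u))^{g-1}F_g^{\KdV}(u^{i,1},\dots,u^{i,3g-2})$, where $[\,\cdot\,]_{3g-3}$ is the $\tdeg$-homogeneous part of degree $3g-3$. Each summand on the right is indeed $\tdeg$-homogeneous of degree $3g-3$: this is the quasi-homogeneity of the Witten--Kontsevich free energy (equivalently the dimension constraint $\sum_j d_j=3g-3+n$ on $\oM_{g,n}$), which forces every monomial of $F_g^{\KdV}$ to have reduced degree $3g-3$, while $(1/f_i)^{g-1}$ is $\tdeg$-neutral.

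Next I would compute $[F_g]_{3g-3}$ from the Dubrovin--Zhang loop equation, which determines $\mathcal F_g$ recursively from the genus-$0$ data together with $\mathcal F_1,\dots,\mathcal F_{g-1}$. In the canonical coordinates $u^i$ this equation splits into a \emph{diagonal} part, which for each fixed $i$ is literally the loop equation of the one-dimensional (point) Frobenius manifold in the jet variables $u^{i,\bullet}$, and an \emph{off-diagonal} part carrying the rotation coefficients $\gamma_{ij}$ ($i\neq j$) of the manifold. The rotation coefficients, like the $f_i$, are functions of $u$ alone, hence $\tdeg$-neutral, but each off-diagonal term they generate couples two distinct canonical directions at the expense of a derivative and therefore strictly lowers the reduced degree. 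Passing to the top $\tdeg$-component thus kills all off-diagonal contributions and leaves, for each $i$ separately, exactly the point-theory loop equation; by induction on $g$, with base case $g=1$ the explicit expression for $F_1$ recalled above, the uniqueness of its solution identifies the $i$-th diagonal piece of $[F_g]_{3g-3}$ with $F_g^{\KdV}(u^{i,\bullet})$ up to a $\tdeg$-neutral prefactor. That prefactor is fixed by the rescaling $\epsilon\mapsto\epsilon/\sqrt{f_i}$ normalizing $\eta=\sum_i f_i\,(du^i)^2$ to the standard metric in the $i$-th direction; combined with the identity $F_g^{\KdV}(\lambda u^{(1)},\dots)=\lambda^{2g-2}F_g^{\KdV}(u^{(1)},\dots)$ it produces precisely $(1/f_i)^{g-1}$, and the same degree count yields $\tdeg(F_g)\le 3g-3$.

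The same bookkeeping can be run more transparently on Givental's ancestor graph-sum formula for $F_g$, in which the legs carry the jet variables undressed and the $R$-matrix factors appear only on edges and are $\tdeg$-neutral. Using the dimension constraint at each vertex and the Euler relation $g=\sum_v g_v+E-V+1$, a connected decorated graph with $E$ edges contributes, after substitution into jets, an expression of reduced degree $3g-3-E-\sum_e(a_e+b_e)\le 3g-3-E$, where $a_e,b_e\ge0$ are the $\psi$-powers at the two ends of an edge $e$. Hence only the single-vertex graphs ($E=0$) survive in top reduced degree, and they assemble exactly into $\sum_i (1/f_i)^{g-1}F_g^{\KdV}(u^{i,\bullet})$.

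I expect the main obstacle to be this degree bookkeeping: one must prove rigorously that the off-diagonal (rotation-coefficient, resp.\ edge) contributions lower $\tdeg$ by at least one and never conspire to reach reduced degree $3g-3$, and one must pin down the normalization $(1/f_i)^{g-1}$ exactly, keeping careful track of how the genus-$0$ change of variables between flat and canonical coordinates --- which introduces $\tdeg$-neutral Jacobian factors and the functions $f_i$ --- interacts with the top-degree truncation. Once this is in place the theorem follows formally from the reduced-degree reformulation and the known structure of $F_g^{\KdV}$.
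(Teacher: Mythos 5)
Your reduction of the theorem to the statements $\tdeg(F_g)\le 3g-3$ and $[F_g]_{3g-3}=\sum_{i=1}^N f_i(u)^{1-g}F_g^{\KdV}(u^{i,1},\dots,u^{i,3g-2})$ is correct: condition~\eqref{eq:RestrictionOnH} is indeed equivalent to every monomial of $H_g$ having reduced degree at most $3g-4$, and $F_g^{\KdV}$ is $\tdeg$-homogeneous of degree $3g-3$ by~\eqref{ar}. The gap is in the mechanism you propose for computing $[F_g]_{3g-3}$. Your key claim — that in canonical coordinates the off-diagonal (rotation-coefficient) terms of the loop equation ``strictly lower the reduced degree'', so that the top $\tdeg$-component of~\eqref{ao} decouples into $N$ copies of the point loop equation — is false as stated. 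Already for the bilinear combination $\qp_x(\qp^\qg p_\qa)\,G^{\qa\qb}\,\qp^\rho p_\qb$, the Gauss--Manin system produces, besides the diagonal leading term $-\tfrac12\sum_i\Psi_i^\qg\Psi_i^\rho\,u^{i,1}/(\ql-u^i)^2$, the off-diagonal contribution
\begin{equation*}
-\sum_{i\neq j}\Psi_i^\qg\Psi_j^\rho\, V_{ij}\,\frac{u^{i,1}}{(\ql-u^i)(\ql-u^j)},
\end{equation*}
which has the \emph{same} reduced degree ($\tdeg=0$) and the same total pole degree as the diagonal term; it is subleading only with respect to the multi-degree ordering in which a pure pole of maximal order at a single $u^i$ dominates mixed poles. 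With that ordering, however, only the coefficient of $(\ql-u^i)^{-(3g-1)}$ is explicitly computable (this is the content of~\eqref{ap}, Proposition~\ref{bf} and Lemma 3.10.19 of Dubrovin--Zhang), and matching it determines only the single gradient $\qp F_g/\qp u^{i,3g-2}$ — not the whole top $\tdeg$-part of $F_g$. Recovering the full top part from the loop equation alone would require the pure-pole expansions of the period bilinears to all orders, with control of their jet-degree structure; the paper states explicitly in its Conclusion that ``it is difficult to consider even the next order terms''. So the loop-equation-only route does not close as proposed, and you yourself flag this bookkeeping as the unresolved obstacle.

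This is precisely why the paper splits the proof differently: the \emph{shape} of the decomposition — Equation~\eqref{am} with unknown coefficients $\varphi_{g;i}$ and an $H_g$ satisfying~\eqref{eq:Property-H-g} — is obtained not from the loop equation but from invariance under the Givental group action (Lemma~\ref{bb} and Propositions~\ref{prop:r-action},~\ref{bc},~\ref{prop:decomposition-Givental}), starting from the $N$-fold KdV product where it is trivial; the loop equation is then used only at the single leading pure pole to pin down $\varphi_{g;i}=f_i(u)^{1-g}$ (Proposition~\ref{prop:functions-varphi}, via the KdV recursion~\eqref{av}). Your alternative route through Givental's graph-sum formula is in spirit exactly this invariance argument — the paper remarks after Proposition~\ref{prop:decomposition-Givental} that the graphical formula of~\cite{dunin2013givental} yields the same result — but as written it leaves two essential points unproven: the degree count ``after substitution into jets'' (the legs of the graphs carry dressed ancestor times, and identifying them with the jet variables of the full theory is the nontrivial quasi-Miura step, handled in the paper by the derivative-matching of Lemma~\ref{bb}), and the normalization: in the Givental formula~\eqref{aj} the $f_i$ enter as \emph{constants} $f_i(pt)$, and promoting them to the functions $f_i(u)$ appearing in~\eqref{ah} is exactly what Proposition~\ref{prop:functions-varphi} is for. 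Note also that your scaling identity should read $F_g^{\KdV}(\lambda u^{(1)},\dots,\lambda u^{(3g-2)})=\lambda^{1-g}F_g^{\KdV}(u^{(1)},\dots,u^{(3g-2)})$ rather than $\lambda^{2g-2}F_g^{\KdV}$; the exponent $2g-2$ belongs to the grading $u^{(k)}\mapsto\lambda^k u^{(k)}$, and conflating the two scalings is exactly the kind of slip the normalization argument cannot afford.
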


Using the above theorem, we derive a family of universal identities. To this end, we define and study thoroughly a set of differential operators $O_{\{\alpha_1,k_1;\dots,\alpha_n,k_n\}}$, where $n = k_1 = 1$ or 
\[
	k_1+\cdots +k_n =   3g-3 +n,\quad g\geq 2,\quad n\geq 1,\quad k_i\geq 2.
	\]
These operators are differential operators of degree $n$ on the large phase space and have strong vanishing properties in the jet coordinates that allow us to use equation~\eqref{eq:RestrictionOnH} to annihilate $H_g$. 

Introduce the following correlators
\[
	\llangle\qt_{\qa_1,k_1}\dots\qt_{\qa_n,k_n}\rrangle_g:=\frac{\qp^n \mathcal F_g}{\qp t^{\qa_1,k_1}\dots\qp t^{\qa_n,k_n}}
	\]
to express the action $O_{\{\alpha_1,k_1;\dots,\alpha_n,k_n\}}(\mathcal F_g)$. We then have the following theorem on universal relations.

\begin{theorem}
	\label{af}
	Given 
	a semisimple Frobenius manifold (with a choice of calibration), its genus $g$ correlators satisfy the relations
\begin{equation} \label{eq:UniversalIdentity-intro}
		O_{\{\alpha_1,k_1;\dots,\alpha_n,k_n\}}(\mathcal F_g) = B^g_{k_1,\dots,k_n}M[g]^{\qg_0}_{\qg_n}\prod_{i=1}^n\ddd{\qt_{\qa_i,0}\qt_{\qg_{i-1,0}}\qt^{\qg_{i},0}},
\end{equation}
for $g\geq 1$. In the above expressions, $B^g_{k_1,\dots,k_n}$ are some rational numbers that can be explicitly computed from the intersection numbers of $\mgn$, and $M[g]$ is defined by 
\[
	M[g]^\qa_\qg =\begin{cases}
		\qd^\qa_\qg, & \text{for } g=1, \\
		M^\qa_\qg, & \text{for } g=2, \\
		M^{\qa}_{\qb_1}M^{\qb_1}_{\qb_2}\dots M^{\qb_{g-3}}_{\qb_{g-2}}M^{\qb_{g-2}}_{\qg}, & \text{for } g\geq 3,
	  \end{cases}
	\]
where we denote 
\[
M^{\qa}_\qb = \ddd{\qt_{0}^\qa\qt_{\ql,0}\qt_{\mu,0}}\ddd{\qt_{0}^\ql\qt_{0}^\mu\qt_{\qb,0}}.
	\]
\end{theorem}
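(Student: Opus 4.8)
The plan is to substitute the decomposition of Theorem~\ref{ae} into $O_{\{\qa_1,k_1;\dots,\qa_n,k_n\}}(\mathcal F_g)$ and to exploit that these operators are constructed precisely so that \eqref{eq:RestrictionOnH} forces $O_{\{\qa_1,k_1;\dots,\qa_n,k_n\}}(H_g)=0$. Indeed, under the degree constraint $k_1+\dots+k_n=3g-3+n$ with $k_i\geq 2$ (respectively $n=k_1=1$ for $g=1$), each term of the operator, rewritten in the jet derivatives $\qp/\qp u^{i,k}$, is a product whose total jet order reaches the threshold of~\eqref{eq:RestrictionOnH}; this is the strong vanishing property recorded when the operators are introduced. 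Hence $O_{\{\qa_1,k_1;\dots,\qa_n,k_n\}}(\mathcal F_g)$ reduces to the action of the operator on the universal KdV part $\sum_{i=1}^N(1/f_i(u))^{g-1}F_g^{KdV}(u^{i,1},\dots,u^{i,3g-2})$ alone.

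First I would compute the action on a single summand. Since the operator is tuned to the top jet order, derivatives landing on the prefactor $(1/f_i(u))^{g-1}$ or on $f_i(u)$---functions of the zeroth jets $u^i$ only---lower the jet degree and drop out, so that effectively the operator differentiates only the highest jet variables of $F_g^{KdV}$. By the $3g-2$ property the surviving contribution is governed by the top-degree part of $F_g^{KdV}$, whose coefficients are, up to explicit combinatorial factors, the pure $\psi$-class intersection numbers $\int_{\mgn}\psi_1^{k_1}\dots\psi_n^{k_n}$ with $k_1+\dots+k_n=3g-3+n=\dim\mgn$. This isolates the universal rational constant $B^g_{k_1,\dots,k_n}$.

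Next I would unwind the chain rule relating the canonical jet variables $u^{i,s}$ to the descendant times $t^{\qa,k}$. Each elementary factor $\qp u^{i,s}/\qp t^{\qa,k}$ is expressed through genus-zero correlators, and iterated application of the genus-zero topological recursion relations reduces the descendant correlators that appear to chains of primary three-point functions $\ddd{\qt_{\qa_i,0}\qt_{\qg_{i-1},0}\qt^{\qg_i,0}}$. This produces exactly the ordered product $\prod_{i=1}^n\ddd{\qt_{\qa_i,0}\qt_{\qg_{i-1},0}\qt^{\qg_i,0}}$ of the right-hand side of~\eqref{eq:UniversalIdentity-intro}, with the open indices $\qg_0,\qg_n$ left to be contracted.

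The assembly of the genus factor $M[g]$ is where I expect the bulk of the work, and the main obstacle, to lie: the sum over the canonical index $i$, weighted by $(1/f_i(u))^{g-1}$, must be recognized as a coordinate-invariant contraction. The key computation is that in canonical coordinates the matrix $M^\qa_\qb=\ddd{\qt_0^\qa\qt_{\ql,0}\qt_{\mu,0}}\ddd{\qt_0^\ql\qt_0^\mu\qt_{\qb,0}}$ becomes $\diag(1/f_1,\dots,1/f_N)$, so that $M[g]=M^{g-1}$ has canonical diagonal entries $(1/f_i)^{g-1}$, matching exactly the prefactor supplied by Theorem~\ref{ae}. Consequently the sum over $i$ reorganizes into the invariant $M[g]^{\qg_0}_{\qg_n}$ closing the chain of three-point functions, and the leftover numerical factor is $B^g_{k_1,\dots,k_n}$. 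The delicate points are the precise tracking of the transition between canonical and flat coordinates inside the chain rule and the verification that no intermediate (non-top) jet contributions re-enter; the genus-one case $n=k_1=1$ would be treated separately, where $M[1]=\qd$ and the first-order operator $O_{\{\qa_1,1\}}$ should reproduce the Eguchi--Xiong relation with $B^1_1=1/24$.
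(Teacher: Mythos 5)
Your proposal follows the paper's own route: substitute the decomposition of Theorem~\ref{ae}, annihilate $H_g$ using the vanishing properties of the operators (Theorem~\ref{thm:properties-of-operators-O} combined with the Leibniz-type property of Lemma~\ref{lem:FactorizationO}), reduce the action on the KdV summands to the single monomial of \eqref{ar} matching the partition $\mu=(k_1,\dots,k_n)$ --- which is what produces $B^g_{k_1,\dots,k_n}=|\Aut(\mu)|\,C_{g;\mu}$ --- and finally reassemble the resulting sum over canonical indices into the right-hand side using the canonical-coordinate form of the genus-zero three-point functions (Lemma~\ref{bj}) and the orthogonality $\Psi_{i\qa}\Psi_j^\qa=\qd_{ij}$.

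There is, however, a concrete error in what you single out as the key computation. In the canonical frame the matrix $M^\qa_\qb = \ddd{\qt_{0}^\qa\qt_{\ql,0}\qt_{\mu,0}}\ddd{\qt_{0}^\ql\qt_{0}^\mu\qt_{\qb,0}}$ is \emph{not} $\diag(1/f_1,\dots,1/f_N)$: genus-zero three-point functions depend on the first jet variables, and Lemma~\ref{bj} together with $\Psi_{i\qa}\Psi_j^\qa=\qd_{ij}$ gives $M^\qa_\qb = \sum_i \frac{\Psi_i^\qa\Psi_{i\qb}}{\Psi_{i1}^2}(u^{i,1})^2$, i.e.\ diagonal entries $(u^{i,1})^2/f_i$, so that $M[g]$ has entries $\bigl((u^{i,1})^2/f_i\bigr)^{g-1}$ in that frame. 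In particular $M[g]$ does not ``match exactly the prefactor $(1/f_i)^{g-1}$'' of Theorem~\ref{ae}; if it did, the two sides of \eqref{eq:UniversalIdentity-intro} would disagree by a factor of $(u^{i,1})^{2g-2}$. The correct bookkeeping is that the operator action itself supplies these powers: $O_{\{\qa,k\}}(u^{i,k}) = \frac{\Psi_{i\qa}}{\Psi_{i1}}(u^{i,1})^{k+1}$ (Corollary~\ref{cor:Thm-Ooper-m1} plus Lemma~\ref{bj}) and the denominator $(u^{i,1})^{g+n-1}$ in \eqref{ar} combine to give $|\Aut(\mu)|\,C_{g;\mu}\sum_i \frac{\Psi_{i\qa_1}\cdots\Psi_{i\qa_n}}{\Psi_{i1}^{2g-2+n}}(u^{i,1})^{2g-2+n}$ on the left, of which $(u^{i,1})^n/\Psi_{i1}^{n}$ is absorbed by the chain of three-point functions and $(u^{i,1})^{2g-2}/\Psi_{i1}^{2g-2}$ precisely by $M[g]$. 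This slip is repairable --- once the powers of $u^{i,1}$ are tracked through your own steps, the computation self-corrects and lands on the paper's argument --- but as stated your identification of $M[g]$ with the prefactor of Theorem~\ref{ae} is false, and a proof written from your sketch verbatim would not close.
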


Let us give some examples of these relations. The simplest ones are given by the operators $O_{\{\qa,3g-2\}}$, which coincide with the Eguchi-Xiong differential operators introduced in \cite{eguchi1998quantum} (see also \cite{buryak2012polynomial,liu2002quantum}), and the corresponding relations read
\[
	O_{\{\qa,3g-2\}}(\mathcal F_g) = \ddd{\qt_{\qa,0}\qt_{\qb,0}\qt^{\ql}_0}M[g]^\qb_\ql\int_{\oM_{g,1}}\psi_1^{3g-2},\quad g\geq 1.
	\]
For $g = 1$ the operator reads
\[
	O_{\{\qa,1\}} = \diff{}{t^{\qa,1}}-\ddd{\qt_{\qa,0}\qt^\qb_0}\diff{}{t^{\qb,0}},
	\] 
and we obtain the well-known relation~\cite{dijkgraaf1990mean}
\[
\ddg{\qt_{\qa,1}}_1-\ddd{\qt_{\qa,0}\qt^\qb_0}\ddg{\qt_{\qb,0}}_1 = \frac{1}{24}\ddd{\qt_{\qa,0}\qt^\qb_0\qt_{\qb,0}}.
	\]
For $g = 2$, we can derive three relations from Theorem~\ref{af}, and they read
\begin{align*}
O_{\{\qa,4\}}(\mathcal F_2) &= \frac{1}{1152}\ddd{\qt_{\qa,0}\qt_{\ql,0}\qt^\mu_0}M^\ql_\mu,\\
O_{\{\qa,3;\qb,2\}}(\mathcal F_2)&=-\frac{7}{1920}\ddd{\qt_{\qa,0}\qt^\qg_0\qt_{\ql,0}}\ddd{\qt_{\qb,0}\qt_{\qg,0}\qt^\mu_0}M^\ql_\mu,\\
O_{\{\qa,2;\qb,2;\qg,2\}}(\mathcal F_2)&=\frac{1}{60}\ddd{\qt_{\qa,0}\qt^\qe_0\qt_{\ql,0}}\ddd{\qt_{\qb,0}\qt_{\qe,0}\qt^\qs_0}\ddd{\qt_{\qg,0}\qt_{\qs,0}\qt^\mu_0}M^\ql_\mu.\\
\end{align*}

It is important to stress that the main statement is not just the shape of the identities in Theorem~\ref{af} or the particular formula for the right hand side in Equation~\eqref{eq:UniversalIdentity-intro}. In fact, one can easily produce alternative versions of such identities using the lifts of the vanishing tautological classes of high enough degree --- the idea that we mentioned above. Moreover, it is exactly what we have there for $n=1$: we obtain the Eguchi-Xiong vector fiels on the left hand side from the tautological relation that identifies the top nontrivial degree of the so-called $\psi$-class with the class of the point. However, starting from $n=2$ we deviate from this naive approach. And the really interesting part of the statement that we propose is the strong vanishing properties of the operators on the left hand side. In particular, they pick up information just from the leading terms with respect to the $\partial_x$-degree that is concentrated in $F_g^{KdV}$ part of Equation~\eqref{ah}. 

\subsection*{Organization of the paper}
This paper is organized as follows. In Sect.\,\ref{ac} we prove Theorem \ref{ae} by combining the Givental's quantization formalism and Dubrovin-Zhang's loop equation approach. In Sect.\,\ref{ad} we give a detailed description of the operators $O_{\{\alpha_1,k_1;\dots,\alpha_n,k_n\}}$ and study their properties, then we prove Theorem \ref{af}. {We also discuss possible approaches for deriving more  general universal identities.} In Sect.\,\ref{ag}, we give some concluding remarks.

	\section{Structure of free energy functions} 
	\label{ac}
	In this section, we prove Theorem \ref{ae}. The idea is to prove that the decomposition \eqref{ah} is invariant under the Givental's twisted loop group action. 
	{Then Theorem \ref{ae} follows from the fact that the tau-function of any calibrated semisimple Frobenius manifold can be computed via Givental's group actions from the Witten-Kontsevich tau-function, for which Theorem \ref{ae} holds true trivially.} A similar idea has been  used to prove the invariance of tautological equations \cite{faber2010tautological,lee2009invariance} and used to prove the polynomiality property of DZ hierarchies \cite{buryak2012deformations,buryak2012polynomial}. 
	\subsection{Givental theory}
In \cite{givental2001gromov,givental2001semisimple,givental2004symplectic}, Givental introduced a twisted loop group action on the space of tame partition functions. Teleman proved~\cite{teleman2012structure} that partition functions of all semisimple cohomological field theories wih the same underlying Frobenius algebra structure lie in the same orbit of the group action. In this subsection, let us recall the basic formalism, one may refer to, e.g., \cite{buryak2012polynomial,faber2010tautological,lee2009invariance} for expositions.

Let $H$ be an $N$-dimensional vector space equipped with a non-degenerate bilinear pairing $\langle-,-\rangle$. Consider the space $\mathcal H = H\otimes\mathbb C((z^{-1}))$ together with the bilinear map
\[
	\Qo(f,g) = \frac{1}{2\pi i}\int \langle f(-z),g(z)\rangle dz,\quad f,g\in\mathcal H,
	\]
one can show that this is a symplectic form and $\mathcal H$ is called the Givental symplectic space. Let $M$ be a symplectomorphism of $\mathcal H$ of the form
\[
	M = \sum_k M_kz^k,\quad M_k\in\mathrm{End}(H).
	\]
The action of $M$ on tame partition functions is denoted by $\hat M$ and is given by the exponential of the action of the corresponding Lie algebra element, that is, if we write $M = \exp(m)$, where $m$ is an infinitesimal symplectic transformation, then $\hat M:=\exp(\hat m)$. The action $\hat m$ is then given by the standard Weyl quantization of the quadratic Hamiltonian
\[
	h_m(f) = \frac12 \Qo\left(m(f),f\right).
	\]

For our purpose, we will only consider the action of upper triangular elements and lower triangular elements.
An infinitesimal symplectic transformation $m$ is of the form 
\[
	m = \sum_k m_kz^k,\quad m_k\in\mathrm{End}(H)
	\]
and satisfies the condition
\[
	\Qo(m(f),g)+\Qo(f,m(g)) = 0,\quad f,g\in\mathcal H.
	\]
Such a transformation $m$ is called upper triangular if $m_k = 0$ for $k\leq 0$ and called lower triangular if $m_k = 0$ for $k\geq 0$. To write down the explicit expressions of the upper and lower triangular action, let us fix an orthonormal basis $e_1,\dots,e_N$ of $H$ and denote $\un = e_1+\dots+e_N$. This basis determines coordinates $t^{i,k}$ of the large phase space, where $i = 1,\dots, N$ and $k\geq 0$. A tame partition function is the exponential of a formal power series of the form
\[
	Z = \exp\left(\sum_{g\geq 0}\qe^{2g-2}\mathcal F_g\right),\quad \mathcal F_g\in\mathbb C[[t^{i,k}]] 
	\]
that satisfies certain properties. Then an upper triangular transformation
\[
	\mathfrak r = \sum_{k\geq 1} \mathfrak r_kz^k,\quad \mathfrak r_k\in\mathrm{End}(H)
	\]
acts on a tame partition function $Z$ by
\begin{equation}
	\label{ai}
	\hat{\mathfrak r}[t].Z = \hat{\mathfrak r}Z,
\end{equation}
where $\hat{\mathfrak r}$ is the following second order differential operator on the large phase space:
\begin{align*}
	\hat{\mathfrak r} =& -\sum_{k\geq 1}\sum_{i=1}^N (\mathfrak r_k)^i_{\un}\diff{}{t^{i,k+1}}+\sum_{\ell\geq 0,k\geq 1}\sum_{i,j=1}^N (\mathfrak r_k)^i_{j}t^{j,\ell}\diff{}{t^{i,\ell+k}}\\
	&+\frac{\qe^2}{2}\sum_{k,\ell\geq 0}\sum_{i,j=1}^N (-1)^{k+1} (\mathfrak r_{k+\ell+1})^{i,j}\frac{\qp^2}{\qp t^{i,k}\qp t^{j,\ell}}.
\end{align*}
Note that the action \eqref{ai} induces the action
\begin{equation}
	\label{an}
	\hat{\mathfrak r}[t].\mathcal F = Z^{-1}\hat{\mathfrak r}Z
\end{equation}
on the free energy function $\mathcal F = \log Z$. Similarly, a lower triangular transformation
\[
	\mathfrak s = \sum_{k\geq 1} \mathfrak s_kz^{-k},\quad \mathfrak s_k\in\mathrm{End}(H)
	\]
acts by the following first order differential operator:
\begin{align*}
	\hat{\mathfrak s} =& -\frac{1}{2\qe^2}(\mathfrak s_3)_{\un,\un}+\frac{1}{\qe^2}\sum_{k\geq 0}\sum_{i=1}^N (\mathfrak s_{k+2})_{\un,i}t^{i,k}+\frac{1}{2\qe^2}\sum_{k,\ell\geq 0}\sum_{i,j=1}^N (-1)^i(\mathfrak s_{k+\ell+1})_{i,j}t^{i,k}t^{j,\ell}\\
	&-\sum_{i=1}^N(\mathfrak s_1)^i_{\un}\diff{}{t^{i,0}}+\sum_{k\geq 0,\ell\geq 1}\sum_{i,j=1}^N(\mathfrak s_\ell)^i_j t^{j,k+\ell}\diff{}{t^{i,k}}.
\end{align*}

The twisted loop group action described above in particular allows one to reconstruct the partition function of any \emph{homogeneous} semisimple cohomological field theory from its genus zero data~\cite{teleman2012structure}. Let $(V,\eta,\{c_{g,n}\})$ be a semisimple cohomological field theory of rank $N$, that is, $V$ is an $N$-dimensional vector space with a non-degenerate bilinear form $\eta$, $c_{g,n}$ are families of multilinear maps
\[
	c_{g,n}\colon  V^{\otimes n}\to H^*(\mgn,\mathbb C),\quad 2g-2+n\geq 0
	\]
that satisfy certain properties. The partition function associated to this cohomological field theory is defined to be 
\begin{align*}
	Z &= \exp\left(\sum_{g\geq 0}\qe^{2g-2}\mathcal F_g\right),\\ \mathcal F_g &= \sum_{\substack{n\geq 0\\ 2g-2+n>0}}\sum_{k_1,\dots,k_n\geq 0}\frac{t^{\qa_1,k_1}\dots t^{\qa_n,k_n}}{n!}\int_{\mgn}c_{g,n}(e_{\qa_1}\otimes\dots\otimes e_{\qa_n})\psi_1^{k1}\dots\psi_n^{k_n},
\end{align*}
	here $e_1,\dots, e_N$ is a fixed basis of $V$ with $e_1$ being the unit of the theory, and $\psi_i$ is the first Chern class of the $i$-th tautological line bundle of $\mgn$.
Then, in the homogeneous case, Givental \cite{givental2001gromov,givental2001semisimple} proposed the following formula: 
\begin{equation}
	\label{aj}
	Z = C_{pt}\hat S^{-1}_{pt} \hat\Psi_{pt}\hat R_{pt}  \prod_{i=1}^N Z^{KdV}\left(\frac{\qe^2}{f_i};\frac{T^{i,0}}{\sqrt{f_i}},\frac{T^{i,1}}{\sqrt{f_i}},\dots\right),
\end{equation}
which was identified with the tau-function of the Dubrovin-Zhang hierarchy of the underlying Frobenius manifold with an appropriate choice of calibration by Dubrovin and Zhang~\cite[Theorem 4.3.14]{dubrovin2005normal} and proved to hold for any homogenenous semisimple cohomological field theory by Teleman~\cite{teleman2012structure}.
 
Let us explain the notation in the above formula, and one may refer to \cite{givental2001gromov} for details. The function $Z^{KdV}$ is the Witten-Kontsevich tau-function of the KdV hierarchy, namely it is given by
\begin{align*}
	Z^{KdV}\left(\qe^2; T^{i,0},T^{i,1},\dots\right) &= \exp\left(\sum_{g\geq 0}\qe^{2g-2}\mathcal F_g^{KdV}(T^{i,0},T^{i,1},\dots)\right)\\
	\mathcal F_g^{KdV}\left(T^{i,0},T^{i,1},\dots\right)& = \sum_{n\geq 0}\sum_{k_1,\dots,k_n\geq 0}\frac{T^{i,k_1}\dots T^{i,k_n}}{n!} \int_{\mgn}\psi_1^{k_1}\dots\psi_n^{k_n}.
\end{align*}
The upper triangular symplectic transformation $R_{pt}$ and the lower triangular one $S_{pt}$ are both determined from the  underlying semisimple Frobenius manifold $M$ corresponding to the given homogenenous cohomological field theory. Note that these two transformations vary on $M$, and we fix them by taking their values at an arbitrary (semisimple) point $pt\in M$. In a neighborhood of $pt$, we denote by $(v^1,\dots,v^N)$ the flat coordinates of $M$ corresponding to the basis $(e_1,\dots,e_N)$ and by $(u^1,\dots,u^N)$ the canonical coordinates. It is well-known that in terms of the canonical coordinates the metric $\eta$ is diagonal whose diagonal elements we denote by $f_i$. We then define the matrix $\Psi$ to be 
\[
	\Psi_{i\qa} = \sqrt{f_i}\ \diff{u^i}{v^\qa},
	\]
and the matrix $\Psi_{pt}$ is obtained by evaluating functions $\Psi_{i\qa}$ at the point $pt$. The transformation $\hat\Psi$ is then a coordinate transformation from the normalized canonical time variables $T^{i,k}$ to the flat time variables $t^{\qa,p}$. Finally, $C$ is just a function on $M$, and we denote by $C_{pt}$ its value at the point $pt$.

\subsection{Loop equation of the free energy function}
\label{al}
To prove Theorem~\ref{ae}, we recall in this subsection the Dubrovin-Zhang's loop equation method \cite{dubrovin2001normal} for computing the free energy function.

{Let $M$ be a semisimple Frobenius manifold of dimension $N$ with a fixed calibration, denote its flat coordinates by $v^1,\dots,v^N$ and the first flat metric by $\eta$.} 
As we have introduced in Sect.\,\ref{ak}, the higher genus free energy functions can be written as functions in the jet coordinates $v^{\qa,s}$ of $M$ \cite{buryak2012polynomial,dubrovin2001normal}, where $v^\qa$ are flat coordinates of $M$, or equivalently as functions in $u^{i,s}$ where $u^i$ are canonical coordinates of $M$:
\[
	\mathcal F_g(t^{\qa,p}) = F_g(u^{i},u^{i,1},\dots,u^{i,3g-2}),\quad g\geq 1
	\]
where we view $u^{i}$ as functions of $v^\qa$ and 
\[
	v^\qa = \eta^{\qa\qb}\frac{\qp^2\mathcal F_0}{\qp t^{\qb,0}\qp t^{1,0}},\quad v^{\qa,s} = \left(\diff{}{t^{1,0}}\right)^s v^\qa,\quad u^{i,s} = \left(\diff{}{t^{1,0}}\right)^s u^i,\quad s\geq 1.
	\]
We will also use the notation $v^{\qa,0} = v^\qa$ and $u^{i,0} = u^i$. In \cite{dubrovin2001normal}, Dubrovin and Zhang give a way to uniquely reconstruct $F_g$ from $M$ by requiring the so-called linearized Virasoro constraints, and they derive the following \emph{loop equation} satisfied by $F_g$ for $g\geq 1$:
\begin{align}
	\label{ao}
	&\sum_{r\geq 0}\diff{F_g}{v^{\qg,r}}\qp_x^r\kk{\frac{1}{E-\ql}}^\qg+\sum_{r\geq 1}\diff{F_g}{v^{\qg,r}}\sum_{k=1}^r\binom{r}{k}\qp_x^{k-1}\qp_1p_\qa G^{\qa\qb}\qp_x^{r-k+1}\qp^\qg p_\qb\\
	\notag
	=&\,\frac{1}{2}\sum_{k,\ell\geq 0}\kk{\sum_{m = 1}^{g-1}\diff{F_{m}}{v^{\qg,k}}\diff{F_{g-m}}{v^{\rho,\ell}}+\frac{\qp^2 F_{g-1}}{\qp v^{\qg,k}\qp v^{\rho,\ell}}}\qp_x^{k+1}(\qp^\qg p_\qa) G^{\qa\qb}\qp_x^{\ell+1}(\qp^\rho p_\qb)\\
	\notag
	&+\frac{1}{2}\sum_{k\geq 0}\diff{F_{g-1}}{v^{\qg,k}}\qp_x^{k+1}\left[\nabla\diff{p_\qa}{\ql}\cdot \nabla\diff{p_\qb}{\ql}\cdot v_x\right]^\qg G^{\qa\qb}+h(v,\ql)\qd_{g,1};
	\end{align}
here and henceforth we use the notation
\[
\qp_x = \diff{}{t^{1,0}},\quad \qp_\qa = \diff{}{v^\qa},\quad \qp^\qa = \eta^{\qa\qb}\diff{}{v^\qb},
	\]
and on the right-hand side we set $F_0:=0$.

Let us explain how $F_g$ is obtained from the above equation. Note first that $\ql$ appeared in the equation \eqref{ao} is a formal parameter, and we solve the equation with respect to  $F_g$ such that \eqref{ao} holds true for any $\ql$. To make it precise, it is proved that when written in the canonical coordinates, the left-hand side is of the form
\[
	\sum_{r\geq 0}\sum_{i=1}^N\diff{F_g}{u^{i,r}}K^{i,r},\quad  K^{i,r}\in \mathcal A\left[\frac{1}{\ql-u^1},\dots,\frac{1}{\ql-u^N}\right],
	\]
where $\mathcal A$ is the ring of differential polynomial given by 
\[
	\mathcal A = C^{\infty}(u)[u^{i,s}\colon s\geq 1,i = 1,\dots, N].
	\]
It is also proved that (see Lemma 3.10.19 of \cite{dubrovin2001normal}), when viewed as a polynomial in $\frac{1}{\ql-u^1},\dots,\frac{1}{\ql-u^N}$, each $K^{i,r}$ is of degree $r+1$ of the form
\begin{equation}
	\label{ap}
	K^{i,r} = \frac{g^{i,r}}{(\ql-u^i)^{r+1}}+\text{lower order terms},\quad g^{i,r}\in\mathcal A,\quad  g^{i,r}\neq 0.
\end{equation}
On the right-hand side, the function $h(v,\ql)$ can be expressed in terms of the canonical coordinates by
\[
	h(v,\ql) = -\frac 18\sum_{i=1}^N \frac{1}{(\ql-u^i)^2}+\sum_{\substack{i,j = 1,\dots N\\i<j}}h_{ij}(u)\left(\frac{1}{\ql-u^i}-\frac{1}{\ql-u^j}\right).
	\]
Therefore, for $g = 1$, we have the equation
\[
	\sum_{r\geq 0}\sum_{i=1}^N\diff{F_1}{u^{i,r}}K^{i,r} = -\frac 18\sum_{i=1}^N \frac{1}{(\ql-u^i)^2}+\sum_{\substack{i,j = 1,\dots N\\i<j}}h_{ij}(u)\left(\frac{1}{\ql-u^i}-\frac{1}{\ql-u^j}\right),
	\]
from which one observes that
\[
	\diff{F_1}{u^{i,r}} = 0,\quad r\geq 2,\quad i = 1,\dots, N,
	\]
and obtains the well-known $g=1$ free energy function (\cite{dijkgraaf1990mean,getzler1997intersection,dubrovin1998bihamiltonian}, see also Sect.\,3.10.7 of \cite{dubrovin2001normal} for a detailed discussion)
\[
	F_1 = \frac{1}{24}\log\det \left(c_{\qa\qb\qg}v^{\qg,1}\right)+G(v),
	\]
where $G(v)$ is Getzler's $G$-function. For $g\geq 2$, we see that the right-hand side of the loop equation \eqref{ao} only depends on $F_1,\dots,F_{g-1}$, so we can find recursively all $F_g$ starting from $F_1$. Moreover, it is proved that the right-hand side is a function in the ring
\[
	\mathcal A\left[\frac{1}{u^1\dots u^N}\right]\left[\frac{1}{\ql-u^1},\dots,\frac{1}{\ql-u^N}\right],
	\]
therefore one finds the gradient $\diff{F_g}{u^{i,r}}$ by comparing the coefficients of monomials in $\frac{1}{\ql-u^1},\dots,\frac{1}{\ql-u^N}$.

Let us proceed to explain the notations in the loop equation \eqref{ao}. Recall that the derivative $\qp_x$ is just $\diff{}{t^{1,0}}$. $E$ is the Euler vector field of $M$ and { 
	\[
\kk{\frac{1}{E-\ql}}^\qg = \sum_{m\geq -1}\frac{1}{\ql^{m+2}}\kk{E^{m+1}}^\qg,\quad E^{m+1} = E^m\cdot E,\quad E^{0}:=e,
\]
where $\cdot$ is the quantum product on $TM$ and $e$ is the unit vector field with respect to the quantum product.} The functions $p_\qa(v;\ql)$ are so-called periods of $M$, which are solutions of the Gauss-Manin system associated to $M$ (\cite{dubrovin1996geometry}, see also Sect.\,3.6.3 of \cite{dubrovin2001normal}), and $G^{\qa\qb}$ are some constants where the matrix $(G^{\qa\qb})$ is the Gram matrix of the flat pencil of $M$ with respects to the periods. In the expression
\[
	\left[\nabla\diff{p_\qa}{\ql}\cdot \nabla\diff{p_\qb}{\ql}\cdot v_x\right]^\qg,
	\]
$\nabla$ is the Levi-Civita connection of the flat metric $\eta$ and $v_x$ is the vector with components $v^{\qa,1}$. Therefore, we see that 
\begin{equation}
	\label{be}
	\left[\nabla\diff{p_\qa}{\ql}\cdot \nabla\diff{p_\qb}{\ql}\cdot v_x\right]^\qg = \left(\frac{\qp^2 p_\qa}{\qp\ql\qp v^{\qz}}\right)\left(\frac{\qp^2 p_\qb}{\qp\ql\qp v^{\mu}}\right)c^{\qd}_{\qz\mu}c^{\qg}_{\qd\qe}v^{\qe,1},
\end{equation}
where $c^{\qa}_{\qb\qg}$ are the structure constants of the quantum product given by
\[
	c^{\qa}_{\qb\qg} = \eta^{\qa\mu}\left.\frac{\qp^3\mathcal F_0}{\qp t^{\mu,0}t^{\qb,0}t^{\qg,0}}\right|_{t^{\qa,0} = v^\qa, t^{\qa,1} = t^{\qa,2} =\dots=0}.
	\]

As an example, let us consider the loop equation of the Gromov-Witten theory of the point. The flat coordinate of $M$ is $v^1$, and it is also the canonical coordinate, $u^1 = v^1$. The loop equation for $F_g^{KdV}$ is given by
\begin{align}
	\label{aq}
	&\sum_{r\geq 0}\diff{F_g^{KdV}}{v^{1,r}}\qp_x^r\kk{\frac{1}{v^1-\ql}}+\sum_{r\geq 1}\diff{F_g^{KdV}}{v^{1,r}}\sum_{k=1}^r\binom{r}{k}\qp_x^{k-1}\kk{\frac{1}{\sqrt{v^1-\ql}}}\qp_x^{r-k+1}\kk{\frac{1}{\sqrt{v^1-\ql}}}\\
	\notag
	=&\,\frac{1}{2}\sum_{k,\ell\geq 0}\kk{\sum_{m = 1}^{g-1}\diff{F_{m}^{KdV}}{v^{1,k}}\diff{F_{g-m}^{KdV}}{v^{1,\ell}}+\frac{\qp^2 F_{g-1}^{KdV}}{\qp v^{1,k}\qp v^{1,\ell}}}\qp_x^{k+1}\kk{\frac{1}{\sqrt{v^1-\ql}}} \qp_x^{\ell+1}\kk{\frac{1}{\sqrt{v^1-\ql}}}\\
	\notag
	+&\frac{1}{8}\sum_{k\geq 0}\diff{F_{g-1}}{v^{1,k}}\qp_x^{k+1}\kk{\frac{v^{1,1}}{(v^1-\ql)^3}}-\frac{\qd_{g,1}}{16(v^1-\ql)^2},
	\end{align}
which we can solve with respect to $F_g^{KdV}$, and obtain
\[
	F_1^{KdV} = \frac{1}{24}\log v^{1,1},\quad F_2^{KdV} = \frac{(v^{1,2})^3}{360(v^{1,1})^4}-\frac{7v^{1,2}v^{1,3}}{1920(v^{1,1})^3}+\frac{v^{1,4}}{1152 (v^{1,1})^2},\quad\dots.
	\]
To prepare for the proof of Theorem \ref{ae}, let us analyze the loop equation \eqref{aq} for $g\geq 2$ and reproduce some well-known results about the structure of the $F_g^{KdV}$. 

We first observe that both sides of the equation are homogeneous with respect to the differential degree $\deg_{\qp_x}$ defined by
\[
	\deg_{\qp_x}v^1 = 0,\quad \deg_{\qp_x}v^{1,k} = k,\quad k\geq 1.
	\]
Hence, by induction on $g$, we see that $\deg_{\qp_x}F_{g}^{KdV} = 2g-2$. Moreover, by a straightforward computation we find that, viewed as a polynomial in $\frac{1}{\ql-v^1}$,
\begin{equation}
	\label{at}
	\qp_x^{n_1}\kk{\frac{1}{\sqrt{v^1-\ql}}} \qp_x^{n_2}\kk{\frac{1}{\sqrt{v^1-\ql}}} = A_{n_1,n_2}\frac{(v^{1,1})^{n_1+n_2}}{(\ql-v^1)^{n_1+n_2+1}}+\text{lower order terms},\quad n_1,n_2\geq 0,
\end{equation}
where the constant $A_{n_1,n_2}$ is given as
\begin{equation}
	\label{au}
A_{n_1,n_2} = -\frac{(2n_1-1)!!(2n_2-1)!!}{2^{n_1+n_2}}.
\end{equation}
Therefore, it is easy to prove by induction on $g$ that
	\[
		F_g^{KdV}\in C^{\infty}(v) \left[\frac{1}{v^{1,1}}\right][v^{1,1},\dots,v^{1,m_g}],\quad g\geq 2,
		\] 
	for some $m_g\geq 1$. To find $m_g$, we notice that the left-hand side of \eqref{aq} is of the form
	\begin{equation}\label{eq:LHS-loop}
		\frac{1}{v^1-\ql}\diff{F_g^{KdV}}{v^1}+\frac{1}{(v^1-\ql)^2}\sum_{k\geq 1}\frac{k+2}{2}v^{1,k}\diff{F_g^{KdV}}{v^{1,k}}+\sum_{r\geq 3}\frac{A_r}{(v^1-\ql)^r},
		\end{equation}
	where $A_r$ are some expressions computed from gradients of $F_g^{KdV}$, and the right-hand side is of the form
	\[
		\sum_{r\geq 3}\frac{B_r}{(v^1-\ql)^r},
		\]
	for some expressions $B_r$ computed from $F_1^{KdV},\dots, F^{KdV}_{g-1}$. In particular, coefficients $A_r$ and $B_r$ are independent of $\ql$. Note that the variable $\ql$ in the loop equation is considered indeterminate, therefore, the coefficients of $(v^1-\ql)^{-1}$ and $(v^1-\ql)^{-2}$ in~\eqref{eq:LHS-loop} must vanish, and this implies that
	\begin{align}
		&\diff{F_g^{KdV}}{v^1} = 0,\quad g\geq 1,\\
		\label{as}
		&\sum_{k\geq 1}\frac{k+2}{2}v^{1,k}\diff{F_g^{KdV}}{v^{1,k}} = 0,\quad g\geq 2.
	\end{align}
Note that Eq.~\eqref{as} is equivalent to the homogeneity condition of the trivial cohomological field theory. By combining the two identities above, we conclude that $F_g^{KdV}$ is of the form
\begin{equation}
	\label{ar}
	F_g^{KdV} = \sum_{n\geq 0}\sum_{\mu\in P(g,n)} C_{g;\mu}\frac{v^{1,(\mu)}}{(v^{1,1})^{g+n-1}},\quad 
\end{equation}
where $P(g,n)$ is the set of partition $\mu = (\mu_1,\dots,\mu_\ell)$ of $3g-3+n$ with the constraints
\[
\mu_i\geq 2,\quad \ell(\mu):=\ell = n\geq 1. 
	\]
For $\mu = (\mu_1,\dots,\mu_n)\in P(g,n)$, we denote by
\[
	v^{1,(\mu)} = v^{1,\mu_1}\dots v^{1,\mu_n},
	\]
and $C_{g;\mu}$ are some rational numbers. These numbers can be computed either from solving the loop equation or using the intersection numbers on $\mgn$. For example, it is easy to see that 
\[
	C_{g;(3g-2)} = \int_{\oM_{g,1}}\psi_1^{3g-2}.
	\]
The special form \eqref{ar} of $F_g^{KdV}$ for $g\geq 2$ and the explicit expression for $F_1^{KdV}$ imply that 
\begin{equation}
	\label{ay}
		\frac{\qp^n F_g^{KdV}}{\qp v^{1,k_1}\dots\qp v^{1,k_n}}=0 \quad \text{for}\quad k_1+\dots+k_n\neq 3g-3+n,\quad g\geq 1,\ n\geq 1,
\end{equation}
which is a particular form of the general $(3g-2)$-property \cite{dubrovin2001normal,eguchi1998quantum,getzler2004jet}. 

Finally, let us derive a relation that will be used later. By using the above $(3g-2)$ property and the identity \eqref{at}, it is easy to see that the left-hand side of the loop equation, viewed as a polynomial in $\frac{1}{\ql-v^1}$ has the leading term 
\[
	 \frac{(v^{1,1})^{3g-2}}{(\ql-v^1)^{3g-1}}\left(-(3g-2)!+\sum_{k=1}^{3g-2}\binom{3g-2}{k}A_{k-1,3g-1-k}\right)\diff{F_g^{KdV}}{v^{1,3g-2}},
	\]
where the constants $A_{k-1,r-k+1}$ are defined in \eqref{au}. Similarly, we have \[
	\qp_x^{n}\kk{\frac{v^{1,1}}{(v^1-\ql)^3}} = -\frac{(n+2)!(v^{1,1})^{n+1}}{2(\ql-v^1)^{n+3}}+\text{lower order terms},\quad n\geq 0.
	\]
Therefore, when $g\geq 2$, the leading term of the right-hang side is 
\begin{align*}
	&\frac{1}{2}\frac{(v^{1,1})^{3g-2}}{(\ql-v^1)^{3g-1}}\sum_{m=1}^{g-1}A_{3m-1,3g-3m-1}\diff{F_{m}^{KdV}}{v^{1,3m-2}}\diff{F_{g-m}^{KdV}}{v^{1,3g-3m-2}}\\
	+&\frac{1}{2}\frac{(v^{1,1})^{3g-2}}{(\ql-v^1)^{3g-1}}\sum_{\substack{k,\ell\geq 1\\ k+\ell = 3g-4}}A_{k+1,\ell+1}\frac{\qp^2 F_{g-1}^{KdV}}{\qp v^{1,k}\qp v^{1,\ell}}-\frac{1}{16} \frac{(3g-2)! (v^{1,1})^{3g-3}}{(\ql-v^1)^{3g-1}}\diff{F_{g-1}^{KdV}}{v^{1,3g-5}}.
\end{align*}
By comparing the leading terms of both sides of the loop equation, we arrive at the relation
\begin{align}
	\label{av}
	&\left(-(3g-2)!+\sum_{k=1}^{3g-2}\binom{3g-2}{k}A_{k-1,3g-1-k}\right)\diff{F_g^{KdV}}{v^{1,3g-2}}\\
	\notag=&\frac{1}{2}\sum_{m=1}^{g-1}A_{3m-1,3g-3m-1}\diff{F_{m}^{KdV}}{v^{1,3m-2}}\diff{F_{g-m}^{KdV}}{v^{1,3g-3m-2}}\\
	\notag
	+&\frac{1}{2}\sum_{\substack{k,\ell\geq 1\\ k+\ell = 3g-4}}A_{k+1,\ell+1}\frac{\qp^2 F_{g-1}^{KdV}}{\qp v^{1,k}\qp v^{1,\ell}}-\frac{1}{16} \frac{(3g-2)! }{v^{1,1}}\diff{F_{g-1}^{KdV}}{v^{1,3g-5}},\quad g\geq 2.
\end{align}
This relation will play an important role in the proof of Theorem \ref{ae}.

\subsection{Decomposition of free energy functions}
We continue to use the same notations as in previous subsections. Let us first show that, if the genus $g$ free energy function $\mathcal F_g(t^{i,k})$ of $M$ admits a decomposition in terms of the canonical coordinates
\begin{equation}
	\label{am}
	\mathcal F_g(t^{i,k}) = F_g(u^{i,k}) = \sum_{i}\varphi_{g;i}(u) F_g^{KdV}(u^{i,1},\dots,u^{i,3g-2})+H_g,\quad g\geq 1,
\end{equation}
with the function $H_g$ satisfying the condition
\begin{equation}	\label{eq:Property-H-g}
\frac{\qp^n H_g}{\qp u^{i_1,k_1}\dots\qp u^{i_n,k_n}} = 0,\quad k_1+\dots+k_n\geq 3g-3+n,\quad n\geq 1,
\end{equation}
then such decomposition is preserved under both upper triangular and lower triangular infinitesimal symplectic transformation. 

First, we prove the invariance of decomposition \eqref{am} under the upper triangular transformations given by \eqref{an} for an arbitrary upper triangular element $\mathfrak r$. 
\begin{proposition}
	\label{prop:r-action}
	The upper triangular transformations preserve the decomposition \eqref{am} for $g\geq 1$.
\end{proposition}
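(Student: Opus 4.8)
The plan is to reduce to the Lie-algebra action and then integrate the flow $\exp(s\hat{\mathfrak r})$. Observe first that, for fixed $g$, the functions admitting a decomposition \eqref{am}--\eqref{eq:Property-H-g} form a \emph{linear} subspace $\mathcal V_g$: by \eqref{ay} the derivatives of $F_g^{KdV}$ of order $>3g-3+n$ already vanish, so the decomposition amounts to requiring that all derivatives of $F_g$ with $k_1+\dots+k_n>3g-3+n$ vanish, while those with $k_1+\dots+k_n=3g-3+n$ lie in the $C^\infty(u)$-span of the corresponding derivatives of the single-index functions $F_g^{KdV}(u^{i,\bullet})$. Since $\mathcal V_g$ is a closed linear subspace, it suffices to show that, assuming $F_1,\dots,F_{g-1}$ already satisfy the decomposition, the genus-$g$ component of $\hat{\mathfrak r}[t].\mathcal F$ depends linearly on $F_g$ and maps $\mathcal V_g$ into itself; the flow then preserves the decomposition by a linear-ODE argument, inductively in $g$.

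\textbf{Expanding the action.} First I would extract the genus-$g$ component of $\hat{\mathfrak r}[t].\mathcal F=Z^{-1}\hat{\mathfrak r}Z$ from \eqref{an}. The two first-order summands of $\hat{\mathfrak r}$ act as a derivation on $\mathcal F_g$, while the $\qe^2$-summand produces the connected term $\tfrac12\sum(-1)^{k+1}(\mathfrak r_{k+\ell+1})^{i,j}\frac{\qp^2\mathcal F_{g-1}}{\qp t^{i,k}\qp t^{j,\ell}}$ together with the disconnected products $\tfrac12\sum(-1)^{k+1}(\mathfrak r_{k+\ell+1})^{i,j}\sum_m\frac{\qp\mathcal F_m}{\qp t^{i,k}}\frac{\qp\mathcal F_{g-m}}{\qp t^{j,\ell}}$, where the boundary terms $m\in\{0,g\}$ recombine with the derivation just as in the reorganization leading to the loop equation \eqref{ao}. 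This is structurally the same shape as the right-hand side of \eqref{ao}, which is why I expect the degree analysis of Subsection~\ref{al} to transfer.

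\textbf{Jet variables and the degree filtration.} Next I would pass to canonical jet variables through the chain rule $\frac{\qp\mathcal F_g}{\qp t^{i,k}}=\sum_{j,r}\frac{\qp u^{j,r}}{\qp t^{i,k}}\frac{\qp F_g}{\qp u^{j,r}}$, the transition coefficients being genus-zero quantities of definite $\deg_{\qp_x}$-weight. The decisive observation concerns the top of the filtration $\deg_{\qp_x}u^{i,k}=k$: the contributions saturating $k_1+\dots+k_n=3g-3+n$ are exactly those in which both $t$-derivatives act along the \emph{same} canonical direction $i$, and after converting the products $\qp_x^{n_1}(\cdots)\qp_x^{n_2}(\cdots)$ they carry the constants $A_{n_1,n_2}$ of \eqref{au}, precisely as in \eqref{at}. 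Invoking the KdV relation \eqref{av} together with \eqref{ay}, these diagonal extremal terms reassemble into the single-index derivative structure of $F_g^{KdV}$, so their top-degree part lies in the required span. Every off-diagonal ($i\neq j$) and every subleading contribution carries strictly smaller $\deg_{\qp_x}$ and therefore feeds into a new remainder $\delta H_g$ still obeying \eqref{eq:Property-H-g}; the first-order derivation likewise only redistributes the coefficients $\varphi_{g;i}(u)$ and adds lower-degree corrections. Hence the genus-$g$ part of $\hat{\mathfrak r}[t].\mathcal F$ lies in $\mathcal V_g$.

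\textbf{Main obstacle.} The crux is exactly this extremal-degree matching. One must verify that no off-diagonal term survives at the top degree $k_1+\dots+k_n=3g-3+n$, and that the diagonal terms emerge with the precise combinatorial weights forced by \eqref{av}, so that they genuinely populate the $F_g^{KdV}$-part rather than spoil \eqref{eq:Property-H-g}. Controlling how $\frac{\qp u^{j,r}}{\qp t^{i,k}}$ distributes differential degree across the change of variables, and handling the connected and disconnected pieces uniformly over all $m\le g$, is the delicate bookkeeping; everything else reduces to the closedness of $\mathcal V_g$ and the induction on $g$.
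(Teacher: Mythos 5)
Your reduction to the infinitesimal action, the linearity of the space of decomposable functions, and the expansion of $Z^{-1}\hat{\mathfrak r}Z$ into derivation, connected, and disconnected pieces all match the paper's setup (the integration back to the group action is done there in Proposition~\ref{prop:decomposition-Givental}). However, the heart of the proof --- the ``extremal-degree matching'' that you yourself flag as the main obstacle --- is exactly what you do not prove, and the route you sketch for it is off target. The paper's proof rests on a single vanishing statement (Lemma~\ref{bb}): for any upper triangular $\mathfrak r$,
\[
\left.\frac{\qp^n}{\qp t^{i_1,k_1}\dots\qp t^{i_n,k_n}}\,\hat{\mathfrak r}[t].\mathcal F_g\right|_{t=0}=0
\qquad\text{whenever}\qquad k_1+\dots+k_n\geq 3g-3+n .
\]
This is established by a pure counting argument from the tameness property \eqref{az} of CohFT free energies: in each of the four types of terms in the expansion of $\hat{\mathfrak r}[t].\mathcal F_g$ (the two first-order/derivation terms, the connected term involving $\mathcal F_{g-1}$, and the disconnected terms $\mathcal F_{g_1}\mathcal F_{g_2}$), the total derivative weight landing on some $\mathcal F_{g'}$ exceeds the bound $3g'-3+(\text{number of insertions})$, so every term vanishes separately. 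Consequently the infinitesimal variation of $F_g$ has \emph{no} extremal-weight derivatives whatsoever; it is itself a function of $H_g$-type satisfying \eqref{eq:Property-H-g}, so it is absorbed into $H_g$ and the $\sum_i\varphi_{g;i}F_g^{KdV}$ part is untouched. No jet-variable chain rule, no control of $\frac{\qp u^{j,r}}{\qp t^{i,k}}$, and no constants $A_{n_1,n_2}$ are needed.

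Because of this, there are no nonzero diagonal extremal terms to ``reassemble into the single-index derivative structure of $F_g^{KdV}$,'' and no combinatorial weights to match against \eqref{av}. Your expectation that \eqref{av} must enter here conflates this proposition with a logically independent step of the paper: the recursion \eqref{av}, together with the leading-term analysis of the loop equation, is used only in Proposition~\ref{prop:functions-varphi} to compute the coefficients $\varphi_{g;i}=f_i(u)^{1-g}$, after the invariance of the decomposition has already been secured. So as it stands your proposal identifies the correct reduction but leaves the only nontrivial step unproven, and the strategy you propose for that step would have you verify a delicate cancellation-and-reassembly pattern that in fact never arises --- the correct (and much simpler) statement is that the entire extremal part of the variation vanishes identically.
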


\begin{proof}
We prove in Lemma~\ref{bb} below that 
\begin{equation}
	\label{ax}
	\left.\frac{\qp^n}{\qp t^{i_1,k_1}\dots\qp t^{i_n,k_n}}\hat{\mathfrak r}[t].\mathcal F_g\right|_{t = 0} = 0,\quad k_1+\dots+k_n\geq 3g-3+n,\quad g,n\geq 1.
\end{equation}
This identity implies that for $k_1+\dots+k_n = 3g-3+n$, the expressions
\[
	\frac{\qp^n F_g}{\qp u^{i_1,k_1}\dots\qp u^{i_n,k_n}}
	\]
remain unchanged after applying the upper triangular transformation. Thus we prove the invariance of the decomposition from the property \eqref{ay} of $F_g^{KdV}$ and decomposition \eqref{am} taken as an assumption at the point where we apply the infinitesimal symplectic transformation.    
\end{proof}

The above proof uses the following lemma:

\begin{lemma}
	\label{bb}
	The identity \eqref{ax} holds true for any upper triangular transformation.
\end{lemma}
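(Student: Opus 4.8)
The plan is to make the infinitesimal action $\hat{\mathfrak r}[t].\mathcal F_g$ completely explicit as a genus-graded expression and then to show that, after differentiating $n$ times and restricting to $t=0$, every term it produces is a (product of) correlator(s) whose total descendant degree overshoots the dimension of the relevant moduli space. Expanding $Z^{-1}\hat{\mathfrak r}Z$ via \eqref{an} and collecting the coefficient of $\qe^{2g-2}$, one gets
\begin{align*}
\hat{\mathfrak r}[t].\mathcal F_g ={}& (D_1+D_2)\mathcal F_g \\
&+ \tfrac12\sum_{k,\ell,i,j}(-1)^{k+1}(\mathfrak r_{k+\ell+1})^{i,j}\Bigl(\frac{\qp^2\mathcal F_{g-1}}{\qp t^{i,k}\qp t^{j,\ell}}+\sum_{g_1+g_2=g}\frac{\qp\mathcal F_{g_1}}{\qp t^{i,k}}\frac{\qp\mathcal F_{g_2}}{\qp t^{j,\ell}}\Bigr),
\end{align*}
where $D_1=-\sum_{k\geq1}\sum_i(\mathfrak r_k)^i_{\un}\,\qp/\qp t^{i,k+1}$ and $D_2=\sum_{\ell\geq0,\,k\geq1}\sum_{i,j}(\mathfrak r_k)^i_j\,t^{j,\ell}\,\qp/\qp t^{i,\ell+k}$ are the first-order parts; the factor $\qe^2$ in the second-order part of $\hat{\mathfrak r}$ is precisely what lowers the genus by one and produces both $\mathcal F_{g-1}$ and the genus-splitting sum. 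The single input I will use is the dimension bound
\[
\left.\frac{\qp^m\mathcal F_{g'}}{\qp t^{\alpha_1,k_1}\cdots\qp t^{\alpha_m,k_m}}\right|_{t=0}=0 \quad\text{whenever}\quad k_1+\cdots+k_m>3g'-3+m,
\]
which holds for the correlators of any semisimple cohomological field theory, since they equal $\int_{\oM_{g',m}}c_{g',m}(\cdots)\psi_1^{k_1}\cdots\psi_m^{k_m}$ and $\deg c_{g',m}\geq0$.

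For the first-order terms I would differentiate and set $t=0$ directly. Applying $\qp^n/(\qp t^{i_1,k_1}\cdots\qp t^{i_n,k_n})$ to $D_1\mathcal F_g$ gives an $(n+1)$-point genus-$g$ correlator with descendants $k_1,\dots,k_n,k+1$ and $k\geq1$, hence of total degree at least $(3g-3+n)+2$, which exceeds the bound $3g-3+(n+1)$. For $D_2\mathcal F_g$ the explicit factor $t^{j,\ell}$ survives the restriction to $t=0$ only if exactly one of the $n$ derivatives hits it, forcing $j=i_a$, $\ell=k_a$ for some $a$ and leaving an $n$-point genus-$g$ correlator of total degree $(k_1+\cdots+k_n)+k\geq(3g-3+n)+1$, again above the bound $3g-3+n$. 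So under the hypothesis $k_1+\cdots+k_n\geq3g-3+n$ both first-order contributions vanish.

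The second-order part needs the same count, once for $\mathcal F_{g-1}$ and once, more delicately, for the convolution. The $\mathcal F_{g-1}$ term produces an $(n+2)$-point genus-$(g-1)$ correlator of total degree $(k_1+\cdots+k_n)+k+\ell\geq3g-3+n$, which beats the bound $3(g-1)-3+(n+2)=3g-4+n$. For the convolution I would expand by Leibniz over the splittings $S\sqcup S^c=\{1,\dots,n\}$: each summand is a product of a genus-$g_1$ correlator on the insertions $S\cup\{k\}$ and a genus-$g_2$ one on $S^c\cup\{\ell\}$. If both were nonzero their degrees would obey $\sum_{a\in S}k_a+k\leq3g_1-2+|S|$ and $\sum_{b\in S^c}k_b+\ell\leq3g_2-2+|S^c|$; adding these and using $g_1+g_2=g$, $|S|+|S^c|=n$ yields $(k_1+\cdots+k_n)+k+\ell\leq3g-4+n$, contradicting the lower bound $3g-3+n$. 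Hence each product vanishes (a genus-zero factor with fewer than three insertions is anyway zero, which only sharpens the inequality), and summing the contributions proves \eqref{ax}.

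The only place demanding care is this convolution term, since---unlike the linear terms---no single correlator can be pushed past its own dimension bound; one must instead use the additivity of the bounds across the genus and insertion splittings. The clean bookkeeping behind all four vanishings is that the $\qe^2$ prefactor drops each surviving piece exactly one unit of (genus contribution plus insertion count) below the genus-$g$, $n$-point threshold $3g-3+n$, so that the hypothesis $k_1+\cdots+k_n\geq3g-3+n$ always forces a strict overshoot. The remaining verifications are routine degree counts.
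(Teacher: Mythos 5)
Your proof is correct and follows essentially the same route as the paper's: you expand the genus-$g$ component of $Z^{-1}\hat{\mathfrak r}Z$ into the two first-order pieces, the $\mathcal F_{g-1}$ piece, and the genus-splitting convolution, and then kill each one at $t=0$ by the same dimension count $k_1+\cdots+k_m>3g'-3+m$, including the additive bookkeeping across the splitting $I\sqcup J$ and $g_1+g_2=g$ for the quadratic term. The only (harmless) difference is cosmetic: you work directly with the hypothesis $k_1+\cdots+k_n\geq 3g-3+n$ rather than fixing equality as the paper does, and you note explicitly that unstable genus-zero factors vanish, which the paper's uniform inequality already absorbs.
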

\begin{proof}
	First let us recall that the free energy function $\mathcal F_g$ of a cohomological field theory satisfies the condition
	\begin{equation}
		\label{az}
		\frac{\qp^n \mathcal F_g}{\qp t^{j_1,\ell_1}\dots\qp t^{j_n,\ell_n}} = 0,\quad \ell_1+\dots+\ell_n>3g-3+n,\quad g\geq 0,\ n\geq 1.
	\end{equation}
	In what follows, we fix indices $i_1,\dots,i_n$ and $k_1,\dots,k_n$ with
	\begin{equation}
		\label{ba}
		k_1+\dots+k_n = 3g-3+n.
	\end{equation}
	By a straightforward computation using \eqref{an} (see also \cite{lee2009invariance}), we find that the left-hand side of \eqref{ax} reads
	\begin{align*}
		\text{L.H.S.}=& \left.-\sum_{k\geq 1}\sum_{i=1}^N (\mathfrak r_k)^i_{\un}\frac{\qp^{n+1}\mathcal F_g }{\qp t^{i_1,k_1}\dots\qp t^{i_n,k_n}\qp t^{i,k+1}}\right|_{t = 0}\\
		&\left.+\sum_{k\geq 1}\sum_{i,j=1}^N\sum_{m=1}^n (\mathfrak r_k)^i_{i_m}\frac{\qp^{n}\mathcal F_g }{\qp t^{i_1,k_1}\dots\qp t^{i_{m-1},k_{m-1}}\qp t^{i_{m+1},k_{m+1}}\dots\qp t^{i_n,k_n}\qp t^{i,k+k_m}}\right|_{t = 0}\\
		&\left.+\frac{1}{2}\sum_{k,\ell\geq 0}\sum_{i,j=1}^N (-1)^{k+1} (\mathfrak r_{k+\ell+1})^{i,j}\frac{\qp^{n+2}\mathcal F_{g-1}}{\qp t^{i_1,k_1}\dots\qp t^{i_n,k_n}\qp t^{i,k}\qp t^{j,\ell}}\right|_{t = 0}\\
		&\left.+\frac{1}{2}\sum_{k,\ell\geq 0}\sum_{i,j=1}^N\sum_{\substack{g_1+g_2 = g\\I\sqcup J = \{1,\dots,n\}}}(-1)^{k+1} (\mathfrak r_{k+\ell+1})^{i,j}\kk{\diff{}{t^I}\diff{\mathcal F_{g_1}}{t^{i,k}}}\kk{\diff{}{t^J}\diff{\mathcal F_{g_2}}{t^{j,\ell}}}\right|_{t = 0},
	\end{align*}
	here in the last line we use the notation
	\[
		\diff{}{t^I}:=\prod_{m\in I}\diff{}{t^{i_m,k_m}}.
		\]
	By using the identity \eqref{az}, the first line of the right-hand side vanishes due to the fact that
	\[
		k_1+\dots+k_n+k+1\geq 3g-3+n+2>3g-3+(n+1).
		\]
	For a similar reason, the second line and the third line also vanish. As for the last line, we see that it is non-vanishing only when
	\[
		k+\sum_{m\in I} k_m\leq 3g_1-3+(|I|+1),\quad \ell+\sum_{m\in J} k_m\leq 3g_1-3+(|J|+1)
		\]
	which implies that
	\[
		k_1+\dots+k_n\leq 3g-4+n.
		\]
	This contradicts to the assumption \eqref{ba} and hence the last line also vanishes. The proposition is proved.
\end{proof}

Next we prove the invariance of decomposition \eqref{am} under lower triangular transformation. 
\begin{proposition}
	\label{bc}
	The lower triangular transformations preserve the decomposition \eqref{am} for $g\geq 1$.
\end{proposition}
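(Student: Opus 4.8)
The plan is to repeat the top jet-degree analysis of Proposition~\ref{prop:r-action}, while tracking an additional effect that does not occur there. Since $\hat{\mathfrak s}$ is a first-order operator, write $\hat{\mathfrak s}=m(t)+D_{\mathfrak s}$ with $m(t)$ the multiplicative $\qe^{-2}$-part and $D_{\mathfrak s}=-\sum_i(\mathfrak s_1)^i_{\un}\frac{\qp}{\qp t^{i,0}}+\sum_{k\ge 0,\,\ell\ge 1}\sum_{i,j}(\mathfrak s_\ell)^i_j\,t^{j,k+\ell}\frac{\qp}{\qp t^{i,k}}$. Because $D_{\mathfrak s}$ is a derivation, $\hat{\mathfrak s}[t].\mathcal F_g=D_{\mathfrak s}\mathcal F_g$ for $g\ge 1$, with no genus mixing. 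The new point is that the same $m(t)$ deforms the genus-zero potential, $\hat{\mathfrak s}[t].\mathcal F_0=m(t)+D_{\mathfrak s}\mathcal F_0\ne 0$, so the change of variables $t\mapsto(v^{\qa,s})$ that defines the jet coordinates is itself deformed. First I would record the variation of $F_g$ as a function of the jets, namely $\delta F_g=D_{\mathfrak s}\mathcal F_g-\sum_{\qa,\,s\ge 0}\frac{\qp F_g}{\qp v^{\qa,s}}\,\qp_x^{\,s}(\delta v^\qa)$, where $\delta v^\qa=\eta^{\qa\qb}\frac{\qp^2(\hat{\mathfrak s}[t].\mathcal F_0)}{\qp t^{\qb,0}\,\qp t^{1,0}}$ is the induced deformation of the flat jets.

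Next, exactly as in the upper-triangular case, I would reduce to the top jet-degree derivatives. All derivatives $\frac{\qp^n F_g}{\qp u^{i_1,k_1}\cdots\qp u^{i_n,k_n}}$ with $k_1+\dots+k_n>3g-3+n$ vanish automatically, since the general $(3g-2)$-property --- the jet-coordinate counterpart of \eqref{az} --- holds for any semisimple Frobenius manifold and is preserved by the action; so the whole content of \eqref{am} is concentrated in the top stratum $k_1+\dots+k_n=3g-3+n$. There, by \eqref{am} and the sharp vanishing \eqref{ay}, the derivative must be zero unless $i_1=\dots=i_n$ and must equal $\varphi_{g;i}$ times the universal KdV derivative on the diagonal. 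The essential difference from Proposition~\ref{prop:r-action} is that these top derivatives are now genuinely altered: both the translation piece $-\sum_i(\mathfrak s_1)^i_{\un}\frac{\qp}{\qp t^{i,0}}$ of $D_{\mathfrak s}$ and the jet deformation $\delta v^\qa$ contribute at top degree, so there is no analogue of the clean vanishing \eqref{ax}.

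I would then substitute the decomposition \eqref{am}, taken as an assumption at the point where the infinitesimal transformation is applied, into $\delta F_g$ and isolate its top-degree part. The lowering term of $D_{\mathfrak s}$ decreases the $t$-level and $\delta v^\qa$ feeds in sub-top data, so a priori both $H_g$ and sub-top KdV derivatives enter; the sharp vanishing \eqref{ay} removes the sub-top KdV pieces, while the defining property \eqref{eq:Property-H-g} of $H_g$ keeps the genuine $H_g$-contributions strictly below top degree. The goal is to show that what survives at top degree is precisely the top-degree part of the variation of the KdV term $\sum_i\varphi_{g;i}F_g^{KdV}(u^{i,1},\dots,u^{i,3g-2})$ induced by the jet deformation, so that the remainder is a legitimate $\delta H_g$ satisfying \eqref{eq:Property-H-g}. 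The numerical bookkeeping of the diagonal coefficients here draws on the loop-equation analysis of Section~\ref{al}, in particular on relation \eqref{av} expressing $\frac{\qp F_g^{KdV}}{\qp v^{1,3g-2}}$ through lower-genus KdV data, together with the explicit genus-zero constants $c_{\qa\qb\qg}$.

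The main obstacle is exactly this last reorganisation. Whereas the upper-triangular action annihilated the entire top stratum at once via \eqref{ax}, here I must prove that two a priori unrelated mechanisms --- the explicit first-order action $D_{\mathfrak s}\mathcal F_g$ and the implicit deformation $\delta v^\qa$ of the jet coordinates coming from $\delta\mathcal F_0$ --- conspire so that the off-diagonal top contributions cancel and the diagonal ones assemble into a genuine reparametrisation of $F_g^{KdV}$. Controlling $\delta v^\qa$ in canonical coordinates, and using that each KdV block $F_g^{KdV}(u^{i,\bullet})$ depends on a single canonical index $i$, are the decisive inputs, and this is where I expect the bulk of the work to lie.
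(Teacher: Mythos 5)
Your opening formula is the right one --- the induced action $\hat{\mathfrak s}[v].F_g=\hat{\mathfrak s}[t].\mathcal F_g-\sum_{s\geq 0}\frac{\partial F_g}{\partial v^{\alpha,s}}\,\partial_x^{s}(\delta v^{\alpha})$ is exactly where the paper starts --- but from there your proposal diverges from what actually happens, and it stops precisely at the step that constitutes the whole proof. The key fact you never identify is that the two mechanisms you call ``a priori unrelated'' cancel \emph{exactly}, in all jet degrees, not just at top degree. Write $\hat{\mathfrak s}[t].\mathcal F_0=Q(t)+\mathcal D(\mathcal F_0)$, where $Q$ is the explicit quadratic polynomial ($\qe^{-2}$-part) and $\mathcal D$ is the derivation part. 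The commutator $[\partial/\partial t^{j,p},\mathcal D]$ only produces derivatives at levels strictly below $p$, so $\mathcal D$ commutes with every level-zero derivative $\partial/\partial t^{j,0}$ and with $\partial_x$; hence, by the chain rule applied to $\mathcal F_g=F_g(v^{j,s})$ with $v^{j,s}=\partial_x^{s+1}\partial\mathcal F_0/\partial t^{j,0}$, one gets $\mathcal D(\mathcal F_g)=\sum_{j,s}\frac{\partial F_g}{\partial v^{j,s}}\,\partial_x^{s+1}\partial_{t^{j,0}}\mathcal D(\mathcal F_0)$. This cancels identically against the $\mathcal D(\mathcal F_0)$-contribution to the jet deformation $\delta v^{\alpha}$, and since $\partial_x^{s+1}\partial_{t^{j,0}}Q$ is a constant for $s=0$ and zero for $s\geq 1$, the entire variation collapses to \eqref{eq:s-translation}: a constant infinitesimal translation of the flat coordinates. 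No stratum-by-stratum bookkeeping, no diagonal/off-diagonal cancellation, and in particular no loop-equation input is needed; your appeal to relation \eqref{av} is a red herring --- that relation enters only later, in Proposition~\ref{prop:functions-varphi}, to compute $\varphi_{g;i}$, whereas Proposition~\ref{bc} is a formal statement valid for any tame partition function, where no loop equation is even available.

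Beyond being unexecuted (you explicitly defer ``the bulk of the work''), the plan as set up would not close if carried out literally. Once the variation is known to be $c^{j}\,\partial F_g/\partial v^{j,0}$, preservation of \eqref{am} is \emph{not} the statement that this derivative again decomposes with respect to the frozen canonical-coordinate dictionary: in jet variables $\partial/\partial v^{j,0}=\sum_i\frac{\partial u^i}{\partial v^j}\,\partial/\partial u^{i,0}+\sum_{i,s\geq 1}\partial_x^{s}\bigl(\tfrac{\partial u^i}{\partial v^j}\bigr)\,\partial/\partial u^{i,s}$, and already at $g=1$ the second group of terms produces contributions proportional to $u^{k,1}/u^{i,1}$ with $k\neq i$ (e.g.\ for the $\mathbb P^1$ Frobenius manifold), which are neither of the form $\delta\varphi_{1;i}(u)\log u^{i,1}$ nor admissible in $H_1$ by \eqref{eq:Property-H-g}. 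These terms are not cancelled by anything; they are \emph{absorbed}, because a translation simultaneously shifts the genus-zero data and therefore the functions $u^i(v)$, $\varphi_{g;i}$ and $H_g$ themselves, so the decomposition is simply precomposed with the shift and its form is manifestly preserved. That covariance observation is the actual content of the paper's final step, and it is exactly what your scheme --- which keeps the canonical coordinates fixed and hopes for a ``conspiracy'' of top-degree cancellations between $D_{\mathfrak s}\mathcal F_g$ and $\delta v^{\alpha}$ --- is missing. So the gap is twofold: the central computation reducing the action to a translation is absent, and the substitute strategy you sketch is aimed at proving a statement that, in the form you pose it, is false.
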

\begin{proof}
	We prove by computing the infinitesimal action of a lower triangular transformation $\mathfrak s$ in terms of  the flat coordinates $v^{\qa}$ of $M$. Such an action is given in \cite{buryak2012polynomial} and reads 
	\[
		\hat{\mathfrak s}[v].F_g =\hat{\mathfrak s}[t].\mathcal F_g-\sum_{s\geq 0}\sum_{j=1}^N \diff{F_g}{v^{j,s}}\kk{\diff{}{t^{\un,0}}}^{s+1}\diff{}{t^{j,0}}\hat{\mathfrak s}[t].\mathcal F_0,
		\]
        here we denote by $\hat{\mathfrak s}[v].$ the action of $\mathfrak s$ in terms of flat coordinates.
        For general lower triangular transformation given by 
\begin{align*}
	\hat{\mathfrak s} =& -\frac{1}{2\qe^2}(\mathfrak s_3)_{\un,\un}+\frac{1}{\qe^2}\sum_{k\geq 0}\sum_{i=1}^N (\mathfrak s_{k+2})_{\un,i}t^{i,k}+\frac{1}{2\qe^2}\sum_{k,\ell\geq 0}\sum_{i,j=1}^N (-1)^i(\mathfrak s_{k+\ell+1})_{i,j}t^{i,k}t^{j,\ell}+\mathcal D,\\
	\mathcal D =&-\sum_{i=1}^N(\mathfrak s_1)^i_{\un}\diff{}{t^{i,0}}+\sum_{k\geq 0,\ell\geq 1}\sum_{i,j=1}^N(\mathfrak s_\ell)^i_j t^{j,k+\ell}\diff{}{t^{i,k}},
\end{align*}
it is easy to see by definition that 
\begin{align*}
\hat{\mathfrak s}[t].\mathcal F_g =&\,\mathcal D(\mathcal F_g),\quad g\geq 1,\\
\hat{\mathfrak s}[t].\mathcal F_0 =&\, -\frac{1}{2}(\mathfrak s_3)_{\un,\un}+\sum_{k\geq 0}\sum_{i=1}^N (\mathfrak s_{k+2})_{\un,i}t^{i,k}+\frac{1}{2}\sum_{k,\ell\geq 0}\sum_{i,j=1}^N (-1)^i(\mathfrak s_{k+\ell+1})_{i,j}t^{i,k}t^{j,\ell}+\mathcal D(\mathcal F_0).
\end{align*}
For $g\geq 1$, it follows from the fact $\mathcal F_g = F_g(v^{j,s})$ that 
\begin{align*}
\hat{\mathfrak s}[t].\mathcal F_g =&\, \sum_{s\geq 0}\sum_{j=1}^N\diff{F_g}{v^{j,s}}\kk{\diff{}{t^{\un,0}}}^{s+1}\diff{}{t^{j,0}}\mathcal D(\mathcal F_0),
\end{align*}
here we use the relation 
\[
v^{j,s} = \kk{\diff{}{t^{\un,0}}}^{s+1}\diff{\mathcal F_0}{t^{j,0}}.
\]
Therefore, it is immediate to arrive at the action 
	\begin{equation}\label{eq:s-translation}
		\hat{\mathfrak s}[v].F_g = -\frac{1}{2}\sum_{i=1}^N (-1)^i\kk{(\mathfrak s_{1})_{i,\un}+(\mathfrak s_{1})_{\un,i}}\diff{F_g}{v^i}.
		\end{equation}
	This identity means that the lower triangular transformations act as infinitesimal translations of the flat coordinates. This leaves the decomposition \eqref{am} invariant. The proposition is proved.
\end{proof}

Combining the result or Proposition~\ref{prop:r-action} and~\ref{bc}, we can conclude that 
\begin{proposition} \label{prop:decomposition-Givental}
Decomposition~\eqref{am} holds true for any partition function given by the Givental formula.    
\end{proposition}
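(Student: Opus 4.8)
The plan is to read the decomposition off directly from the Givental formula~\eqref{aj}, which presents $Z$ as $C_{pt}\hat S^{-1}_{pt}\hat\Psi_{pt}\hat R_{pt}$ applied to the product $\prod_{i=1}^N Z^{KdV}$ of Witten--Kontsevich tau-functions. I would check that the decomposition~\eqref{am} holds at the base point $\prod_i Z^{KdV}$ and that each of the four operators in the formula preserves it. The two genuinely nontrivial steps are the upper-triangular factor $\hat R_{pt}$ and the lower-triangular factor $\hat S^{-1}_{pt}$, and these are precisely the content of Propositions~\ref{prop:r-action} and~\ref{bc}; the remaining work is therefore to assemble these two propositions together with the base case and the two ``diagonal'' factors $\hat\Psi_{pt}$ and $C_{pt}$.

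For the base case I would invoke the observation already recorded in the section introduction, namely that for the Witten--Kontsevich (trivial semisimple) theory the canonical coordinates are rescaled flat coordinates, so that the genus-$g$ free energy of $\prod_i Z^{KdV}(\qe^2/f_i;T^{i,\bullet}/\sqrt{f_i})$, read off as the coefficient of $\qe^{2g-2}$ in its logarithm, is exactly $\sum_i f_i^{-(g-1)}F_g^{KdV}(u^{i,1},\dots,u^{i,3g-2})$. This is already of the shape~\eqref{am} with $H_g=0$ and with each prefactor $\varphi_{g;i}(u)$ a function of the canonical coordinates alone, so the decomposition is transparent at the base point; the homogeneity of $F_g^{KdV}$ recorded in~\eqref{as}--\eqref{ay} is what guarantees that the $1/\sqrt{f_i}$ rescaling of the times only produces such an overall $u$-dependent factor rather than mixing jets.

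I would then dispatch the four operators in turn. The factor $C_{pt}$ merely multiplies $Z$ by a function of $v$ alone, hence contributes a jet-independent term to $F_1$ that lands in $H_1$ and trivially respects~\eqref{eq:Property-H-g}. The factor $\hat\Psi_{pt}$ is the linear change of frame from the normalized canonical times $T^{i,k}$ to the flat times $t^{\qa,k}$, and since~\eqref{am} is formulated intrinsically through the geometric canonical coordinates $u^i$ and their $\qp_x$-jets, this reparametrization does not alter the shape of the decomposition. For the two triangular factors I note that Propositions~\ref{prop:r-action} and~\ref{bc} are infinitesimal statements, so to reach the finite operators $\hat R_{pt}=\exp(\hat{\mathfrak r})$ and $\hat S^{-1}_{pt}$ I would integrate along the corresponding one-parameter subgroups: the decomposition~\eqref{am} cuts out an invariant locus, characterized by the vanishing~\eqref{eq:Property-H-g} of $H_g$, equivalently by the top $\qp_x$-degree part of $F_g$ being pinned to $\sum_i\varphi_{g;i}F_g^{KdV}$, and the two propositions show this locus is preserved at every point of each flow, so it persists at the endpoint.

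The step I expect to be the main obstacle is the bookkeeping that links the three incarnations of the free energy: as a series in the normalized canonical times $T^{i,k}$, where the base case is transparent; as a series in the flat times $t^{\qa,k}$, on which $\hat R$, $\hat S$, $\hat\Psi$ and $C$ act; and as a differential polynomial in the canonical-coordinate jets $u^{i,k}$, in which~\eqref{am} is phrased. Concretely, one must verify that the ``top $\qp_x$-degree'' condition controlled by Propositions~\ref{prop:r-action} and~\ref{bc}---through the identity~\eqref{ax} and the translation formula~\eqref{eq:s-translation}---is literally the same invariant notion across all three variable sets, so that the invariance established infinitesimally genuinely integrates to the full finite Givental action of~\eqref{aj}.
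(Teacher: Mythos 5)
Your proposal is correct and takes essentially the same route as the paper's proof: establish the decomposition~\eqref{am} for the $N$ copies of (rescaled) KdV partition functions as the base case, invoke Propositions~\ref{prop:r-action} and~\ref{bc} to handle the upper- and lower-triangular factors, and integrate the infinitesimal invariance to the finite group action in the Givental formula~\eqref{aj}. Your explicit dispatching of the factors $C_{pt}$ and $\hat\Psi_{pt}$ merely spells out what the paper absorbs into the phrase ``$N$ copies of the (possibly rescaled) KdV partition functions,'' so there is no substantive difference in approach.
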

\begin{proof}
 Indeed, decomposition~\eqref{am} holds for the partition function given by the $N$ copies of the (possibly rescaled) KdV partition functions. On the other hand, any partition function given by the Givental formula is obtained by a combination of an upper-triangular and a lower-triangular symplectic tranformations applied to the $N$ copies of the (possibly rescaled) KdV partition functions. 

 Proposition~\ref{prop:r-action} and~\ref{bc} imply that the infinitesimal actions of the upper-trangular and lower-triangular symplectic transformations preserves the decomposition~\eqref{am}. In other words, the involved infinitesimal transformations act nontrivially only on the coefficients $\varphi_{g;i}$ and $H_g$, preserving the property~\eqref{eq:Property-H-g} of $H_g$. 
 
 Integrating these infinitesimal Lie algebra actions to the actions of the corresponding Lie group elements, we obtain the desired decomposition property for any partition function that we can reach using the upper-triangula and lower-triangular symplectic transformation starting from the $N$ copies of the (possibly rescaled) KdV partition functions. In other words, we obtain the desired decomposition property for any partition function given by the Givental formula.    
\end{proof}

\begin{remark}
Note that a possible alternative to the infinitesimal analysis of Proposition~\ref{prop:r-action} would be to use the closed graphical formula for the upper traingular Givental group action~\cite{dunin2013givental}, which would lead to the same result.    
\end{remark}

Finally, we can prove Theorem \ref{ae}. For this, we need to use the loop equation \eqref{ao} of $M$, hence let us first recall some basic facts about semisimple Frobenius manifolds. One may refer to \cite{dubrovin1996geometry,dubrovin1999painleve,dubrovin2001normal} for details. As before, we use $(v^\qa)$ to denote the flat coordinates of $M$ and $(u^i)$ to denote its canonical coordinates. Recall that in terms of $(u^i)$, the flat metric $\eta$ of $M$ is diagonal of the form $\eta = \sum f_i (du^i)^2$. We define functions $\Psi_{i\qa}$ by
\[
\diff{u^i}{v^\qa} = \frac{\Psi_{i\qa}}{\Psi_{i1}},\quad \Psi_{i1} = \sqrt{f_i}.
	\]
These functions satisfy the identity
\begin{equation}
	\label{bd}
	\Psi_{i\qa}\Psi_j^\qa =\qd_{ij},\quad  \Psi_j^\qa:=\eta^{\qa\qb}\Psi_{j\qb}.
\end{equation}
For $i\neq j$, we denote by $\qg_{ij}$ the rotation coefficients of the flat metric $\eta$, and in terms of the canonical coordinates, they read
\[
	\qg_{ij} = \frac{1}{2\sqrt{f_if_j}}\diff{f_i}{u^j},\quad i\neq j.
	\]
Define functions $V_{ij}$ to be
\[
	V_{ij} = \begin{cases}
		0, & \text{for } i=j,\\
        (u^j-u^i)\qg_{ij}, & \text{for } i\neq j.
	\end{cases}
	\]
{Note that the functions $\qg_{ij}$ are symmetric with respect to their indices and hence  $V_{ij}$ are antisymmetric. }By using the notations above, the Gauss-Manin system satisfied by periods $p_\qa(u;\ql)$ of $M$ can be written into the following first-order equations for functions $\phi_{i}$:
\begin{align*}
	\diff{\phi_{i}}{u^j}&=-\frac{V_{ij}}{u^i-u^j}\phi_{j},\quad i\neq j,\\
	\diff{\phi_{i}}{u^i}&=\frac{1}{\ql-u^i}\kk{\frac12 \phi_{i}+\sum_{j} V_{ij}\phi_{j}}+\sum_j\frac{V_{ij}}{u^i-u^j}\phi_{j},\\
	\diff{\phi_{i}}{\ql} &= \frac{\phi_i}{2(u^i-\ql)}+\sum_j\frac{V_{ij}}{u^i-\ql}\phi_j.
\end{align*}
Let us denote by $\phi_{i\qa}$ a fundamental solution matrix for the above system, then the periods $p_\qa$ are specified by
\[
	\diff{p_\qa}{u^i} = \Psi_{i1}\phi_{i\qa},\quad \diff{p_\qa}{\ql} = -\sum_i \Psi_{i1}\phi_{i\qa}.
	\]
By definition, functions $p_\qa$ serve as flat coordinates of the flat pencil of $M$, and we denote by $G^{\qa\qb}$ the corresponding Gram matrix. In particular, we have
\[
	\phi_{i\qa}G^{\qa\qb}\phi_{j\qb} = \frac{\qd_{ij}}{u^i-\ql}.
	\]
The following proposition is important in analyzing the structure of the loop equation \eqref{ao}.
\begin{proposition}
	\label{bf}
	We have:
	\begin{enumerate}
		\item In terms of canonical coordinates, the function \[
			\qp_x^{n_1}(\qp^\qg p_\qa) G^{\qa\qb}\qp_x^{n_2}(\qp^\rho p_\qb)\in \mathcal A\left[\frac{1}{\ql-u^1},\dots,\frac{1}{\ql-u^N}\right],
			\]
		and, viewed as a polynomial in $\frac{1}{\ql-u^1},\dots,\frac{1}{\ql-u^N}$, it has the leading term
		\[
			\sum_i A_{n_1,n_2}\frac{(u^{i,1})^{n_1+n_2}}{(\ql-u^i)^{n_1+n_2+1}}\Psi_i^\qg\Psi_i^\rho,
			\]
		here the constants $A_{n_1,n_2}$ are defined in \eqref{au}.
		\item The function
		\[\qp_x^{n}\left[\nabla\diff{p_\qa}{\ql}\cdot \nabla\diff{p_\qb}{\ql}\cdot v_x\right]^\qg G^{\qa\qb}\in \mathcal A\left[\frac{1}{\ql-u^1},\dots,\frac{1}{\ql-u^N}\right],
			\]
		and, viewed as a polynomial in $\frac{1}{\ql-u^1},\dots,\frac{1}{\ql-u^N}$, it has the leading term
		\[
			-\sum_i \frac{(n+2)!(u^{i,1})^{n+1}}{8(\ql-u^i)^{n+3}}\frac{\Psi_i^\qg}{\Psi_{i1}}.
			\]
	\end{enumerate}
\end{proposition}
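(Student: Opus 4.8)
The plan is to pass everywhere to the canonical frame, to reduce both bilinear expressions to diagonal sums over the canonical index $i$ by means of the orthogonality relation $\phi_{i\qa}G^{\qa\qb}\phi_{j\qb}=\qd_{ij}/(u^i-\ql)$, and then to extract the top pole in $1/(\ql-u^i)$ by an induction driven by the Gauss--Manin system. As a preliminary step I would rewrite the flat derivatives of the periods in canonical data: from $\diff{u^i}{v^\qa}=\Psi_{i\qa}/\Psi_{i1}$ and $\diff{p_\qa}{u^i}=\Psi_{i1}\phi_{i\qa}$ one obtains $\qp^\qg p_\qa=\sum_i\Psi_i^\qg\phi_{i\qa}$, and from $\diff{p_\qa}{\ql}=-\sum_i\Psi_{i1}\phi_{i\qa}$ one obtains $\frac{\qp^2 p_\qa}{\qp\ql\qp v^\qz}=\sum_i\Psi_{i\qz}\,\qp_\ql\phi_{i\qa}$. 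I would also record the standard canonical-frame expression $c^\qg_{\qa\qb}=\sum_i \Psi_{i\qa}\Psi_{i\qb}\Psi_i^\qg/\Psi_{i1}$ for the structure constants (coming from the idempotents $\qp_{u^i}$), together with $\Psi_{i\qa}\Psi_j^\qa=\qd_{ij}$ from \eqref{bd} and the identity $\Psi_{i\qa}v^{\qa,1}=\Psi_{i1}u^{i,1}$, which is just $u^{i,1}=\qp_x u^i$.

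For part (1): since the Gauss--Manin system expresses every $\qp_{u^j}\phi_{i\qa}$ as an $\mathcal A[1/(\ql-u^1),\dots,1/(\ql-u^N)]$-linear combination of the $\phi_{k\qa}$, and since $\qp_x=\sum_j u^{j,1}\qp_{u^j}$ on $u$-dependent quantities, an easy induction shows $\qp_x^{n_1}(\qp^\qg p_\qa)=\sum_i P_i^\qg\phi_{i\qa}$ with $P_i^\qg\in\mathcal A[1/(\ql-u^1),\dots,1/(\ql-u^N)]$, and likewise $\qp_x^{n_2}(\qp^\rho p_\qb)=\sum_j Q_j^\rho\phi_{j\qb}$. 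Contracting with $G^{\qa\qb}$ and applying the orthogonality relation collapses the double sum to $\sum_i P_i^\qg Q_i^\rho/(u^i-\ql)$, which is manifestly in the required ring. To get the leading term I would track, by induction on the number of applied $\qp_x$, the top pole at $\ql=u^i$ of the diagonal coefficient $P_i^\qg$. The recursion for this coefficient receives pole-raising contributions only from two sources: the term $\qp_x P_i^\qg$, whose $u^{i,1}\qp_{u^i}$ part differentiates the pole, and the self-interaction term $\tfrac12(\ql-u^i)^{-1}$ of the diagonal Gauss--Manin equation. This gives $c_{n+1}=(n+\tfrac12)c_n$, hence $c_{n_1}=(2n_1-1)!!/2^{n_1}$ and $[P_i^\qg]_{\mathrm{lead}}=c_{n_1}(u^{i,1})^{n_1}(\ql-u^i)^{-n_1}\Psi_i^\qg$. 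Multiplying the leading parts of $P_i^\qg$ and $Q_i^\rho$ by the factor $1/(u^i-\ql)$, and using $A_{n_1,n_2}$ from \eqref{au}, reproduces the claimed leading term.

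The main obstacle is precisely the bookkeeping behind this last induction: one must show that the off-diagonal mixing in the Gauss--Manin system, which couples distinct canonical indices through the rotation coefficients $V_{ij}$, only produces strictly lower-order poles at $\ql=u^i$, so that it does not disturb the leading coefficient. I would handle this by a simultaneous induction on all coefficients, proving that $P_k^\qg$ has a pole of order at most $n$ at $\ql=u^k$ and of order at most $n-1$ at $\ql=u^j$ for $j\neq k$; the off-diagonal terms (proportional to $V_{ki}/(u^k-u^i)$ or $V_{ki}/(\ql-u^k)$) then feed into the coefficient of $\phi_{i\qa}$ only through poles of the subleading type. This is the $N$-dimensional analogue of the single-variable estimate \eqref{at}--\eqref{au} used for $F_g^{KdV}$.

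For part (2) I would first settle $n=0$. Substituting the leading part $\qp_\ql\phi_{i\qa}\sim \phi_{i\qa}/\big(2(u^i-\ql)\big)$ of the third Gauss--Manin equation into $\frac{\qp^2 p_\qa}{\qp\ql\qp v^\qz}$, contracting the two factors in \eqref{be} with $G^{\qa\qb}$ (raising flat indices with $\eta$ as usual), and again invoking the orthogonality relation reduces the expression to $\sum_i \Psi_i^\qz\Psi_i^\mu c^\qd_{\qz\mu}c^\qg_{\qd\qe}v^{\qe,1}/\big(4(u^i-\ql)^3\big)$. The structure-constant contractions then simplify cleanly: $\Psi_i^\qz\Psi_i^\mu c^\qd_{\qz\mu}=\Psi_i^\qd/\Psi_{i1}$ and $\big(\Psi_i^\qd/\Psi_{i1}\big)c^\qg_{\qd\qe}v^{\qe,1}=u^{i,1}\Psi_i^\qg/\Psi_{i1}$, using $\Psi_{i\qa}\Psi_j^\qa=\qd_{ij}$ and $\Psi_{i\qa}v^{\qa,1}=\Psi_{i1}u^{i,1}$; this yields the $n=0$ leading term $-\sum_i u^{i,1}\Psi_i^\qg/\big(4(\ql-u^i)^3\Psi_{i1}\big)$. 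Membership of the full expression in the ring follows exactly as in part (1), since the complete $\qp_\ql\phi_{i\qa}$ is an $\mathcal A[1/(\ql-u^\bullet)]$-combination of the $\phi_{k\qa}$. Finally, to pass to general $n$ I would apply $\qp_x^n$ and note that $\qp_x$ raises the pole order by at most one, the raising coming solely from $u^{i,1}\qp_{u^i}$ acting on $(\ql-u^i)^{-m}$; iterating on the order-$3$ leading term multiplies it by $3\cdot4\cdots(n+2)\,(u^{i,1})^n=\tfrac{(n+2)!}{2}(u^{i,1})^n$, which gives the stated leading term. The lower-order-pole terms cannot reach the top pole since their order stays below $n+3$, so here no further estimates are needed.
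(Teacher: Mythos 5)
Your proof of part (2) is essentially sound and is in fact the route the paper takes: the paper computes the full $n=0$ expression in canonical coordinates from \eqref{be}, the Gauss--Manin system, \eqref{bd} and \eqref{bi} (obtaining, in addition to your leading term, the subleading off-diagonal piece $\sum_j V_{ij}^2/(u^j-\ql)$ with second-order poles), and then verifies the leading term of $\qp_x^n(\cdot)$ directly, exactly as you do by noting that each $\qp_x$ raises any pole order by at most one, so the order-$\le 2$ subleading poles never reach order $n+3$. Note that for part (1) the paper gives no proof at all --- it cites Lemma 3.10.19 of Dubrovin--Zhang --- so there your argument is a genuine reconstruction attempt, and this is where the problem lies.

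The bookkeeping invariant on which your proof of part (1) rests is false, and the induction does not close as stated. Writing $\qp_x^{n}(\qp^\qg p_\qa)=\sum_k P_k^{(n)}\phi_{k\qa}$, the Gauss--Manin system gives the recursion
\[
P_k^{(n+1)}=\qp_x P_k^{(n)}+\frac{u^{k,1}}{2(\ql-u^k)}\,P_k^{(n)}+\sum_{i\neq k}P_i^{(n)}\left(\frac{u^{i,1}V_{ik}}{\ql-u^i}+\frac{(u^{i,1}-u^{k,1})V_{ik}}{u^i-u^k}\right).
\]
Already at $n=1$ this yields the terms $\sum_{i\neq k}\Psi_i^\qg\, u^{i,1}V_{ik}/(\ql-u^i)$ inside $P_k^{(1)}$, i.e.\ genuine first-order poles at the points $u^i$ with $i\neq k$, contradicting your claim that $P_k^{(1)}$ has pole order at most $n-1=0$ there. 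The inductive step fails for the same reason: the mixing term $P_j^{(n)}\,u^{j,1}V_{jk}/(\ql-u^j)$ has a pole of order $n+1$ at $u^j$ (the pole of $P_j^{(n)}$ at $u^j$ is of order exactly $n$, with nonvanishing leading coefficient $\tfrac{(2n-1)!!}{2^n}(u^{j,1})^n\Psi_j^\qg$), whereas your invariant allows only order $n$ for the off-diagonal poles of $P_k^{(n+1)}$. The repair is to weaken the invariant to: \emph{all} poles of $P_k^{(n)}$, at $u^k$ and at $u^j$ with $j\neq k$ alike, are of order at most $n$. This weaker statement does propagate, and it still delivers your conclusion: the mixing terms contribute to the pole of $P_k^{(n+1)}$ at $u^k$ only through the poles of $P_i^{(n)}$ ($i\neq k$) at $u^k$, which are of order $\le n<n+1$, so the top Laurent coefficient of $P_k^{(n+1)}$ at $\ql=u^k$ obeys the clean recursion $c_{n+1}=(n+\tfrac12)c_n$ you wrote, and part (1) follows by the contraction argument. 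As written, however, the simultaneous induction is based on a false vanishing claim and would collapse at its first step, so this part needs to be reworked along the lines above.
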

\begin{proof}
	The first statement is proved in Lemma 3.10.19 of \cite{dubrovin2001normal}. As for the second statement, by a straightforward computation using \eqref{be} and the Gauss-Manin system satisfied by $p_\qa$, we see that, in terms of canonical coordinates,
		\[\left[\nabla\diff{p_\qa}{\ql}\cdot \nabla\diff{p_\qb}{\ql}\cdot v_x\right]^\qg G^{\qa\qb} = \sum_i\frac{u^{i,1}\Psi_i^\qg}{\Psi_{i1}}\kk{\frac{1}{4(u^i-\ql)^3}+\frac{1}{(u^i-\ql)^2}\sum_j\frac{V_{ij}^2}{u^j-\ql}},
		\]
	here we also use \eqref{bd} and the fact \cite{dubrovin1996geometry} that 
	\begin{equation}
		\label{bi}
		c_{\qa\qb}^\qg = \sum_i\frac{\Psi_{i\qa}\Psi_{i\qb}\Psi_i^\qg}{\Psi_{i1}}.
	\end{equation}
	Then the second statement can be verified directly. The proposition is proved. 
\end{proof}
\begin{proposition}
	\label{prop:functions-varphi}
	The function $\varphi_{g;i}$ defined in \eqref{am} is given by $(f_i(u))^{1-g}$.
\end{proposition}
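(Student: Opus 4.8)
The plan is to argue by induction on $g$, using the loop equation \eqref{ao} to determine each $\varphi_{g;i}$ once the lower-genus coefficients are known. For the base case $g=1$ I would check $\varphi_{1;i}=1=(f_i)^0$ directly from $F_1=\frac{1}{24}\log\det(c_{\qa\qb\qg}v^{\qg,1})+G(v)$: combining \eqref{bi} with \eqref{bd} one finds $c_{\qa\qb\qg}v^{\qg,1}=\sum_i\Psi_{i\qa}\Psi_{i\qb}u^{i,1}=(\Psi^{\mathrm T}\diag(u^{i,1})\Psi)_{\qa\qb}$, whence $\det(c_{\qa\qb\qg}v^{\qg,1})=\det(\eta)\prod_i u^{i,1}$ and $F_1=\sum_i\frac1{24}\log u^{i,1}+(\text{jet-free part})$, so $H_1$ carries no jet dependence and $\varphi_{1;i}=1$.

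For the inductive step ($g\ge 2$) I assume $\varphi_{m;i}=(f_i)^{1-m}$ for all $m<g$, substitute the decomposition \eqref{am} into \eqref{ao} in canonical coordinates, and for each fixed $i$ extract the coefficient of the top pole $\frac{(u^{i,1})^{3g-2}}{(\ql-u^i)^{3g-1}}$. The leading terms of the period-factors are supplied by Proposition~\ref{bf}, together with the elementary leading behaviour of $\qp_x^r\kk{\frac{1}{E-\ql}}^\qg$ that follows from the canonical-coordinate expression $\kk{\frac{1}{E-\ql}}^\qg=\sum_i\Psi_{i1}\Psi_i^\qg/(u^i-\ql)$. Two structural facts make the extraction clean. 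First, the $(3g-2)$-property of $F_g$ forces the chain rule $\diff{F_g}{v^{\qg,r}}=\sum_{j,s}\diff{u^{j,s}}{v^{\qg,r}}\diff{F_g}{u^{j,s}}$ to be diagonal at the relevant top jet-order: only the $s=r$ term with $\diff{u^j}{v^\qg}=\Psi_{j\qg}/\Psi_{j1}$ survives, and contracting it against the $\Psi_i^\qg$, $\Psi_{i1}\Psi_i^\qg$ factors from Proposition~\ref{bf} and the Euler term collapses everything to the single index $j=i$ via \eqref{bd}. Second, by \eqref{eq:Property-H-g} all derivatives of $H_g,H_{g-1}$ hitting the surviving top-order jets vanish, and the mixed Hessians $\partial_{u^{j,k}}\partial_{u^{j',\ell}}F_{g-1}$ with $j\ne j'$ drop out, so that only $\varphi_{\bullet;i}F_\bullet^{KdV}(u^{i,\bullet})$ contributes.

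Carrying this out, the extracted identity reads, schematically, $c_g\,\varphi_{g;i}\,\diff{F_g^{KdV}}{u^{i,3g-2}}=\frac{1}{\Psi_{i1}^2}\widetilde R_g^{(i)}$, where $c_g=-(3g-2)!+\sum_{k}\binom{3g-2}{k}A_{k-1,3g-1-k}$ and $\widetilde R_g^{(i)}$ is precisely the right-hand side of the KdV relation \eqref{av} for the variable $u^i$, except that each genus-$m$ factor carries the weight $\varphi_{m;i}$ (the products carry $\varphi_{m;i}\varphi_{g-m;i}$ and the $F_{g-1}$-terms carry $\varphi_{g-1;i}$). Here the $\Psi_{i1}$'s cancel on the left: both terms of the left-hand side of \eqref{ao} contribute leading period/Euler factors proportional to $\Psi_{i1}\Psi_i^\qg$, whose $\Psi_{i1}$ cancels the $\Psi_{i1}^{-1}$ coming from $\diff{u^i}{v^\qg}$, while each term on the right-hand side acquires a factor $\Psi_{i1}^{-2}$ from the period contractions in Proposition~\ref{bf}.

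Finally, the inductive hypothesis gives $\varphi_{m;i}\varphi_{g-m;i}=\Psi_{i1}^{2(2-g)}=\varphi_{g-1;i}$, so every term of $\widetilde R_g^{(i)}$ carries the common factor $\Psi_{i1}^{2(2-g)}$ relative to the unweighted KdV right-hand side $R_g^{(i)}$; hence $\frac1{\Psi_{i1}^2}\widetilde R_g^{(i)}=\Psi_{i1}^{2(1-g)}R_g^{(i)}=\Psi_{i1}^{2(1-g)}c_g\diff{F_g^{KdV}}{u^{i,3g-2}}$ by \eqref{av}. Since $\diff{F_g^{KdV}}{u^{i,3g-2}}\ne0$ and $c_g\ne0$ (the latter because the left side of \eqref{av} is the nonzero $C_{g;(3g-2)}$-term), cancellation yields $\varphi_{g;i}=\Psi_{i1}^{2(1-g)}=(f_i)^{1-g}$, closing the induction. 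I expect the main obstacle to be the careful bookkeeping of the weights $\Psi_{i1}=\sqrt{f_i}$ through the chain rule and the period contractions — in particular, checking that the product terms, the Hessian terms and the $F_{g-1}$-term all pick up exactly the same power $\Psi_{i1}^{2(2-g)}$ so that a single power of $f_i$ factors out cleanly. The vanishing inputs \eqref{ay} and \eqref{eq:Property-H-g} are precisely what guarantee that no cross-index or $H$-contamination spoils this alignment.
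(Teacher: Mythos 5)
Your proposal is correct and follows essentially the same route as the paper's proof: induction on $g$, with the base case read off from the explicit formula for $F_1$, and the inductive step obtained by extracting the top-order pole $\frac{(u^{i,1})^{3g-2}}{(\ql-u^i)^{3g-1}}$ from both sides of the loop equation \eqref{ao} via Proposition~\ref{bf}, factoring the decomposition \eqref{am} diagonally in the index $i$, and matching against the KdV relation \eqref{av} together with the induction hypothesis to obtain $\varphi_{g;i}=(f_i)^{2-g}/\Psi_{i1}^2=(f_i)^{1-g}$. The only point the paper makes explicit that you leave tacit is the justification that the $F_g$ admitting the decomposition \eqref{am} (i.e., the Givental-formula free energy) actually satisfies the loop equation, which rests on the Dubrovin--Zhang theorem identifying their tau-function with the Givental formula.
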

\begin{proof}
	By using Dubrovin-Zhang's theorem that their tau-function is given by the Givental formula~\cite[Theorem 4.3.14]{dubrovin2005normal}, (or, alternatively, one can employ Teleman's result \cite{teleman2012structure} on the classification of cohomological field theories, and identify the Virasoro constraints constructed by Givental \cite{givental2001gromov} and those constructed by Dubrovin and Zhang \cite{dubrovin2001normal}), we know that $F_g$ given by the Givental formula satisfies the loop equation \eqref{ao}. 
	
	Let us prove $\varphi_{g;i} = (f_i(u))^{1-g}$ by induction on $g$. For $g = 1$, it is well-known that 
	\[
		F_1 = \frac{1}{24}\sum_i \log u^{i,1}+H_1(u),
		\] 
	hence we see that $\varphi_{1;i} = 1$. Assume that we have proved the statement for $1,\dots,g-1$, let us find $\varphi_{g;i}$. It follows from Lemma 3.10.19 of \cite{dubrovin2001normal} that the left-hand side of \eqref{ao}, viewed as a polynomial in $\frac{1}{\ql-u^1},\dots,\frac{1}{\ql-u^N}$, has the leading term
	\[
		\sum_i \frac{(u^{i,1})^{3g-2}}{(\ql-u^i)^{3g-1}}\left(-(3g-2)!+\sum_{k=1}^{3g-2}\binom{3g-2}{k}A_{k-1,3g-1-k}\right)\diff{F_g}{u^{i,3g-2}}.
		\]
	Similarly, it follows from Proposition~\ref{bf} that the right-hand side, viewed as a polynomial in $\frac{1}{\ql-u^1},\dots,\frac{1}{\ql-u^N}$, has the leading term
		\begin{align*}
			&\frac{1}{2}\sum_i\frac{(u^{i,1})^{3g-2}}{(\ql-u^i)^{3g-1}}\sum_{m=1}^{g-1}A_{3m-1,3g-3m-1}\frac{1}{\Psi_{i1}^2}\diff{F_{m}}{u^{i,3m-2}}\diff{F_{g-m}}{u^{i,3g-3m-2}}\\
			+&\frac{1}{2}\sum_i\frac{(u^{i,1})^{3g-2}}{(\ql-u^i)^{3g-1}}\sum_{\substack{k,\ell\geq 1\\ k+\ell = 3g-4}}A_{k+1,\ell+1}\frac{1}{\Psi_{i1}^2}\frac{\qp^2 F_{g-1}}{\qp u^{i,k}\qp u^{i,\ell}}-\frac{1}{16}\sum_i\frac{1}{\Psi_{i1}^2} \frac{(3g-2)! (u^{i,1})^{3g-3}}{(\ql-u^i)^{3g-1}}\diff{F_{g-1}}{u^{i,3g-5}}.
		\end{align*}
Now by using \eqref{am}, we have for $g\geq 1$
\begin{align*}
	\diff{F_g}{u^{i,3g-2}} &= \varphi_{g;i}\diff{F_g^{KdV}(u^i)}{u^{i,3g-2}},\\
	\frac{\qp^2F_g}{\qp u^{i,k}\qp u^{j,\ell}} &= \qd_{ij}\varphi_{g;i}\frac{\qp^2F_g^{KdV}(u^i)}{\qp u^{i,k}\qp u^{i,\ell}},\quad k+\ell = 3g-1.
\end{align*}
Using above identities, as well as the recursion relation \eqref{av} and the induction hypothesis $\varphi_{m;i} = (f_i(u))^{1-m}$ for $m\leq g-1$, we have
\[
\varphi_{g;i} = \frac{(f_i(u))^{2-g}}{\Psi_{i1}^2} =(f_i(u))^{1-g}. 
	\]
The proposition is proved.
\end{proof}

Now we are fully armed to present the proof of Theorem~\ref{ae}. 

\begin{proof}[Proof of Theorem~\ref{ae}]

First, we use Proposition~\ref{prop:decomposition-Givental}, which states that we indeed 
have the general shape of the decomposition~\eqref{ah} --- it is given in Equation~\eqref{am}, with the unknown functions $\varphi_{g;i}$ and $H_g$. Note, however, that Proposition~\ref{prop:decomposition-Givental} implies property~\eqref{eq:Property-H-g} demanded for $H_g$ in~\eqref{eq:RestrictionOnH}. 

Thus, the only missing bit to pass from Equation~\eqref{am} to Equation~\eqref{ah} is the explicit computation of the coefficients $\varphi_{g;i}$. This is done in Proposition~\ref{prop:functions-varphi}, where we show that 
$\varphi_{g;i}(u) = f_i(u)^{1-g}$. The theorem is proved. 
\end{proof}

	\section{Universal identities of free energy functions}
	The goal of this section is to present a set of operators $O_{\{\alpha_1,a_1;\dots,\alpha_n,a_n\}}$ of order $n$ in the time variables that are going to be used to derive the universal relations. The parameters here are subject to the following condition: $a_i \geq 2$,  $i=1,\dots,n$. 
The main reason to introduce these operators is the following list of properties:
\begin{theorem} \label{thm:properties-of-operators-O}
We have:
	\begin{enumerate}
		\item The operators $O_{\{\alpha_1,a_1;\dots,\alpha_n,a_n\}}$ commute with $v^{\beta,0}\cdot $ and $v^{\beta,1}\cdot$ (that is, with the operators of multiplication by $v^{\beta,0}$ and $v^{\beta,1}$). 
		\item Let $m \geq 1$; $b_i\geq 2$, $i=1,\dots,m$, and 
		$\sum_{i=1}^m b_i \leq \sum_{i=1}^n a_i -1$.
		Then 
		\begin{align}
			O_{\{\alpha_1,a_1;\dots,\alpha_n,a_n\}} \big( v^{\beta_1,b_1}\cdots v^{\beta_m,b_m} \big) = 0.
		\end{align}  
		\item Let $m\geq 1$; $b_i\geq 2$, $i=1,\dots,m$, and $\sum_{i=1}^m b_i=\sum_{i=1}^n a_i$. Then if $m\geq n$, we have
		\begin{align}
			O_{\{\alpha_1,a_1;\dots,\alpha_n,a_n\}} \big( v^{\beta_1,b_1}\cdots v^{\beta_m,b_m} \big) = \delta_{nm} \sum_{\sigma\in S_n} \prod_{i=1}^n \delta_{a_ib_{\sigma(i)}} \prod_{i=1}^n \prod_{j=0}^{a_i} \partial_x 
			\ddd{\qt^{\gamma_{i,j}}_0\qt_{\gamma_{i,j+1},0}},
		\end{align}  
		where $\gamma_{i,0} = \beta_{\sigma(i)}$ and $\gamma_{i,a_i} = \alpha_i$. 
	\end{enumerate}
\end{theorem}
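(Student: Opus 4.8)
The plan is to deduce all three statements from a single structural description of $O_{\{\alpha_1,a_1;\dots,\alpha_n,a_n\}}$ after rewriting it in the jet variables $v^{\qg,s}$. Concretely, I would first prove the normal-ordered form
\begin{equation*}
	O_{\{\alpha_1,a_1;\dots,\alpha_n,a_n\}} = \sum_{\qg_1,\dots,\qg_n}\ \sum_{s_1\geq a_1,\dots,s_n\geq a_n} c^{\qg_1,\dots,\qg_n}_{s_1,\dots,s_n}\,\frac{\qp}{\qp v^{\qg_1,s_1}}\cdots\frac{\qp}{\qp v^{\qg_n,s_n}},
\end{equation*}
where the coefficients $c$ are differential polynomials in the genus-zero two-point correlators and, crucially, every derivative slot has jet order $s_i\geq a_i$. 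In addition, the minimal slot, i.e.\ the coefficient of $\prod_i \qp/\qp v^{\qg_i,a_i}$, is exactly the chain factor $\prod_{i=1}^n\prod_{j=0}^{a_i}\qp_x\ddd{\qt^{\qg_{i,j}}_0\qt_{\qg_{i,j+1},0}}$ appearing in statement~(3), with $\qg_{i,0}=\qg_i$ and $\qg_{i,a_i}=\alpha_i$.

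I would establish this structural form by induction, using the genus-zero topological recursion relation to convert the time derivatives entering the explicit definition of the operators into jet derivatives. The basic conversion is the chain rule
\begin{equation*}
	\frac{\qp}{\qp t^{\qa,a}} = \sum_{\qg,s}\kk{\qp_x^{\,s+1}\ddd{\qt_{\qa,a}\qt^\qg_0}}\frac{\qp}{\qp v^{\qg,s}},
\end{equation*}
which produces a leading slot of jet order $a$ together with slots of lower jet order; the correction terms in the definition of $O$ are tailored to cancel every slot of order $<a_i$. The genus-zero recursion $\qp_x\ddd{\qt_{\qa,a}\qt_{\qg,0}}=\ddd{\qt_{\qa,a-1}\qt_{\mu,0}}\,\qp_x\ddd{\qt^\mu_0\qt_{\qg,0}}$, iterated together with the Leibniz rule for $\qp_x$, is what assembles the chain of two-point correlators in the coefficient of the minimal slot.

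Granting the structural lemma, the three statements follow. For~(1), every slot has $s_i\geq a_i\geq 2$, so in $[O,v^{\qb,0}\cdot]$ and $[O,v^{\qb,1}\cdot]$ the Leibniz contributions are proportional to $\delta_{s_i,0}$ and $\delta_{s_i,1}$ respectively and hence vanish; since the derivatives stand to the right of the multiplication coefficients $c$, no further terms arise. For~(2) and~(3) I apply $O$ to $v^{\qb_1,b_1}\cdots v^{\qb_m,b_m}$; as each factor is linear, the $n$ first-order slots must act through $n$ distinct factors, so a nonvanishing term requires an injection $\pi\colon\{1,\dots,n\}\hookrightarrow\{1,\dots,m\}$ with $s_i=b_{\pi(i)}$, whence $b_{\pi(i)}=s_i\geq a_i$ and $\sum_j b_j\geq\sum_i b_{\pi(i)}\geq\sum_i a_i$. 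When $\sum_j b_j\leq\sum_i a_i-1$ this is impossible, proving~(2). When $\sum_j b_j=\sum_i a_i$ both inequalities become equalities: $\pi$ is surjective, forcing $m=n$ and the factor $\delta_{nm}$, and $b_{\pi(i)}=a_i$ for every $i$, which selects the minimal slot whose coefficient is the chain factor. Writing $\sigma=\pi$ and summing over all admissible $\sigma\in S_n$ reproduces the right-hand side of~(3).

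The genuine obstacle is the structural lemma itself, in particular the two claims that the defining corrections annihilate all jet slots of order below $a_i$ and that the minimal slot carries precisely the chain-factor coefficient. Both require controlling the combinatorics of the genus-zero topological recursion relation uniformly in $n$, and I expect the cleanest route to be an inductive construction of the operators in which each unit decrease of some $a_i$ is matched by one application of the recursion, so that the cancellation of the unwanted low-order slots and the build-up of the correlator chain proceed in lockstep. Once the lemma is in place, the degree count above shows that slots of order strictly above $a_i$ never interfere, and the remaining verification is routine bookkeeping.
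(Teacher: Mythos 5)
Your deduction of (1)--(3) from your ``structural lemma'' is clean, but the lemma itself is false, and this is not a detail of exposition: the operators $O_{\{\qa_1,a_1;\dots;\qa_n,a_n\}}$ are \emph{not} purely of order $n$ in the jet variables. Already for $n=2$, write $O_{\{\qa_1,a_1;\qa_2,a_2\}}=\nordbullet O_{\qa_1,a_1}O_{\qa_2,a_2}\nordbullet-\tfrac{(a_1+a_2)!}{a_1!a_2!}\ddd{\qt_{\qa_1,0}\qt_{\qa_2,0}\qt^{\qa}_0}O_{\qa,a_1+a_2-1}$ in jet coordinates as in Remark~\ref{revre}: the normal-ordered product produces a genuine first-order part in the jet variables, with coefficients supported on slots as low as $s\geq a_1-1$, while the correction term is first order with slots $s\geq a_1+a_2-1$. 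The theorem (via Corollary~\ref{cor:Thm-Ooper-m1}) forces the \emph{combined} first-order coefficients to vanish only for $s\leq a_1+a_2-1$; for $s\geq a_1+a_2$ they survive, i.e.\ $O_{\{\qa_1,a_1;\qa_2,a_2\}}(v^{\qb,a_1+a_2})\neq 0$ in general. This is exactly why statement (3) carries the hypothesis $m\geq n$ (under your lemma the case $m<n$ would hold automatically with vanishing left-hand side), and why the paper's proof of the $m\geq 2$ cases (Corollary~\ref{cor:OperO-mgeq2}) only uses the weaker dichotomy ``$O_{\{\qa_i,a_i\}_{i\in I_j}}(v^{\qb_j,b_j})=0$ unless $b_j\geq\sum_{i\in I_j}a_i$'' over groupings $I_1\sqcup\dots\sqcup I_m=\{1,\dots,n\}$, coming from the factorization Lemma~\ref{lem:FactorizationO}, rather than any slot-by-slot matching. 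The correct structural statement replaces your injections by such groupings; your counting argument does survive this correction, so the passage from structure to (1)--(3) is repairable.

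The deeper gap is that the true part of your lemma --- the vanishing of all low-slot coefficients --- is essentially the entire content of the theorem (the $m=1$ case), and your proposal does not prove it. You assert that the corrections in the definition of $O_{\{\dots\}}$ are ``tailored to cancel every slot of order $<a_i$'' and that this follows from an induction on applications of the genus-zero TRR, but no mechanism for the cancellation is exhibited, and it cannot come from degree counting alone: dimensional constraints give vanishing only in the range $b\leq\sum_{i=1}^n a_i-n$ (Remark~\ref{rem:dimension-vanishing}), whereas the claim needs the much stronger range $b\leq\sum_{i=1}^n a_i-1$. The paper obtains this stronger range by identifying $O_{\{\qa_1,a_1;\dots;\qa_n,a_n\}}v^{\qb,b}$ with the lift of an explicit tautological class $B_{a_1,\dots,a_n}\in R^{\sum_{i=1}^n a_i}(\oM_{0,n+b+2})$ and invoking the nontrivial vanishing theorem of Buryak--Shadrin (Lemma~\ref{lem:FirstVanishing}); the sum over all stable rooted trees in the definition of the operators exists precisely so that this identification holds. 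Unless your TRR induction secretly reproves that tautological relation --- which is the actual mathematical difficulty here --- the core of the theorem is assumed rather than proved.
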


Below we introduce the construction of these operators and, after we discuss their properties, we prove all statements of Theorem~\ref{thm:properties-of-operators-O} in Corollaries~\ref{cor:Thm-Ooper-m1},~\ref{cor:OperO-commute01}, and~\ref{cor:OperO-mgeq2}. As an application, we derive some universal identities and prove Theorem~\ref{af}.
	\label{ad}
	
\subsection{A useful set of operators} 
	\label{bq}
We start by explaining the notations used for defining the operators $O_{\{\alpha_1,a_1;\dots,\alpha_n,a_n\}}$.
	\subsubsection{Basic notation for trees}
	Let $RT_{n}$ be the set of stable rooted trees with $n$ legs $\sigma_1,\dots,\sigma_n$. We demand that the index of each vertex except for the root is at least $3$. For a $T\in RT_{n}$ we use the following notation:
	\begin{itemize}
		\item $H(T)$ is the set of half-edges of $T$.
		\item $L(T)$ is the set of  legs of $T$.
		\item $H_e(T)\coloneqq H(T)\setminus L(T)$.
		\item $\iota\colon H_e(T)\to H_e(T)$ is the involution that interchanges the half-edges that form an edge.
		\item $E(T)$ is the set of edges of $T$, $E\cong H_e(T)/\iota$.
		\item $H_+(T)\subset H(T)$ is the set of the so-called ``positive'' half-edges that consists of all  legs of $T$ and of half-edges in $H(T)\setminus L(T)$ directed away from the root at the vertices where they are attached,
		$H_+(T)\cong E(T)\cup L(T)$; 
		\item $H_-(T)\subset H(T)$ is the set of the so-called ``negative'' half-edges that consists of all half-edges in $H(T)\setminus L(T)$ directed towards the root at the vertices where they are attached, $H_-(T)\cong E(T)$;
		\item $V(T),V_{nr}(T)$ are the sets of vertices and non-root vertices of $T$. 
		\item $v_r\in V(T)$ is the root vertex of $T$; $V(T)=\{v_r(T)\}\sqcup V_{nr}(T)$.
		\item For a $v\in V(T)$, $H_+(v)$ is the set of all positive half-edges attached to $v$. 
		\item For a $v\in V_{nr}(T)$ let $H_-(v)$ be the negative half-edge attached to $v$.
		\item We say that a vertex or a (half-)edge $x$ is a descendant of a vertex or a (half-)edge $y$ if $y$ is on the unique path connecting $x$ to $v_r$. 
		\item For an $h\in H_+(T)$ let $DL(h)$ be the set of all legs that are descendants to $h$, including $h$ itself. Note that $DL(h)\subseteq L(T)$ for any $h\in H_+(T)$ and $DL(l)=\{l\}$ for $l\in L(T)$. 
		\item For an $h\in H_+(T)$ let $DH(h)$ be the set of all positive half-edges that are descendants to $h$, \emph{excluding} $h$. For instance, for $l\in L(T)$ we have $DH(l) = \emptyset$, and for $h\in H_+(T)\setminus L(T)$ we have $DH(h) \supseteq DL(h)$. 
	\end{itemize}
	
	In the pictures it is convenient to arrange the half-edges at each vertex such that the negative half-edges are directed to the right and the positive half-edges are directed to the left (this convention is opposite to the one used in the similar structures in~\cite{buryak2022conjectural,buryak2022tautological}, but it is more suitable for the purpose of defining the differential operators).
	 In particular, the root vertex is the rightmost vertex on the pictures. Here is an example of a stable rooted tree in $\RT_{4}$ placed on the plane following this convention:
	\begin{gather*} 
		\vcenter{\xymatrix@C=15pt@R=5pt{
				& & & & & & \\
				& & & *+[o][F-]{{}} \ar@{-}[lu]*{{}_{\,\,\sigma_1}} \ar@{-}[lld] & & & \\ 
				& *+[o][F-]{{}}  \ar@{-}[ld]*{{}_{\,\,\sigma_2}} \ar@{-}[lu]*{{}_{\,\,\sigma_4}} & &  & & *+[o][F-]{{}} \ar@{-}[llu]\ar@{-}[ld]*{{}_{\,\sigma_3}}  & \\
				& & & & & &}}\,.
	\end{gather*}
	
	Let $T\in RT_n$. Introduce an extra function $q\colon H_+(T)\to \ZZ_{\geq 0}$ such that 
	\begin{align}
	\sum_{h\in H_+(T)} q(h) + |E(T)| =\sum_{i=1}^n a_i
	\end{align}
	and at each vertex $v\in V_{nr}(T)$ we have 
	\begin{align}
		|H_+(v)|-2 \geq \sum_{h\in H_+(v)} q(h). 
	\end{align}
	Let $Q(T,\sum_{i=1}^n a_i)$ denote the set of such functions. We also associate to each $h\in H(T)$ an index $\alpha(h)$ such that $\alpha(\sigma_i) = \alpha_i$.

	\subsubsection{Eguchi-Xiong operators and their generalizations}
Let $O_{\beta,p}$ be the differential operators of Eguchi-Xiong~\cite{eguchi1998quantum} (see also~\cite{buryak2012polynomial,liu2002quantum}) defined inductively by the following rules:
\begin{align}
	\label{bo}
	O_{\beta,0}& \coloneqq \frac{\partial}{\partial t^{\beta,0}} ;\\ 
	\label{bp}
	O_{\beta,p}& \coloneqq \frac{\partial}{\partial t^{\beta,p}} - \sum_{k=0}^{p-1}  \ddd{\qt^{\gamma}_0\qt_{\beta,k}} O_{\qg,p-k-1}.
\end{align}
We have $[O_{\beta,p},O_{\gamma,q}]=0$ if $p,q\geq 1$. 

We associate to a pair $(T,q)$, $T\in RT_n$, $q\in Q(T,\sum_{i=1}^n a_i )$, an operator $O_{(T,q)}$ given by 
\begin{align}
	O_{(T,q)}  \coloneqq & (-1)^{|E(T)|} \prod_{h\in H_-(T)} \eta^{\alpha(h),\alpha(\iota(h))}
	\\
	\notag
	&  
	\frac{1}{\prod_{i=1}^n (a_i+1)!} \prod_{h\in H_+(T)} \Big(\sum_{\ell\in DL(h)}(a(\ell)+1) - \sum_{h'\in DH(h)} (q(h')+1) \Big)_{q(h)+1}
	\\
	\notag
	& 
	\Bigg( \prod_{v\in V_{nr}(T)} 
	\bigg( \nordbullet\prod_{h\in H_+(v)} O_{\alpha(h),q(h)}\nordbullet\, 
	\ddd{\qt_{\alpha(h_-(v)),0}} \bigg) \Bigg) 
	\nordbullet\prod_{h\in H_+(v_r)} O_{\alpha(h),q(h)}\nordbullet
\end{align}
	{Here $a(\ell)=a_i$ for $\ell = \qs_i$ }and $(s)_{t}=s(s-1)\cdots (s-t+1)$ denotes the Pochhammer symbol and we recall that it is assumed that $\sum_{h\in H_+(T)} q(h) + |E(T)| =  \sum_{i=1}^n a_i$. By normal order we just mean that we put differentiations in the vector fields ahead, that is, the vector fields are not allowed to act on the coefficients of each other. 
	
\begin{remark} In the notation of Liu~\cite{liu2002quantum}, the function 
\[\nordbullet\prod_{h\in H_+(v)} O_{\alpha(h),q(h)}\nordbullet\, 
	\ddd{\qt_{\alpha(h_-(v)),0}}\]
can be expressed in the form 
\[
\eta_{\alpha(h_-(v))\alpha(\iota(h_-(v)))}\ddd{ T^0(e^{\alpha(\iota(h_-(v)))}) \prod_{h\in H_+(v)} T^{q(h)}(e_{\alpha(h)}) }.
\]
\end{remark}

\subsubsection{Definition of operators}

\begin{definition}
The operator $O_{\{\alpha_1,a_1;\dots,\alpha_n,a_n\}}$ is defined as
\begin{align}
	O_{\{\alpha_1,a_1;\dots,\alpha_n,a_n\}} \coloneqq \sum_{T\in RT_n} \sum_{q\in Q(T,\sum_{i=1}^n a_i )} O_{(T,q)}
\end{align}
\end{definition}

For instance, 
\begin{align}
        \notag
	O_\emptyset & = \Id; \\ 
        \notag 
	O_{\{\alpha_1,a_1\}} & = O_{\alpha_1,a_1}; 
	\\ 
        \label{aw}
	O_{\{\alpha_1,a_1;\alpha_2,a_2\}} & = \nordbullet O_{\alpha_1,a_1} O_{\alpha_2,a_2} \nordbullet 
	-  \frac{(a_1+a_2)!}{a_1! a_2!} \langle \! \langle  T^{0}(e_{\alpha_1})  T^{0}(e_{\alpha_2}) T^0(e^{\alpha}) \rangle \! \rangle_0 O_{\alpha,a_1+a_2-1}.
\end{align}
\subsection{Action on $v^{\beta,b}$} Consider $O_{\{\alpha_1,a_1;\dots;\alpha_n,a_n\}}v^{\beta,b}$ for $b\leq \sum_{i=1}^n a_i -1$. Note that the usage of the Eugchi-Xiong operators implies that we are dealing with a lift of a tautological relation from the moduli space of curves $\oM_{0,m+b+2}$~\cite{liu2002quantum}. Let us describe this relation. 

For each $T\in RT_n$ let $T_{\sim}$ be the tree with $b+2$ extra labeled legs attached to the root vertex $v_r$. This graph defines a stratum in $\oM_{0,n+b+2}$ and let $\gl_{T_\sim}$ be the boundary map corresponding to this stratum. To fix the notation we choose an order on the legs attached to each vertex $v$ using the map $o_v\colon H_+(v) \to \{1,\dots,|H_+(v)|\}$.

For each $q\in Q(T,\sum_{i=1}^n a_i )$ consider class $B_{(T,q)}$ given by 
\begin{align}
	B_{(T,q)}& \coloneqq \frac{(-1)^{|E(T)|}}{\prod_{i=1}^n (a_i+1)!} \prod_{h\in H_+(T)} \Big(\sum_{l\in DL(h)}(a(l)+1) - \sum_{h'\in DH(h)} (q(h')+1) \Big)_{q(h)+1}
	\\ \notag 
	& \quad (\gl_{T_{\sim}})_* \bigg[ \Big(\prod_{h\in H_+(v_{r})} \psi_{o_{v_r}(h)}^{q(h)}\Big)\Big|_{\oM_{0,|H_+({v_r})|+\chi+2}} \otimes \bigotimes_{v\in V_{nr}(T')} \Big(\prod_{h\in H_+(v)} \psi_{o_v(h)}^{q(h)}\Big)\Big|_{\oM_{0,|H_+(v)|+1}} \bigg]
	\\ \notag
	& \in R^{\sum_{i=1}^n a_i } (\oM_{0,n+b+2}).
\end{align}
Furthermore, define 
\begin{align}
B_{a_1,\dots,a_n}\coloneqq \sum_{T\in RT_n} \sum_{q\in Q(T,\sum_{i=1}^n a_i )} B_{(T,q)}\in R^{\sum_{i=1}^n a_i } (\oM_{0,n+b+2}),
\end{align}
where $R^d(\oM_{0,n+b+2})$, $d\geq 0$, denotes the degree $d$ component of the tautological ring of $\oM_{0,n+b+2}$ (which coincides with the Chow ring in this case).

\begin{lemma}\label{lem:FirstVanishing} For $b\leq \sum_{i=1}^n a_i -1$ we have $B_{a_1,\dots,a_n}=0$.
\end{lemma}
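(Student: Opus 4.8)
The plan is to read the vanishing off the geometry of the boundary strata that carry the classes $B_{(T,q)}$, separating a cheap dimension range from the genuinely cohomological range. First I would fix the bookkeeping. Each $B_{(T,q)}$ is a pushforward along $\gl_{T_\sim}$ of an external product of $\psi$-monomials, one factor $\prod_{h\in H_+(v)}\psi^{q(h)}$ for every vertex $v$. For a non-root vertex the factor lives on $\oM_{0,|H_+(v)|+1}$, of dimension $|H_+(v)|-2$, and the defining inequality of $Q(T,\sum_i a_i)$, namely $|H_+(v)|-2\ge \sum_{h\in H_+(v)}q(h)$, says exactly that this factor carries $\psi$-degree at most its dimension. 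Hence no non-root factor can vanish for degree reasons, and all the ``degree pressure'' is concentrated on the root factor on $\oM_{0,|H_+(v_r)|+b+2}$, of dimension $|H_+(v_r)|+b-1$.

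Second, the easy range. Using $\sum_{h\in H_+(T)}q(h)+|E(T)|=\sum_i a_i$, the identifications $H_+(T)\cong E(T)\sqcup L(T)$ and $|V_{nr}(T)|=|E(T)|$, together with the $Q$-inequalities summed over all non-root vertices, one finds that the $\psi$-degree carried by the root factor is bounded below by $\sum_i a_i-n+|H_+(v_r)|$. When $b\le \sum_i a_i-n$ this lower bound already exceeds the root-factor dimension $|H_+(v_r)|+b-1$, so the root $\psi$-monomial sits strictly above the dimension of its moduli space and $B_{(T,q)}=0$ for every admissible $(T,q)$; equivalently the total degree $\sum_i a_i$ exceeds $\dim\oM_{0,n+b+2}=n+b-1$. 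This disposes of the range $b\le\sum_i a_i-n$, which in particular is the entire statement when $n=1$.

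Third, the critical range $\sum_i a_i-n+1\le b\le \sum_i a_i-1$ (which is empty unless $n\ge 2$) is where the individual $B_{(T,q)}$ need not vanish and a genuine cancellation in the tree sum is required. Here I would exploit that, by the Eguchi--Xiong construction, $B_{a_1,\dots,a_n}$ is by design the lift of a tautological relation on $\oM_{0,n+b+2}$ in the sense of \cite{liu2002quantum}: the Pochhammer factors and the $1/(a_i+1)!$ normalizations are precisely the multiplicities produced by iterating the genus-zero comparison $\psi_\sigma=\pi^*\psi_\sigma+(\text{boundary})$ along the forgetful maps that attach the $b+2$ extra legs at the root. Summing $B_{(T,q)}$ over $T\in RT_n$ and $q\in Q(T,\sum_i a_i)$ should telescope these corrections, and the claim I would aim to establish is that the result reorganizes as a pushforward of a class in which a single factor carries $\psi$-degree strictly above its dimension exactly when $\sum_i a_i\ge b+1$; this would reduce the vanishing to the same dimension mechanism as the $n=1$ case. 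Concretely I would run an induction on $n$ (and on $\sum_i a_i$), peeling off one leg and using the recursion \eqref{bp} for $O_{\beta,p}$: each application either lowers a descendant $\psi$-power through an edge or produces a new edge, mirroring the genus-zero topological recursion relation, so that the statement for $n$ follows from the statements for smaller trees.

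I expect the main obstacle to be precisely this third step: controlling the Pochhammer coefficients so that the signed sum $(-1)^{|E(T)|}$ over trees cancels and exhibits the claimed dimension-exceeding factor. The delicate point is that, in contrast with the easy range, no single term vanishes, so one must track the full combinatorics of the decorations $q$ and of the gluing maps $\gl_{T_\sim}$; verifying that the telescoping is exact --- equivalently, that the Eguchi--Xiong operator combination annihilates $v^{\beta,b}$ already at the level of formal power series --- is the computational heart of the argument, whereas the reformulation and the dimension range are routine.
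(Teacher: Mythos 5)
Your proposal splits the statement into two ranges, and only one of them is actually proved. The easy range $b\leq \sum_{i=1}^n a_i - n$ is handled correctly by the dimension count (the class lives in $R^{\sum a_i}(\oM_{0,n+b+2})$ while $\dim\oM_{0,n+b+2}=n+b-1$); this is exactly what the paper records separately as Remark~\ref{rem:dimension-vanishing}, and you are right that it covers the whole lemma when $n=1$. But the lemma's actual content beyond that remark is precisely the critical range $\sum_i a_i - n + 1 \leq b \leq \sum_i a_i - 1$ for $n\geq 2$, where no individual $B_{(T,q)}$ vanishes and a genuine cancellation across the tree sum is needed. For that range you offer only a plan --- ``the claim I would aim to establish is that the result reorganizes as a pushforward\dots'', an induction ``peeling off one leg'', and an explicit admission that controlling the signs $(-1)^{|E(T)|}$ and the Pochhammer coefficients so that the telescoping is exact is an unresolved ``computational heart''. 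That is a genuine gap: the step you defer is not a routine verification but a nontrivial family of tautological relations in genus zero, and nothing in your sketch establishes it.

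The paper does not attempt this computation at all: its proof is a one-line citation to Theorem 3.4 in combination with Theorem 2.3 of~\cite{buryak2022tautological}, where the classes $B_{a_1,\dots,a_n}$ are studied and the vanishing $B_{a_1,\dots,a_n}=0$ for $\sum_i a_i \geq b+1$ is proved. So where the paper outsources the hard cancellation to an external result, you propose to reprove it from scratch by a telescoping argument whose convergence you do not verify. If you want a self-contained proof along your lines, you would need to actually carry out the inductive bookkeeping of the decorations $q$, the gluing maps $\gl_{T_\sim}$, and the Pochhammer multiplicities --- which is essentially redoing the work of~\cite{buryak2022tautological} --- or else replace your third step by the citation, as the paper does.
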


\begin{proof} For any $b$, $a_1,\dots,a_n$, we have $B_{a_1,\dots,a_n}=0$ for $a_1+\cdots+a_n\geq b+1$, see~\cite[Theorem 3.4 in combination with Theorem 2.3]{buryak2022tautological}. 
\end{proof}

\begin{remark} \label{rem:dimension-vanishing}
	Note also that $\dim\oM_{0,n+b+2}=n+b-1$, so the statement of Lemma~\ref{lem:FirstVanishing} holds for dimensional reasons for $\sum_{i=1}^n a_i \geq n+b$, that is, for $b\leq \sum_{i=1}^n a_i -n$. 
\end{remark}

An immediate corollary of Lemma~\ref{lem:FirstVanishing} is

\begin{corollary} \label{cor:Thm-Ooper-m1}  The statements (2) and (3) of Theorem~\ref{thm:properties-of-operators-O} hold for $m=1$. That is, 
\begin{align}
	O_{\{\alpha_1,a_1;\dots;\alpha_n,a_n\}}v^{\beta,b} = 0
\end{align}	
	for $b\leq \sum_{i=1}^n a_i -1$, $n\geq 1$, and 
\begin{align}
O_{\alpha,a}v^{\beta,b} = \prod_{j=0}^{a} \partial_x 	\ddd{\qt^{\gamma_{j}}_0\qt_{\gamma_{j+1},0}}
\end{align}	
 with $\gamma_0=\beta$ and $\gamma_a=\alpha$ for $b=a$. 
\end{corollary}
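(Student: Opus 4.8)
The plan is to exploit the deliberate parallelism between the operators $O_{(T,q)}$ and the tautological classes $B_{(T,q)}$: their definitions share the same combinatorial skeleton (the sum over $(T,q)$, the trees in $RT_n$), the same $\eta$-contractions over $H_-(T)$, and the same numerical prefactor built from $1/\prod_i(a_i+1)!$ and the Pochhammer symbols. The first and main step is to make precise that applying $O_{\{\alpha_1,a_1;\dots,\alpha_n,a_n\}}$ to $v^{\beta,b}=\ddd{\qt^\beta_0(\qt_{1,0})^{b+1}}$ computes exactly the genus-zero evaluation of the class $B_{a_1,\dots,a_n}\in R^{\sum_{i=1}^n a_i}(\oM_{0,n+b+2})$ against the $b+2$ marked points carried by $v^{\beta,b}$ together with the $n$ legs labelled by $\alpha_1,\dots,\alpha_n$. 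The mechanism is Liu's lifting formalism~\cite{liu2002quantum}: each Eguchi--Xiong factor $O_{\alpha(h),q(h)}$, via its inductive definition~\eqref{bp}, acts on genus-zero correlators as the genus-zero topological recursion relation, reproducing the insertion of $\psi_{o_v(h)}^{q(h)}$ at the corresponding half-edge, while the normal ordering distributes these insertions across the vertices and the $\eta^{\alpha(h),\alpha(\iota(h))}$-factors realize the boundary gluing $\gl_{T_\sim}$. Since the numerical weights already coincide term-by-term, summing over $(T,q)$ identifies $O_{\{\alpha_1,a_1;\dots,\alpha_n,a_n\}}v^{\beta,b}$ with the genus-zero lift of $B_{a_1,\dots,a_n}$.

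Granting this identification, the vanishing statement is immediate: for $b\leq\sum_{i=1}^n a_i-1$ Lemma~\ref{lem:FirstVanishing} gives $B_{a_1,\dots,a_n}=0$ in the (tautological, hence Chow) ring of $\oM_{0,n+b+2}$, so its lift vanishes and hence $O_{\{\alpha_1,a_1;\dots,\alpha_n,a_n\}}v^{\beta,b}=0$. This covers statement~(2) of Theorem~\ref{thm:properties-of-operators-O} for $m=1$ and all $n\geq 1$.

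For the explicit formula (statement~(3) for $m=1$, where the hypothesis $m\geq n$ forces $n=1$ and $b=a$), I would argue by a direct computation rather than through the lemma, since here $b=\sum_{i=1}^n a_i$ falls outside its range. The set $RT_1$ consists only of the single-vertex tree; the constraint defining $Q(T,a_1)$ forces $q(\sigma_1)=a_1=a$, and the Pochhammer prefactor collapses to $(a+1)_{a+1}/(a+1)!=1$, so the operator reduces to the bare Eguchi--Xiong operator $O_{\alpha,a}$. Unfolding the recursion~\eqref{bp} and applying the genus-zero topological recursion relation to $v^{\beta,a}$, each reduction of the descendant level splits off, by the Leibniz rule, a product of two correlators; the ``diagonal'' piece of this split is precisely cancelled by the subtraction term $-\ddd{\qt^\gamma_0\qt_{\beta,k}}O_{\gamma,p-k-1}$ in~\eqref{bp}, leaving a single surviving factor of the form $\partial_x\ddd{\qt^{\gamma_j}_0\qt_{\gamma_{j+1},0}}$. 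Iterating these reductions telescopes into the claimed product over $j$, with the chain of intermediate indices $\gamma_j$ contracted by $\eta$, starting at $\gamma_0=\beta$ and terminating at $\alpha$.

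The main obstacle is the first paragraph, namely pinning down the term-by-term correspondence so that the normal-ordered genus-zero action of the Eguchi--Xiong operators reproduces the pushforwards of $\psi$-monomials along $\gl_{T_\sim}$ \emph{with exactly matching coefficients}; this is where the Pochhammer weights attached to each half-edge must be matched against the $\psi$-powers and the tree combinatorics. Once this dictionary is in place the corollary is genuinely immediate, the two regimes differing only in whether Lemma~\ref{lem:FirstVanishing} applies or the surviving top-degree class is evaluated by the explicit telescoping above.
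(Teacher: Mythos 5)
Your proposal is correct and takes essentially the same route as the paper: the vanishing statement is obtained by identifying the action of $O_{\{\alpha_1,a_1;\dots;\alpha_n,a_n\}}$ on $v^{\beta,b}$ with the lift of the class $B_{a_1,\dots,a_n}$ (the standard conversion of tautological relations into PDEs for descendant potentials, which the paper delegates to \cite{faber2010tautological} and \cite{liu2002quantum}) combined with Lemma~\ref{lem:FirstVanishing}, and the $b=a$, $n=1$ formula by iterative application of the genus-zero topological recursion relation. The only difference is one of exposition: you spell out the operator--class dictionary (Pochhammer weights, $\eta$-contractions as gluing maps) that the paper simply cites.
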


\begin{proof} The first statement that $O_{\{\alpha_1,a_1;\dots;\alpha_n,a_n\}}v^{\beta,b} = 0$ for $b\leq \sum_{i=1}^n a_i -1$, $n\geq 2$ from Lemma~\ref{lem:FirstVanishing}. Using this lemma and the standard conversion of tautological relations into the PDEs for the descendant potentials, see e.~g.~\cite[Sec.~2.1.3]{faber2010tautological} or~\cite{liu2002quantum}, we obtain the desired vanishing. 

The second statement is merely an exercise on iterative application of the topological recursion relation in genus $0$~\cite{witten1990two}. 
\end{proof}

\begin{remark} Lemma~\ref{lem:FirstVanishing} and Corollary~\ref{cor:Thm-Ooper-m1} explain, in particular, the necessity of so involved definition of the operators $O_{\{\alpha_1,a_1;\dots;\alpha_n,a_n\}}$. For instance, one could try to use simpler operators $\tilde O_{\{\alpha_1,a_1;\dots;\alpha_n,a_n\}} = \nordbullet\prod_{i=1}^n O_{\alpha_i,a_i}\nordbullet$ instead. But then the only vanishing property would be the dimensional vanishing as in Remark~\ref{rem:dimension-vanishing}, which is a much weaker property. A more detailed discussion is given in Remark \ref{revre}.
\end{remark}

\subsection{The iterative structure of operators} Consider  $O_{\{\alpha_1,a_1;\dots;\alpha_n,a_n\}}(f_1f_2)$, where $f_1,f_2$ are some functions in  jet variables $v^{\beta,b}$.

\begin{lemma} \label{lem:FactorizationO} We have 
\begin{align}
	O_{\{\alpha_1,a_1;\dots;\alpha_n,a_n\}}(f_1f_2) = \sum_{\substack{I_1\sqcup I_2 
			= \{1,\dots,n\}}	} O_{\{\alpha_i,a_i\}_{i\in I_1}}(f_1) O_{\{\alpha_i,a_i\}_{i\in I_2}}(f_2) 
\end{align}	
\end{lemma}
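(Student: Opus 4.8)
The plan is to reduce everything to the single observation that, as a differential operator in the jet variables, each $O_{(T,q)}$ factors as a coefficient function times one normal-ordered product of first-order operators sitting at the root, and that such a product obeys a Leibniz rule. First I would record that every Eguchi--Xiong operator $O_{\beta,p}$ is a vector field: this follows by induction on $p$ from \eqref{bo}--\eqref{bp}, since $\ddd{\qt^{\gamma}_0\qt_{\beta,k}}$ is a function and $O_{\gamma,p-k-1}$ is first order. Hence for any family of half-edges the normal-ordered product is a differential operator with factorized symbol, and
\[
\nordbullet\!\!\prod_{h\in H_+(v_r)}\!\!O_{\alpha(h),q(h)}\nordbullet(f_1f_2)=\sum_{S\subseteq H_+(v_r)}\Big(\nordbullet\!\prod_{h\in S}\!O_{\alpha(h),q(h)}\nordbullet f_1\Big)\Big(\nordbullet\!\!\prod_{h\in H_+(v_r)\setminus S}\!\!O_{\alpha(h),q(h)}\nordbullet f_2\Big).
\]
Because the non-root vertices in the definition of $O_{(T,q)}$ contribute only the coefficient functions $\nordbullet\prod_{h\in H_+(v)}O_{\alpha(h),q(h)}\nordbullet\ddd{\qt_{\alpha(h_-(v)),0}}$, which stand to the left of the root operator and are not differentiated, applying $O_{(T,q)}$ to $f_1f_2$ is exactly this Leibniz expansion with those coefficients carried along.

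Next, for a fixed $S\subseteq H_+(v_r)$ I would split $T$ into the rooted tree $T_1$ formed by $v_r$, the half-edges in $S$, and the subtrees below them, and the complementary tree $T_2$, inducing a partition $I_1\sqcup I_2=\{1,\dots,n\}$ of the legs. Then $E(T)=E(T_1)\sqcup E(T_2)$, $V_{nr}(T)=V_{nr}(T_1)\sqcup V_{nr}(T_2)$, $H_-(T)=H_-(T_1)\sqcup H_-(T_2)$, and $H_+(T)=H_+(T_1)\sqcup H_+(T_2)$; moreover the Pochhammer symbol attached to any $h\in H_+(T_1)$ depends only on $DL(h)$ and $DH(h)$, which lie entirely in $T_1$. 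Consequently the sign $(-1)^{|E(T)|}$, the $\eta$-factors, the factorials $\prod_i(a_i+1)!$, the Pochhammer product, and the non-root correlators all split as products over $T_1$ and $T_2$. Combined with the Leibniz expansion this yields the term-by-term identity $O_{(T,q)}(f_1f_2)=\sum_{S}O_{(T_1,q|_{T_1})}(f_1)\,O_{(T_2,q|_{T_2})}(f_2)$, valid for every $(T,q)$.

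Finally I would sum over $(T,q)$ and reorganize the resulting triple sum over $(T,q,S)$ into a sum over partitions $I_1\sqcup I_2$ together with independent pairs $(T_1,q_1)$, $(T_2,q_2)$, the inverse being the gluing of two rooted trees along their roots. The hard part is the bookkeeping of the degree condition: restricting $q\in Q(T,\sum_i a_i)$ to $T_1$ only guarantees the global balance $\big(\sum_{h\in H_+(T_1)}q(h)+|E(T_1)|\big)+\big(\sum_{h\in H_+(T_2)}q(h)+|E(T_2)|\big)=\sum_i a_i$, whereas membership in $Q(T_1,\sum_{i\in I_1}a_i)$ demands the sharper $\sum_{h\in H_+(T_1)}q(h)+|E(T_1)|=\sum_{i\in I_1}a_i$. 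To bridge this I would prove that every configuration failing the sharper balance contributes $0$ because a Pochhammer factor vanishes. Writing $s(h)=\sum_{\ell\in DL(h)}(a(\ell)+1)-\sum_{h'\in DH(h)}(q(h')+1)$ for the base of the Pochhammer symbol at $h$ and setting $\delta(h):=s(h)-q(h)-1$, an induction from the legs inward --- using $a_i\geq 2$ and the vertex inequalities $|H_+(v)|-2\geq\sum_{h\in H_+(v)}q(h)$ --- shows that for any term with all Pochhammers non-zero one has $s(h)>0$ and $\delta(h)\geq 0$ at every positive half-edge; since a direct count gives $\sum_{h\in H_+(v_r)}\delta(h)=0$, each root half-edge in fact satisfies $\delta(h)=0$, which is precisely the sharper per-subtree balance. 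Hence all surviving terms are balanced, the two factors range exactly over $Q(T_1,\sum_{i\in I_1}a_i)$ and $Q(T_2,\sum_{i\in I_2}a_i)$, and the sum collapses to $\sum_{I_1\sqcup I_2}O_{\{\alpha_i,a_i\}_{i\in I_1}}(f_1)\,O_{\{\alpha_i,a_i\}_{i\in I_2}}(f_2)$, which is the claim.
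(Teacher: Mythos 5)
Your proof is correct, and its skeleton is the same as the paper's: since each $O_{\beta,p}$ is a vector field and all tree data other than the root factor enter $O_{(T,q)}$ only as undifferentiated coefficients, the action on $f_1f_2$ expands by the Leibniz rule over subsets $S\subseteq H_+(v_r)$; each tree then splits at the root into two rooted trees $T_1$, $T_2$, with the sign, the $\eta$-factors, the factorials, the Pochhammer symbols and the non-root correlators all factorizing accordingly, and the double sum reassembles into the claimed product. Where you go beyond the paper is the balance bookkeeping, and that extra step is genuinely needed: the paper's one-line justification simply asserts that the restrictions $q|_{H_+(T_i)}$ lie in $Q(T_i,\chi_i)$, but this fails configuration by configuration --- restricting a globally balanced $q$ preserves only the \emph{sum} of the two balances, not each one separately (already for the two-legs-at-the-root tree the assignment $q(\sigma_1)=a_1+a_2$, $q(\sigma_2)=0$ is admissible for $T$ and restricts to unbalanced data); moreover the paper's normalization $\chi_i=\bigl(\sum_{j\in I_i}a_j\bigr)-1$ appears to be a slip, since matching the definition of $O_{\{\alpha_i,a_i\}_{i\in I_i}}$ requires $\chi_i=\sum_{j\in I_i}a_j$. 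Your Pochhammer-vanishing argument closes exactly this gap, and it checks out: writing, as you do, $s(h)$ for the base of the Pochhammer symbol at $h$ and $\delta(h)=s(h)-q(h)-1$, one has $s(\ell)=a(\ell)+1>0$ at legs and the recursion $s(h)=\sum_{h'\in H_+(w)}\delta(h')$ over the child vertex $w$ of any non-leg $h$; since $(s)_{t}=0$ whenever $0\le s\le t-1$, nonvanishing of all Pochhammer factors forces $\delta(h)\ge 0$ inductively from the legs inward (incidentally, you do not actually need $a_i\ge 2$ or the vertex inequalities for this, only nonnegativity of the $a_i$ and $q(h)$), while the global balance gives $\sum_{h\in H_+(v_r)}\delta(h)=0$, hence $\delta\equiv 0$ on root half-edges, which is precisely balancedness of each restriction. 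Thus every term of the Leibniz expansion either vanishes identically or corresponds bijectively, with matching coefficients, to a product term on the right-hand side. In short: same route as the paper, but your write-up supplies the justification for a rearrangement that the paper's proof takes for granted.
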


\begin{proof} This identity follows directly from the structure of the operators. {Indeed, for $T\in RT_n$ and  $q\in Q(T,\sum_{i=1}^n a_i)$, each factor of the vector fields in the product $\nordbullet\prod_{h\in H_+(v_r)} O_{\alpha(h),q(h)}\nordbullet$ in $O_{(T,p)}$ acts on $f_1f_2$ by the Leibniz rule.} This splits $H_+(v_r)$ into two subsets, $H_+(v_r)=H_+(v_r)_1\sqcup H_+(v_r)_2$ such that vector fields $O_{\alpha(h),q(h)}$ with $h\in H_+(v_r)_i$ are applied to $f_i$, $i=1,2$. 
	
Let $T_1$ and $T_2$, respectively, be the trees obtained by contracting to the root vertex the full subtree descending to $H_+(v_r)_2$ and $H_+(v_r)_1$, respectively. 
Let $\sqcup_{h \in H_+(v_r)_i} DL(h)=\{\sigma_j\}_{j\in I_i}$, $i=1,2$. 
Then $T_i\in RT_{|I_i|}$ (with the legs labeled by $\sigma_j$, $j\in I_i$), and $q_i\coloneqq q|_{H_+(T_i)}\in Q(T_i,\chi_i)$, $i=1,2$, where $\chi_i=(\sum_{j\in I_i} a_j)-1$. This allows to rearrange the sum
\begin{align}
	\sum_{T\in RT_n} \sum_{q\in Q(T,\chi)} O_{(T,q)} (f_1f_2)
\end{align}
as 
\begin{align}
	\sum_{\substack{I_1\sqcup I_2 
			= \{1,\dots,n\}}	} 
	\bigg(\sum_{T_1\in RT_{|I_1|}} \sum_{q_1\in Q(T_1,\chi_1)} O_{(T_1,q_1)} (f_1)\bigg)
	\bigg(\sum_{T_2\in RT_{|I_2|}} \sum_{q_2\in Q(T_2,\chi_2)} O_{(T_2,q_2)} (f_2)\bigg),
\end{align}
which implies the statement of the lemma. 
\end{proof}

An immediate corollary of this lemma is the following statement:
\begin{corollary} \label{cor:LeibnizRule-O-multi}
For any functions $f_1,\dots,f_m$ in jet variables $v^{\beta,b}$ we have:
	\begin{align}
		\label{bk}
		O_{\{\alpha_1,a_1;\dots;\alpha_n,a_n\}}(f_1\cdots f_m) = \sum_{\substack{I_1\sqcup \cdots \sqcup I_m  \\ 
				= \{1,\dots,n\}}	} \prod_{j=1}^m O_{\{\alpha_i,a_i\}_{i\in I_j}}(f_i).
	\end{align}	
\end{corollary}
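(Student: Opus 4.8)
The plan is to deduce Corollary~\ref{cor:LeibnizRule-O-multi} from the binary factorization established in Lemma~\ref{lem:FactorizationO} by a routine induction on the number $m$ of factors. The base case $m=1$ is trivial (and $m=2$ is exactly Lemma~\ref{lem:FactorizationO}), so it suffices to carry out the inductive step, assuming the formula for products of $m-1$ functions.

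First I would group the factors as $f_1\cdots f_m = (f_1\cdots f_{m-1})\cdot f_m$ and apply Lemma~\ref{lem:FactorizationO} to this product of two functions, obtaining
\[
O_{\{\alpha_1,a_1;\dots;\alpha_n,a_n\}}(f_1\cdots f_m) = \sum_{J_1\sqcup J_2 = \{1,\dots,n\}} O_{\{\alpha_i,a_i\}_{i\in J_1}}(f_1\cdots f_{m-1})\, O_{\{\alpha_i,a_i\}_{i\in J_2}}(f_m).
\]
Next I would apply the induction hypothesis to each factor $O_{\{\alpha_i,a_i\}_{i\in J_1}}(f_1\cdots f_{m-1})$, expanding it as a sum over ordered partitions $I_1\sqcup\cdots\sqcup I_{m-1} = J_1$ of the product $\prod_{j=1}^{m-1} O_{\{\alpha_i,a_i\}_{i\in I_j}}(f_j)$. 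Substituting this in and setting $I_m \coloneqq J_2$, the nested summation collapses into a single sum over tuples of blocks $(I_1,\dots,I_m)$.

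The only point that needs to be verified is the combinatorial bookkeeping: every ordered partition $I_1\sqcup\cdots\sqcup I_m = \{1,\dots,n\}$ into $m$ (possibly empty) blocks is produced exactly once under the reindexing $J_1 = I_1\sqcup\cdots\sqcup I_{m-1}$, $J_2 = I_m$. Since the sum in Lemma~\ref{lem:FactorizationO} ranges over all decompositions $J_1 \sqcup J_2$, including those with an empty block (where $O_\emptyset = \Id$), this correspondence is a bijection, and the substitution yields exactly the right-hand side of~\eqref{bk}.

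I expect no genuine obstacle here: the entire analytic and structural content sits in Lemma~\ref{lem:FactorizationO}, and the corollary is a purely formal consequence, expressing the coassociativity of the coproduct-type rule satisfied by the operators $O_{\{\alpha_1,a_1;\dots;\alpha_n,a_n\}}$. The induction closes immediately once the bijection of ordered partitions is recorded.
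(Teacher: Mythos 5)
Your proof is correct and follows essentially the same route as the paper: the paper derives Corollary~\ref{cor:LeibnizRule-O-multi} as an immediate consequence of Lemma~\ref{lem:FactorizationO}, which is precisely the iteration of the binary factorization that your induction on $m$ makes explicit. The bookkeeping you record (the bijection between pairs $(J_1\sqcup J_2,\ I_1\sqcup\cdots\sqcup I_{m-1}=J_1)$ and ordered partitions into $m$ possibly empty blocks, with $O_\emptyset=\Id$) is exactly what makes the paper's ``immediate'' claim rigorous.
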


And also now we can proof the first statement of Theorem~\ref{thm:properties-of-operators-O}:
\begin{corollary} \label{cor:OperO-commute01} The operators $O_{\{\alpha_1,a_1;\dots;\alpha_n,a_n\}}$ commute with the operators of multiplication by $v^{\beta,0}$ and $v^{\beta,1}$.
\end{corollary}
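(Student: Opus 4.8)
The plan is to derive Corollary~\ref{cor:OperO-commute01} directly from the multi-factor Leibniz rule of Corollary~\ref{cor:LeibnizRule-O-multi}, combined with the vanishing/evaluation results for $O_{\{\alpha_1,a_1;\dots;\alpha_n,a_n\}}$ acting on a single jet variable $v^{\beta,b}$ recorded in Corollary~\ref{cor:Thm-Ooper-m1}. The key observation is that commuting with multiplication by $v^{\beta,0}$ or $v^{\beta,1}$ amounts to showing that the ``extra'' terms produced when some of the derivatives in $O_{\{\alpha_1,a_1;\dots;\alpha_n,a_n\}}$ hit the factor $v^{\beta,b}$ (with $b=0$ or $b=1$) all vanish.

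Concretely, I would fix an arbitrary function $f$ in the jet variables and compute $O_{\{\alpha_1,a_1;\dots;\alpha_n,a_n\}}(v^{\beta,b}\cdot f)$ for $b\in\{0,1\}$. Applying Lemma~\ref{lem:FactorizationO} with $f_1 = v^{\beta,b}$ and $f_2 = f$ gives
\begin{align*}
O_{\{\alpha_1,a_1;\dots;\alpha_n,a_n\}}(v^{\beta,b} f) = \sum_{I_1\sqcup I_2 = \{1,\dots,n\}} O_{\{\alpha_i,a_i\}_{i\in I_1}}(v^{\beta,b})\, O_{\{\alpha_i,a_i\}_{i\in I_2}}(f).
\end{align*}
The term with $I_1=\emptyset$ reproduces $v^{\beta,b}\cdot O_{\{\alpha_1,a_1;\dots;\alpha_n,a_n\}}(f)$, since $O_\emptyset = \Id$. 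The commutation claim is therefore equivalent to the vanishing of every term with $I_1\neq\emptyset$, i.e. to $O_{\{\alpha_i,a_i\}_{i\in I_1}}(v^{\beta,b}) = 0$ for all nonempty $I_1$. For such $I_1$ the constraints $a_i\geq 2$ give $\sum_{i\in I_1} a_i \geq 2 > b \geq b$, hence $b\leq \sum_{i\in I_1}a_i - 1$, which is exactly the hypothesis under which Corollary~\ref{cor:Thm-Ooper-m1} (its statement for $m=1$, inherited from Lemma~\ref{lem:FirstVanishing}) yields $O_{\{\alpha_i,a_i\}_{i\in I_1}}(v^{\beta,b}) = 0$.

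Thus every cross term drops out and we are left with $O_{\{\alpha_1,a_1;\dots;\alpha_n,a_n\}}(v^{\beta,b} f) = v^{\beta,b}\, O_{\{\alpha_1,a_1;\dots;\alpha_n,a_n\}}(f)$, which is precisely the statement that $O_{\{\alpha_1,a_1;\dots;\alpha_n,a_n\}}$ commutes with multiplication by $v^{\beta,0}$ and $v^{\beta,1}$. The only point requiring care, and the mild ``obstacle'', is verifying the bookkeeping that the $I_1=\emptyset$ summand indeed reproduces the multiplication operator and that the index inequality $b \leq \sum_{i\in I_1}a_i-1$ is strict enough in the boundary case $b=1$ with $|I_1|=1$; here $\sum_{i\in I_1}a_i = a_i \geq 2$ gives $\sum_{i\in I_1}a_i - 1 \geq 1 = b$, so the inequality holds and the vanishing applies. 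No genuinely hard step is involved: the corollary is a formal consequence of the Leibniz-type factorization together with the already-established first vanishing property.
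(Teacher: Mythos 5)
Your proof is correct and follows essentially the same route as the paper: the paper likewise expands the commutator $[O_{\{\alpha_1,a_1;\dots;\alpha_n,a_n\}}, v^{\beta,b}\cdot]$ via the factorization property of Lemma~\ref{lem:FactorizationO} into a sum over nonempty $I_1$, and then kills each term $O_{\{\alpha_i,a_i\}_{i\in I_1}}(v^{\beta,b})$ using Corollary~\ref{cor:Thm-Ooper-m1}, exactly as you do with the bound $b\leq 1\leq \sum_{i\in I_1}a_i-1$ guaranteed by $a_i\geq 2$.
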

\begin{proof} Indeed, by Lemma~\ref{lem:FirstVanishing} we have 
	\begin{align}
		[O_{\{\alpha_1,a_1;\dots;\alpha_n,a_n\}}, v^{\beta,b}\cdot] = \sum_{\substack{I_1\sqcup I_2 
				= \{1,\dots,n\}\\ I_1\not=\emptyset}	} O_{\{\alpha_i,a_i\}_{i\in I_1}}(v^{\beta,b}) O_{\{\alpha_i,a_i\}_{i\in I_2}}
	\end{align}
Note that $\sum_{i\in I_1} a_i \geq 2n$, hence for $b=0,1$ we have $b\leq a-1$ for $n=1$ and $b\leq \sum_{i\in I_1} a_i+1-n$ for $n\geq 2$. Then by Corollary~\ref{cor:Thm-Ooper-m1}  $O_{\{\alpha_i,a_i\}_{i\in I_1}}(v^{\beta,b})$ vanishes for all $I_1\neq\emptyset$. 
\end{proof}

Now we can use Corollary~\ref{cor:LeibnizRule-O-multi} to prove the statements (2) and (3) of Theorem~\ref{thm:properties-of-operators-O} hold for $m\geq 2$. We have:

\begin{corollary} \label{cor:OperO-mgeq2} Let $m \geq 2$; $b_i\geq 2$, $i=1,\dots,m$, and 
$\sum_{i=1}^m b_i \leq \sum_{i=1}^n a_i -1$.
Then 
\begin{align}
	O_{\{\alpha_1,a_1;\dots,\alpha_n,a_n\}} \big( v^{\beta_1,b_1}\cdots v^{\beta_m,b_m} \big) = 0.
\end{align}  
If $\sum_{i=1}^m b_i=\sum_{i=1}^n a_i$ and $m\geq n$, then 
\begin{align} 
			O_{\{\alpha_1,a_1;\dots,\alpha_n,a_n\}} \big( v^{\beta_1,b_1}\cdots v^{\beta_m,b_m} \big) = \delta_{nm} \sum_{\sigma\in S_n} \prod_{i=1}^n \delta_{a_ib_{\sigma(i)}} \prod_{i=1}^n \prod_{j=0}^{a_i} \partial_x 
			\ddd{\qt^{\gamma_{i,j}}_0\qt_{\gamma_{i,j+1},0}},
		\end{align}  
		where $\gamma_{i,0} = \beta_{\sigma(i)}$ and $\gamma_{i,a_i} = \alpha_i$. 
\end{corollary}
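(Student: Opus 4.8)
The plan is to derive both parts of Corollary~\ref{cor:OperO-mgeq2} from the multiplicative Leibniz-type formula of Corollary~\ref{cor:LeibnizRule-O-multi} together with the single-factor results already established in Corollary~\ref{cor:Thm-Ooper-m1}. The idea is that multiplication splits the ``budget'' $\sum_{i=1}^n a_i$ of differential order among the $m$ factors, and each factor $v^{\beta_j,b_j}$ can only be hit nontrivially in a very narrow regime governed by Corollary~\ref{cor:Thm-Ooper-m1}.

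First I would expand $O_{\{\alpha_1,a_1;\dots;\alpha_n,a_n\}}(v^{\beta_1,b_1}\cdots v^{\beta_m,b_m})$ using Equation~\eqref{bk}, writing it as a sum over ordered partitions $I_1\sqcup\cdots\sqcup I_m=\{1,\dots,n\}$ of the product $\prod_{j=1}^m O_{\{\alpha_i,a_i\}_{i\in I_j}}(v^{\beta_j,b_j})$. For each term, Corollary~\ref{cor:Thm-Ooper-m1} tells us the following: if $I_j=\emptyset$ then the corresponding factor is $v^{\beta_j,b_j}$ itself; if $I_j\neq\emptyset$ then, setting $A_j\coloneqq\sum_{i\in I_j}a_i$, the factor vanishes whenever $b_j\leq A_j-1$ and equals the explicit $\partial_x$-iterated genus-zero correlator only when $b_j=A_j$. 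The key numerical constraint is that $b_j\geq 2$ for every $j$, while every $a_i\geq 2$, so any nonempty block with $|I_j|\geq 2$ forces $A_j\geq 4>b_j$ generically; I would track these inequalities carefully.

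For the vanishing statement (part 2), where $\sum_{j=1}^m b_j\leq\sum_{i=1}^n a_i-1$, I would argue that in every partition term at least one nonempty block $I_j$ must satisfy $b_j\leq A_j-1$, forcing that factor (hence the whole term) to vanish. The counting goes as follows: if instead every nonempty block satisfied $b_j\geq A_j$, then summing over the nonempty blocks would give $\sum_{I_j\neq\emptyset}b_j\geq\sum_{I_j\neq\emptyset}A_j=\sum_{i=1}^n a_i$, and since the empty-block factors contribute $b_j\geq 2>0$ to the left side, we would get $\sum_{j=1}^m b_j>\sum_{i=1}^n a_i$, contradicting the hypothesis $\sum b_j\leq\sum a_i-1$. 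Hence at least one nonempty block lies in the strict vanishing range and kills the term. For the top-degree statement (part 3), where $\sum_j b_j=\sum_i a_i$, the same inequality chain shows that a surviving term requires every nonempty block to satisfy $b_j=A_j$ \emph{and} every empty block to be absent; combined with $b_j\geq 2$, $a_i\geq 2$, and the total degree equality, this forces each nonempty block to be a singleton and forces $m=n$ with a bijective matching $\sigma\in S_n$ pairing $a_i$ with $b_{\sigma(i)}$. Substituting the explicit single-factor value $O_{\alpha_i,a_i}v^{\beta_{\sigma(i)},b_{\sigma(i)}}=\prod_{j=0}^{a_i}\partial_x\ddd{\qt^{\gamma_{i,j}}_0\qt_{\gamma_{i,j+1},0}}$ from Corollary~\ref{cor:Thm-Ooper-m1} then yields exactly the claimed formula with the Kronecker factors $\delta_{nm}\prod_i\delta_{a_ib_{\sigma(i)}}$.

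The main obstacle I anticipate is organizing the combinatorial bookkeeping so that the singleton-block reduction in part 3 is airtight: one must verify that no empty blocks can survive (each would inject an undifferentiated $v^{\beta_j,b_j}$ with $b_j\geq 2$ into the total degree count, breaking the equality) and that a nonempty block of size $\geq 2$ cannot meet $b_j=A_j$ because $A_j\geq 4$ while the matching must respect $b_j\geq 2$ without leaving any $a_i$ unmatched --- this is where the precise interplay of the constraints $a_i\geq 2$, $b_j\geq 2$, and $\sum b_j=\sum a_i$ forces both $m=n$ and the singleton structure. Once the inequalities are stated cleanly the rest is a direct substitution, so I would front-load the degree-counting argument and treat the final evaluation as routine.
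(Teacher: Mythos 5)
Your proposal is correct and takes essentially the same route as the paper: expand via Corollary~\ref{cor:LeibnizRule-O-multi}, apply the single-factor vanishing/evaluation of Corollary~\ref{cor:Thm-Ooper-m1} to each block, and conclude by degree counting (no empty blocks, hence $m\le n$, which with the hypothesis $m\ge n$ forces $m=n$, singleton blocks, and the permutation matching). One caveat: your side remark that a block with $|I_j|\ge 2$ is excluded ``because $A_j\ge 4>b_j$'' is not the actual reason (nothing prevents $b_j\ge 4$); the exclusion comes purely from the block count just described, which your main argument does carry out correctly.
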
 

\begin{proof} Using Corollary~\ref{cor:LeibnizRule-O-multi} we see that 
\begin{align} \label{eq:proof-Cor-m-geq-2-Ooper}
	O_{\{\alpha_1,a_1;\dots,\alpha_n,a_n\}} \big( v^{\beta_1,b_1}\cdots v^{\beta_m,b_m} \big) = \sum_{\substack{I_1\sqcup \cdots \sqcup I_m  \\ 
			= \{1,\dots,n\}}	} \prod_{j=1}^m O_{\{\alpha_i,a_i\}_{i\in I_j}}(v^{\beta_j,b_j}).
\end{align}
Note that by Corollary~\ref{cor:Thm-Ooper-m1} the factor $O_{\{\alpha_i,a_i\}_{i\in I_j}}(v^{\beta_j,b_j})$ vanishes unless $b_j\geq \sum_{i\in I_j} a_i$. But if this inequality holds for every $j=1,\dots,m$, then  $\sum_{i=1}^m b_i\geq \sum_{i=1}^n a_i$. Thus, under the assumption $\sum_{i=1}^m b_i\leq \sum_{i=1}^n a_i-1$ each summand on the right hand side of Equation~\eqref{eq:proof-Cor-m-geq-2-Ooper} has at least one vanishing factor, which implies the first statement of the Corollary. 

For the second statement notice that if $\sum_{i=1}^m b_i=\sum_{i=1}^n a_i$ then $b_j\geq \sum_{i\in I_j} a_i$ for each $j=1,\dots,m$ implies $b_j=\sum_{i\in I_j} a_i$ (otherwise at least one factor in the corresponding summand vanishes) and, therefore, $n\geq m$ (since each $I_j$ must be nonempty). Hence $n\geq m$, and since $m\geq n$ by assumption, we obtain $m=n$. Hence, each $I_j$ consists just of one element that we denote by $a_{\sigma^{-1}(j)}$ for some $\sigma\in S_m$, and the corresponding summand is nonzero only if $a_{\sigma^{-1}(j)} = b_j$, $j=1,\dots,m$. This implies 
\begin{align} 
	O_{\{\alpha_1,a_1;\dots,\alpha_n,a_n\}} \big( v^{\beta_1,b_1}\cdots v^{\beta_m,b_m} \big) = \delta_{nm} \sum_{\sigma\in S_n} \prod_{i=1}^n \delta_{a_ib_{\sigma(i)}} \prod_{i=1}^n O_{\alpha_i,a_i} (v^{\beta_{\sigma(i)},a_i}),
\end{align}  
and we complete the argument by applying Corollary~\ref{cor:Thm-Ooper-m1} to each factor on the right-hand side of this expression.
\end{proof}

	\subsection{Universal identities}
With the properties of the operators given in previous subsections, we are ready to prove Theorem~\ref{af}. In what follows we fix a semisimple Frobenius manifold $M$ (and recall all standard notation as in Section~\ref{al}). First let us derive the action of those operators in terms of the canonical coordinates.
\begin{lemma}
	\label{bj}
	In terms of the canonical coordinates, the genus zero 3-point correlators are given by
	\[
		\ddd{\qt_{\qa,0}\qt_{\qb,0}\qt_{\qg,0}} = \sum_i\frac{\Psi_{i\qa}\Psi_{i\qb}\Psi_{i\qg}}{\Psi_{i1}}u^{i,1}.
		\]
\end{lemma}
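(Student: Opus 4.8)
The plan is to translate the well-known expression of the genus zero 3-point function in flat coordinates into canonical coordinates using the matrix $\Psi$. Recall that by definition of the structure constants of the Frobenius manifold, the flat-coordinate 3-point correlator equals the third derivative of $\mathcal F_0$, namely $\ddd{\qt_{\qa,0}\qt_{\qb,0}\qt_{\qg,0}} = c_{\qa\qb\qg}$, where $c_{\qa\qb\qg} = \eta_{\qg\qmu} c^\qmu_{\qa\qb}$ are the lowered structure constants. Thus the first step is to write the identity purely in terms of these structure constants, and then to use the fundamental relation~\eqref{bi}, which diagonalizes the quantum product in the canonical frame via the matrix $\Psi$.

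First I would invoke~\eqref{bi}, which states $c^\qg_{\qa\qb} = \sum_i \Psi_{i\qa}\Psi_{i\qb}\Psi_i^\qg/\Psi_{i1}$. Lowering the upper index by the metric $\eta$ and using the orthogonality relation~\eqref{bd} (that is, $\Psi_{i\qa}\Psi_j^\qa = \qd_{ij}$), one gets
\[
c_{\qa\qb\qg} = \eta_{\qg\qmu} c^\qmu_{\qa\qb} = \sum_i \frac{\Psi_{i\qa}\Psi_{i\qb}\Psi_{i\qg}}{\Psi_{i1}}.
\]
This already produces the claimed shape but without the factor $u^{i,1}$. The remaining step is to account for the difference between the static structure constants $c_{\qa\qb\qg}$ (the third derivative of $\mathcal F_0$ at a fixed point, with all higher jets set to zero) and the full correlator $\ddd{\qt_{\qa,0}\qt_{\qb,0}\qt_{\qg,0}}$ appearing in the statement, which is evaluated on the topological solution and hence depends on the jet variables.

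The key computational input is the genus zero string/dilaton-type relation together with the chain rule expressing $\qp_x = \qp/\qp t^{1,0}$ in canonical coordinates. The correlator $\ddd{\qt_{\qa,0}\qt_{\qb,0}\qt_{\qg,0}}$ is the third derivative of $\mathcal F_0$ with respect to the flat times, restricted to the topological solution; applying the genus zero topological recursion relation lets one write it as $c_{\qa\qb\qg}$ contracted appropriately with $\qp_x$ acting on the dependent variables. Concretely, I expect that the operator $\qp_x$ applied to the canonical coordinate $u^i$ yields $u^{i,1}$, and that the diagonalization~\eqref{bi} together with~\eqref{bd} funnels each summand in~\eqref{bi} through exactly one factor of $u^{i,1}$. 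The main obstacle, and the step requiring the most care, is keeping track of precisely which correlator is meant on the left-hand side --- namely that $\ddd{\cdots}_0$ denotes the second derivative structure $\qp^n\mathcal F_0/\qp t^{\qa_1,k_1}\cdots$ evaluated on the topological solution, so that the single insertion of $\qt_{1,0}$ implicit in passing to jet coordinates produces the extra $u^{i,1} = \qp_x u^i$. Once the genus zero recursion is set up correctly and combined with~\eqref{bi} and~\eqref{bd}, the identity follows by a direct, if bookkeeping-heavy, calculation.
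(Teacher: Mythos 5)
Your overall strategy --- reduce the correlator to structure constants, then pass to the canonical frame via~\eqref{bi} and~\eqref{bd} --- is the same as the paper's, but there is a genuine gap at exactly the step that constitutes the content of the lemma. You open with the identity $\ddd{\qt_{\qa,0}\qt_{\qb,0}\qt_{\qg,0}} = c_{\qa\qb\qg}$, which is false: the left-hand side is a function on the large phase space written in jet variables, and it reduces to $c_{\qa\qb\qg}$ only upon restriction to the small phase space (where $v^{\qd,1}=\qd^\qd_1$). You then correctly observe that the two objects differ and that the discrepancy must produce the factor $u^{i,1}$, but the bridge is never built: you appeal to a ``string/dilaton-type relation'' and the ``genus zero topological recursion relation'' and assert that these ``funnel each summand through exactly one factor of $u^{i,1}$''. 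Neither tool applies here (the genus zero TRR requires a descendant insertion, and all three insertions are primaries), and no computation is actually given.

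What is needed, and what the paper does, is: (i) tameness of $\mathcal F_0$, which gives $\ddd{\qt_{\qa,0}\qt_{\qb,0}} = \qp^2 F/\qp v^\qa\qp v^\qb$ as a function of $v$ alone; (ii) the identity $v^\qa = \ddd{\qt^\qa_0\qt_{1,0}}$, which, after commuting $\qp/\qp t^{\qb,0}$ with $\qp/\qp t^{1,0}$ and applying the chain rule, yields $\diff{v^\qa}{t^{\qb,0}} = c^\qa_{\qb\qg}v^{\qg,1}$; and (iii) one further chain rule application, giving the key intermediate formula
\[
\ddd{\qt_{\qa,0}\qt_{\qb,0}\qt_{\qg,0}} = c_{\qa\qb\mu}\,c^{\mu}_{\qg\qd}\,v^{\qd,1},
\]
which involves \emph{two} structure constants and one jet variable, not a single $c_{\qa\qb\qg}$. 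Only at this point do~\eqref{bi} and~\eqref{bd} enter: substituting~\eqref{bi} for both structure constants, contracting the intermediate indices via $\Psi_{i\mu}\Psi_j^\mu=\qd_{ij}$, and using $u^{i,1} = \frac{\Psi_{i\qd}}{\Psi_{i1}}v^{\qd,1}$ (which follows from $\diff{u^i}{v^\qd} = \Psi_{i\qd}/\Psi_{i1}$) collapses the double sum to a single sum with exactly one factor of $u^{i,1}$ per term. So your instinct about which identities close the argument is right, but the step that makes the lemma nontrivial --- the derivation of the displayed formula from tameness and the definition of $v^\qa$ --- is missing, and the identity you start from would have to be discarded rather than patched.
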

\begin{proof}
This lemma is most simply proved by using the tau-structure of the Principal Hierarchy associated with $M$. Alternatively, it follows from tameness of $\mathcal F_0$ that (see \cite{buryak2012polynomial})
\[
	\ddd{\qt_{\qa,0}\qt_{\qb,0}} = \frac{\qp^2F}{\qp v^{\qa}\qp v^{\qb}},
	\]  
here $F$ is the Frobenius potential of $M$ given by
\[
	F = \mathcal F_0|_{t^{\qa,0} = v^{\qa},t^{\qa,1} = t^{\qa,2}=\dots=0}.
	\]
Note that by definition $v^\qa=
\ddd{\qt^\qa_0\qt_{1,0}}$, hence
\[
	\diff{v^\qa}{t^{\qb,0}} = \diff{
	\ddd{\qt^\qa_0\qt_{\qb,0}}}{t^{1,0}} = \qp_\qg\ddd{\qt^\qa_0\qt_{\qb,0}}v^{\qg,1} = c^\qa_{\qb\qg}v^{\qg,1}.
	\]
Therefore, it follows that
\[
	\ddd{\qt_{\qa,0}\qt_{\qb,0}\qt_{\qg,0}} = \qp_\mu
	\ddd{\qt_{\qa,0}\qt_{\qb,0}}
	\diff{v^\mu}{t^{\qg,0}} = c_{\qa\qb\mu}c^\mu_{\qg\qd}v^{\qd,1}.
	\]
The lemma then follows from \eqref{bd}, \eqref{bi}. 
\end{proof}
We can now prove Theorem~\ref{af} with the help of the above lemma. 

\begin{theorem}[=Theorem~\ref{af}]
	Fix an operator $O_{\{\alpha_1,k_1;\dots,\alpha_n,k_n\}}$ with $k_1+\dots+k_n = 3g-3+n$ with $g=n=k_1=1$ or
	\[
		k_1+\cdots +k_n =   3g-3 +n,\quad g\geq 2,\quad n\geq 1,\quad k_i\geq 2,
		\]
	and denote by $\mu$ the partition of $3g-3+n$ given by $k_1,\dots,k_n$.
	Then we have the following universal identity for $\mathcal F_g$ with $g\geq 1$:
	\begin{equation}
		\label{bl}
		O_{\{\alpha_1,k_1;\dots,\alpha_n,k_n\}}(\mathcal F_g) = |\mathrm{Aut}(\mu)|C_{g;\mu}M[g]^{\qg_0}_{\qg_n}\prod_{i=1}^n\ddd{\qt_{\qa_i,0}\qt_{\qg_{i-1,0}}\qt^{\qg_{i},0}},
	\end{equation}
	here $C_{g;\mu}$ is the constant defined in \eqref{ar} and $M[g]$ is defined by 
\[
	M[g]^\qa_\qg =\begin{cases}
		\qd^\qa_\qg, & \text{for } g=1, \\
		M^\qa_\qg, & \text{for } g=2, \\
		M^{\qa}_{\qb_1}M^{\qb_1}_{\qb_2}\dots M^{\qb_{g-3}}_{\qb_{g-2}}M^{\qb_{g-2}}_{\qg}, & \text{for } g\geq 3,
	  \end{cases}
	\]
where we denote 
\[
M^{\qa}_\qb = \ddd{\qt_{0}^\qa\qt_{\ql,0}\qt_{\mu,0}}\ddd{\qt_{0}^\ql\qt_{0}^\mu\qt_{\qb,0}}.
	\]
\end{theorem}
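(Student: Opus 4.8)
The plan is to substitute the structural decomposition of Theorem~\ref{ae},
\[
\mathcal F_g = F_g = \sum_{i=1}^N f_i(u)^{1-g}\,F_g^{KdV}(u^{i,1},\dots,u^{i,3g-2}) + H_g,
\]
into the left-hand side and to treat the two summands separately. A preliminary observation used throughout is that the vanishing and factorization statements of Corollaries~\ref{cor:Thm-Ooper-m1}, \ref{cor:OperO-mgeq2}, \ref{cor:LeibnizRule-O-multi} and the commutation statement of Corollary~\ref{cor:OperO-commute01}, all stated for the flat jets $v^{\qb,b}$, transfer to the canonical jets $u^{i,b}$. Indeed each $u^{i,b}$ is a differential polynomial in the $v^{\qb,c}$ whose only top term is $\tfrac{\qp u^i}{\qp v^\qb}v^{\qb,b}$, all other terms being products of at least two jets of order $\geq 1$; since $O_{\{\dots\}}$ commutes with multiplication by any smooth function of the $v^{\qb,0}$ and by $v^{\qb,1}$ (by Corollary~\ref{cor:OperO-commute01} together with the Leibniz rule and the vanishing on order-$0$ monomials), one strips all order-$\leq 1$ factors and reduces to the flat statements. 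The same reasoning shows $O_{\{\dots\}}$ commutes with multiplication by $f_i(u)$ and by every integer power $(u^{i,1})^k$.

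I would first dispose of $H_g$ by proving $O_{\{\alpha_1,k_1;\dots;\alpha_n,k_n\}}(H_g)=0$. Expanding $H_g$ into jet monomials $c(u^{\cdot,0},u^{\cdot,1})\,u^{i_1,b_1}\cdots u^{i_m,b_m}$ with $b_t\geq 2$, property~\eqref{eq:RestrictionOnH} forces $\sum_t b_t\leq 3g-4+m$ on every monomial. By the Leibniz rule (Corollary~\ref{cor:LeibnizRule-O-multi}) and the single-jet vanishing, a distribution $I_1\sqcup\cdots\sqcup I_m=\{1,\dots,n\}$ contributes only if $b_t\geq\sum_{r\in I_t}k_r$ for every $t$ with $I_t\neq\emptyset$. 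Writing $S=\{t:I_t\neq\emptyset\}$, so $|S|\leq n$, one gets $\sum_{t\in S}b_t\geq\sum_{r=1}^n k_r=3g-3+n$, while the $m-|S|\geq m-n$ factors with $I_t=\emptyset$ each satisfy $b_t\geq 2$; hence $\sum_t b_t\geq 3g-3+n+2(m-n)$. Combined with $\sum_t b_t\leq 3g-4+m$ this is impossible once $m\geq n+1$, while for $m\leq n$ the monomial already lies in the vanishing range $\sum_t b_t\leq 3g-4+m\leq\sum_r k_r-1$. Thus every monomial is annihilated and $O(H_g)=0$.

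Next I would compute the Witten--Kontsevich contribution. Using the commutation with $f_i(u)^{1-g}$ and with the powers of $u^{i,1}$, together with the explicit shape~\eqref{ar} of $F_g^{KdV}$, the problem reduces to evaluating $O_{\{\alpha_1,k_1;\dots;\alpha_n,k_n\}}(u^{i,\mu'_1}\cdots u^{i,\mu'_\ell})$ for each partition $\mu'$ of $3g-3+\ell$. A counting identical to the one above (now with $\sum_j\mu'_j=3g-3+\ell$ exactly) shows this vanishes unless $\ell=n$ and $\mu'=\mu$; in that equality case the $m=n$ formula of Corollary~\ref{cor:OperO-mgeq2} selects the diagonal terms and produces the automorphism factor,
\[
O_{\{\dots\}}(u^{i,\mu_1}\cdots u^{i,\mu_n}) = |\Aut(\mu)|\prod_{r=1}^n O_{\alpha_r,k_r}(u^{i,k_r}).
\]
Each single factor is evaluated in canonical coordinates: from $\qp_x\ddd{\qt^\qg_0\qt_{\qd,0}}=\sum_j\Psi_{j\qd}\Psi_j^\qg u^{j,1}$ (a consequence of Lemma~\ref{bj}, \eqref{bi}, \eqref{bd}) one telescopes the chain in the single-operator formula of Corollary~\ref{cor:Thm-Ooper-m1} via orthogonality $\Psi_{i\qa}\Psi_j^\qa=\qd_{ij}$ to get $O_{\alpha_r,k_r}(u^{i,k_r})=\tfrac{\Psi_{i\alpha_r}}{\Psi_{i1}}(u^{i,1})^{k_r+1}$. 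Substituting, summing over $i$, and using $f_i=\Psi_{i1}^2$ collapses the left-hand side to $|\Aut(\mu)|\,C_{g;\mu}\sum_i \tfrac{\prod_r\Psi_{i\alpha_r}}{\Psi_{i1}^{\,n+2g-2}}(u^{i,1})^{n+2g-2}$.

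Finally I would match this against the claimed right-hand side rewritten in the same canonical form. Each three-point function is $\ddd{\qt_{\alpha_r,0}\qt_{\gamma_{r-1},0}\qt^{\gamma_r,0}}=\sum_j \tfrac{\Psi_{j\alpha_r}\Psi_{j\gamma_{r-1}}\Psi_j^{\gamma_r}}{\Psi_{j1}}u^{j,1}$ by Lemma~\ref{bj}, and contracting the chain over $\gamma_1,\dots,\gamma_{n-1}$ collapses all indices to a single $i$; likewise $M^\qa_\qb=\sum_i\tfrac{\Psi_i^\qa\Psi_{i\qb}}{f_i}(u^{i,1})^2$, so $M[g]=M^{g-1}$ becomes $\sum_i\tfrac{\Psi_i^{\gamma_0}\Psi_{i\gamma_n}}{f_i^{g-1}}(u^{i,1})^{2(g-1)}$, and contracting $\gamma_0,\gamma_n$ reproduces exactly the displayed expression. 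The case $g=1$ (where $F_1^{KdV}=\tfrac1{24}\log u^{i,1}$ is not polynomial) is handled identically: there $n=1$ and $O_{\{\alpha,1\}}=O_{\alpha,1}$ is a derivation, so $O_{\alpha,1}(\log u^{i,1})=\tfrac{\Psi_{i\alpha}}{\Psi_{i1}}u^{i,1}$, giving $\tfrac1{24}\ddd{\qt_{\alpha,0}\qt_{\qg,0}\qt^{\qg,0}}$. I expect the main obstacle to be making rigorous the transfer of the operator identities from the flat jets $v^{\qb,b}$ to the canonical jets $u^{i,b}$ and to the non-polynomial combinations $(u^{i,1})^{-k}$ and $\log u^{i,1}$ — the ``strip the order-$\leq 1$ factors'' reduction — together with the index bookkeeping in the final telescoping; once these are secured, everything follows from the already-established vanishing and factorization properties.
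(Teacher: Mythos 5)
Your proposal is correct and follows essentially the same route as the paper's proof: decompose $\mathcal F_g$ via Theorem~\ref{ae}, annihilate $H_g$ using the Leibniz rule~\eqref{bk} together with the vanishing statements of Theorem~\ref{thm:properties-of-operators-O}, reduce the $F_g^{KdV}$ part via~\eqref{ar} to $|\Aut(\mu)|C_{g;\mu}\prod_r O_{\alpha_r,k_r}(u^{i,k_r})$, evaluate each factor by Lemma~\ref{bj} and Corollary~\ref{cor:Thm-Ooper-m1}, and match both sides in canonical coordinates. If anything, you make explicit several points the paper treats tersely --- the flat-to-canonical jet transfer behind~\eqref{eq:O-flat-canonical}, the counting that rules out distributions with $m>n$, and the separate treatment of the logarithmic $g=1$ case --- but these are elaborations of the same argument, not a different one.
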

\begin{proof}
        In order to apply $O_{\{\alpha_1,k_1;\dots,\alpha_n,k_n\}}$ to $\mathcal F_g$, we recall Theorem~\ref{ae}. Note that the condition~\eqref{eq:RestrictionOnH} implies that $H_g$ is represented as a linear combination of the monomials $\prod_{j=1}^m u^{i_j,b_j}$ with $b_1+\cdots+b_m< 3g-3+m$, $b_j\geq 2$, with the coefficients being some functions in $u^i$ and $u^{i,1}$, $i=1,\dots,N$. Alternatively, in flat coordinates, $H_g$ is represented as a linear combination of the monomials $\prod_{j=1}^m v^{\alpha_j,b_j}$ with $b_1+\cdots+b_m< 3g-3+m$, $b_j\geq 2$, with the coefficients being some functions in $v^\alpha$ and $v^{\alpha,1}$, $i=1,\dots,N$.
        By the first statement of Theorem~\ref{thm:properties-of-operators-O} the operator $O_{\{\alpha_1,k_1;\dots,\alpha_n,k_n\}}$ commutes with the coefficients that are functions in $v^\alpha$ and $v^{\alpha,1}$ and by the second statement of Theorem~\ref{thm:properties-of-operators-O} (in combination with the iterative property \eqref{bk} of the operator) the operator $O_{\{\alpha_1,k_1;\dots,\alpha_n,k_n\}}$ vanishes the monomials $\prod_{j=1}^m v^{\alpha_j,b_j}$. Thus, $O_{\{\alpha_1,k_1;\dots,\alpha_n,k_n\}} H_g=0$.
        Therefore, we only need to write down the action of $O_{\{\alpha_1,k_1;\dots,\alpha_n,k_n\}}$ on the first summand in~\eqref{ah}.

    To this end, recall that it follows from Lemma~\ref{bj} and Corollary~\ref{cor:Thm-Ooper-m1} that 
	\begin{align} \label{eq:O-flat-canonical}		
    O_{\{\qa,k\}}(u^{i,k}) = \frac{\Psi_{i\qa}}{\Psi_{i1}}(u^{i,1})^{k+1}.
		\end{align}
Also recall the structure of $F_g^{KdV}$ prescribed by Equation \eqref{ar}. Hence, using once again the iterative property \eqref{bk} of the operator, we see that 
	\begin{align*}
O_{\{\alpha_1,k_1;\dots,\alpha_n,k_n\}}(\mathcal F_g) & = 
\sum_{i} f_i(u)^{1-g} O_{\{\alpha_1,k_1;\dots,\alpha_n,k_n\}}( F^{KdV}_g(u^{i,1},\dots,u^{i,3g-2})) \\
& = \sum_{i} \Psi_{i1}^{2-2g} (u^{i,1})^{-g-n+1} \sum_{n\geq 0}\sum_{\nu\in P(g,n)} C_{g;\nu} O_{\{\alpha_1,k_1;\dots,\alpha_n,k_n\}} (v^{i,(\nu)}) 
\\
& = |\mathrm{Aut}(\mu)|C_{g;\mu} \sum_{i} \Psi_{i1}^{2-2g} (u^{i,1})^{-g-n+1} \prod_{j=1}^n O_{\{\qa_j,k_j\}}(u^{i,k_j}).
\\
& = |\mathrm{Aut}(\mu)|C_{g;\mu} \sum_{i} \Psi_{i1}^{2-2g} (u^{i,1})^{-g-n+1} \prod_{j=1}^n \frac{\Psi_{i\qa_j}}{\Psi_{i1}}(u^{i,1})^{k_j+1}.
\\
& = |\mathrm{Aut}(\mu)|C_{g;\mu}\sum_i \frac{\Psi_{i\qa_1}\dots\Psi_{i\qa_n}}{\Psi_{i1}^{2g-2+n}}(u^{i,1})^{2g-2+n}
	\end{align*}
(in the first line we use the vanishing of the operator on $H_g$; in the second line we use Equation \eqref{ar} for $F_g^{KdV}$; in the third line we use \eqref{bk}; in the fourth line we use \eqref{eq:O-flat-canonical}).
    
	By a straightforward computation using Lemma~\ref{bj}, it follows that
	\[
		\prod_{i=1}^n\ddd{\qt_{\qa_i,0}\qt_{\qg_{i-1,0}}\qt^{\qg_{i},0}} = \sum_i\frac{\Psi_{i\qa_1}\dots\Psi_{i\qa_n}\Psi_{i\qg_0}\Psi^{\qg_n}_i}{\Psi_{i1}^n}(u^{i,1})^n,\quad M[g]^\qa_\qg = \sum_i\frac{\Psi_i^\qa\Psi_{i\qg}}{\Psi_{i1}^{2g-2}}(u^{i,1})^{2g-2},
		\]
	which implies the validity of \eqref{bl}. The theorem is proved.
\end{proof}

	\subsection{Towards more general universal identities}
	\label{bz}
	As we have introduced in Sect.\,\ref{ak}, the simplest example of universal identities derived from Theorem~\ref{af} is 
	\[
	\ddg{\qt_{\qa,1}}_1-\ddd{\qt_{\qa,0}\qt^\qb_0}\ddg{\qt_{\qb,0}}_1 = \frac{1}{24}\ddd{\qt_{\qa,0}\qt^\qb_0\qt_{\qb,0}}.
		\]
	However, using Theorem~\ref{ae}, it is immediate to derive the general form of the above genus $1$ recursion relation \cite{dijkgraaf1990mean}:
	\begin{equation}
		\label{bm}
		\ddg{\qt_{\qa,p}}_1-\ddd{\qt_{\qa,p-1}\qt^\qb_0}\ddg{\qt_{\qb,0}}_1 = \frac{1}{24}\ddd{\qt_{\qa,p-1}\qt^\qb_0\qt_{\qb,0}}.
	\end{equation}
	Indeed, let us define the vector field
	\[
		A^0_{\qa,p} = \diff{}{t^{\qa,p}}-\ddd{\qt_{\qa,p-1}\qt^\qb_0}\diff{}{t^{\qb,0}},\quad p\geq 1,
		\]
	then it follows from genus zero topological recursion relation \cite{witten1990two} that 
	\[
		A^0_{\qa,p}(v^\qb) = 0,\quad A^0_{\qa,p}(v^{\qb,1}) =\ddd{\qt_{\qa,p-1}\qt_{\mu,0}\qt_{1,0}}\ddd{\qt^\mu_0\qt_0^\qb\qt_{1,0}}.
		\]
	By using the decomposition \eqref{ah} we have
	\begin{align*}
		A^0_{\qa,p}(\mathcal F_1)&=\sum_i\diff{F_1}{u^{i,1}}\frac{\Psi_{i\qb}}{\Psi_{i1}}\ddd{\qt_{\qa,p-1}\qt_{\mu,0}\qt_{1,0}}\ddd{\qt^\mu_0\qt_0^\qb\qt_{1,0}}\\
		&=\frac{1}{24}\kk{\sum_i\frac{1}{u^{i,1}}\frac{\Psi_{i\qb}}{\Psi_{i1}}}\kk{\sum_j\qp_\qe h_{\qa,p-1}\Psi_j^\qe\Psi_{j,\mu}u^{j,1}}\kk{\sum_\ell \Psi_\ell^\mu\Psi^\qb_\ell u^{\ell,1}}\\
		&=\frac{1}{24}\sum_i\qp_\qe h_{\qa,p-1}\frac{\Psi_{i\qe}}{\Psi_{i1}}u^{i,1}\\
		&=\frac{1}{24}\ddd{\qt_{\qa,p-1}\qt^\qb_0\qt_{\qb,0}},
	\end{align*}
	here in the computation we denote by $h_{\qa,p}\coloneqq \ddd{\qt_{\qa,p}\qt_{1,0}}$, and we use the fact that
	\begin{equation}
		\label{bn}
		\ddd{\qt_{\qa,p}\qt_{\qb,0}\qt_{\qg,0}} = \sum_i \qp_\qe h_{\qa,p}\frac{\Psi_i^\qe\Psi_{i\qb}\Psi_{i\qg}}{\Psi_{i1}}u^{i,1}.
	\end{equation}
The identity \eqref{bn} can be proved similarly as  Lemma~\ref{bj}.

This idea can be used to derive more general universal identities. In this section, we construct some operators that can be viewed as certain generalizations of those operators given in Sect.\,\ref{bq}. 
\begin{definition}
	Define vector fields $A^m_{\qa,p}$ for $p\geq m+1$ and $m\geq 0$ by 
	\[
		A^m_{\qa,p} = \begin{cases}
			O_{\qa,p}, & \text{for}\quad p = m+1,\\
			O_{\qa,p}+\sum_{k = 0}^{p-m-2} 
			 \ddd{\qt^\qg_0\qt_{\qa,k}}O_{\qg,p-k-1},& \text{for}\quad p \geq m+2,
		\end{cases}
		\]
	here $O_{\qa,p}$ are the Eguchi-Xiong operators defined by \eqref{bo}, \eqref{bp}.
\end{definition}
We have the following observations on these operators.
\begin{lemma}
	The vector fields $A^m_{\qa,p}$ are in involution.
\end{lemma}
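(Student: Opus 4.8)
The plan is to rewrite each $A^m_{\alpha,p}$ as a coefficient-linear combination of the \emph{commuting} Eguchi--Xiong fields $O_{\gamma,j}$ with $j\geq 1$, then to prove a single vanishing lemma for the action of these fields on the genus-zero two-point functions that appear as the coefficients; involution will follow formally. Concretely, I would first unfold the definition by substituting $O_{\alpha,p}=\frac{\partial}{\partial t^{\alpha,p}}-\sum_{j=0}^{p-1}\ddd{\qt^\gamma_0\qt_{\alpha,p-1-j}}O_{\gamma,j}$ into both cases of the definition of $A^m_{\alpha,p}$, which yields the uniform closed form
\[
A^m_{\alpha,p}=\frac{\partial}{\partial t^{\alpha,p}}-\sum_{j=0}^{m}\ddd{\qt^\gamma_0\qt_{\alpha,p-1-j}}O_{\gamma,j}=O_{\alpha,p}+\sum_{j=m+1}^{p-1}\ddd{\qt^\gamma_0\qt_{\alpha,p-1-j}}O_{\gamma,j},
\]
valid for all $p\geq m+1$ (the case $p=m+1$ gives an empty second sum, i.e.\ $A^m_{\alpha,m+1}=O_{\alpha,m+1}$). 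The essential feature of the right-hand expression is that $A^m_{\alpha,p}$ is a combination of the operators $O_{\gamma,j}$ with descendant index $j\geq 1$ \emph{only}, whose coefficients all lie in the set $\{1\}\cup\{\ddd{\qt^\gamma_0\qt_{\alpha,k}}\}$ of genus-zero two-point correlators. Since we already know $[O_{\gamma,j},O_{\delta,j'}]=0$ for $j,j'\geq 1$, the entire bracket is controlled by how the $O_{\gamma,j}$ act on these coefficients.

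The core step is the vanishing
\[
O_{\gamma,j}\,\ddd{\qt^\delta_0\qt_{\beta,k}}=0,\qquad j\geq 1,
\]
which I would prove by induction on $j$. Using~\eqref{bp} one gets $O_{\gamma,j}\ddd{\qt^\delta_0\qt_{\beta,k}}=\ddd{\qt_{\gamma,j}\qt^\delta_0\qt_{\beta,k}}-\sum_{l=0}^{j-1}\ddd{\qt^\mu_0\qt_{\gamma,l}}\,O_{\mu,j-1-l}\ddd{\qt^\delta_0\qt_{\beta,k}}$, where the differentiation turns the two-point correlator into a three-point one. By the inductive hypothesis every summand with $j-1-l\geq 1$ vanishes, so only the $l=j-1$ term survives, equal to $\ddd{\qt^\mu_0\qt_{\gamma,j-1}}\ddd{\qt_{\mu,0}\qt^\delta_0\qt_{\beta,k}}$; this is cancelled precisely by $\ddd{\qt_{\gamma,j}\qt^\delta_0\qt_{\beta,k}}$ after applying the genus-zero topological recursion relation $\ddd{\qt_{\gamma,j}\qt^\delta_0\qt_{\beta,k}}=\ddd{\qt^\mu_0\qt_{\gamma,j-1}}\ddd{\qt_{\mu,0}\qt^\delta_0\qt_{\beta,k}}$, which lowers the descendant index $j\geq 1$. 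The base case $j=1$ is the same computation with $O_{\mu,0}=\frac{\partial}{\partial t^{\mu,0}}$.

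Granting the lemma, involution is immediate. Writing $A^m_{\alpha,p}=\sum_a c_a O_{I_a}$ and $A^{m'}_{\beta,q}=\sum_b d_b O_{J_b}$ with all descendant indices $\geq 1$ and all $c_a,d_b$ among the coefficients above, the Leibniz rule gives
\[
[A^m_{\alpha,p},A^{m'}_{\beta,q}]=\sum_{a,b}\Big(c_a\,O_{I_a}(d_b)\,O_{J_b}-d_b\,O_{J_b}(c_a)\,O_{I_a}+c_a d_b\,[O_{I_a},O_{J_b}]\Big),
\]
and each group vanishes: the commutators $[O_{I_a},O_{J_b}]$ because both indices are $\geq 1$, and the derivative terms $O_{I_a}(d_b)$ and $O_{J_b}(c_a)$ by the vanishing lemma (the constant coefficient $1$ being killed trivially). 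Hence $[A^m_{\alpha,p},A^{m'}_{\beta,q}]=0$ for all admissible indices.

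The hard part is the vanishing lemma; once it is in place everything else is bookkeeping. I expect the only genuine subtlety to be keeping the index placement in the genus-zero topological recursion relation consistent throughout the induction. It is worth stressing that it is the rewriting in terms of $O_{\alpha,p}$ rather than $\frac{\partial}{\partial t^{\alpha,p}}$ that makes the cancellation transparent: in the naive form the leading derivative $\frac{\partial}{\partial t^{\alpha,p}}$ does not annihilate the coefficients but produces three-point functions that must then be reduced by hand, exactly as one already sees in the direct check for $m=0$, where the surviving terms collapse only after one invocation of the genus-zero recursion relation.
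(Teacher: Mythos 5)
Your proof is correct and takes essentially the same approach as the paper's: the paper's one-line argument likewise rests on the commutativity $[O_{\qa,p},O_{\qb,q}]=0$ for $p,q\geq 1$ together with the fact that these operators annihilate the two-point-correlator coefficients appearing in the expansion of $A^m_{\qa,p}$ in terms of the $O_{\qg,j}$ with $j\geq 1$. The only divergence is in how that annihilation is justified: the paper cites $O_{\qa,p}(v^\qb)=0$ for $p\geq 1$ (implicitly combined with the fact that genus-zero two-point functions depend on the $v^\qg$ alone), whereas you prove $O_{\qg,j}\ddd{\qt^\qd_0\qt_{\qb,k}}=0$ directly by induction on $j$ via the genus-zero topological recursion relation --- an equivalent, self-contained substitute for the same ingredient.
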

\begin{proof}
	This follows from the fact that $[O_{\qa,p},O_{\qb,q}]=0$ for $p,q\geq 1$ and $O_{\qa,p}(v^\qb) = 0$ for $p\geq 1$. The lemma is proved.
\end{proof}
\begin{lemma}
	The vector fields $A^m_{\qa,p}$ satisfy the recursion relation
	\begin{align}
		\label{br}
		A^0_{\qa,p} &= \diff{}{t^{\qa,p}}-
		\ddd{\qt^{\qg}_0\qt_{\qa,p-1}}
		\diff{}{t^{\qg,0}},\\
		\label{bs}
		A^{m+1}_{\qa,p} &= A^{m}_{\qa,p}-
		\ddd{\qt^{\qg}_0\qt_{\qa,p-m-2}}
		A^m_{\qg,m+1},\quad m\geq 0.
	\end{align}
\end{lemma}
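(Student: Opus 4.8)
The plan is to verify both identities by direct substitution of the two definitions involved --- the recursive definition \eqref{bp} of the Eguchi-Xiong operators $O_{\qa,p}$ and the piecewise definition of the fields $A^m_{\qa,p}$ --- splitting into the branches of the latter and tracking the single term that survives after the correction sums cancel. There is no conceptual input beyond careful index bookkeeping.

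First I would establish \eqref{br}. For $p=1$ the field $A^0_{\qa,1}$ equals $O_{\qa,1}$ by definition, and expanding \eqref{bp} together with \eqref{bo} immediately gives $O_{\qa,1}=\diff{}{t^{\qa,1}}-\ddd{\qt^\qg_0\qt_{\qa,0}}\diff{}{t^{\qg,0}}$, which is \eqref{br} in this case. For $p\geq 2$ I would take the second branch $A^0_{\qa,p}=O_{\qa,p}+\sum_{k=0}^{p-2}\ddd{\qt^\qg_0\qt_{\qa,k}}O_{\qg,p-k-1}$ and substitute for $O_{\qa,p}$ its defining expression, whose correction sum runs over $k=0,\dots,p-1$. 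The two sums cancel term by term, and the only uncancelled contribution is the $k=p-1$ term $-\ddd{\qt^\qg_0\qt_{\qa,p-1}}O_{\qg,0}=-\ddd{\qt^\qg_0\qt_{\qa,p-1}}\diff{}{t^{\qg,0}}$, giving exactly \eqref{br}.

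Next I would prove the step \eqref{bs}, where the key observation is that the correction field always falls into the first branch of its definition, so $A^m_{\qg,m+1}=O_{\qg,m+1}$; this is what makes the cancellation transparent. I would split according to the branch occupied by $A^{m+1}_{\qa,p}$. For $p=m+2$ we have $A^{m+1}_{\qa,m+2}=O_{\qa,m+2}$, whereas the second branch gives $A^m_{\qa,m+2}=O_{\qa,m+2}+\ddd{\qt^\qg_0\qt_{\qa,0}}O_{\qg,m+1}$; subtracting $\ddd{\qt^\qg_0\qt_{\qa,0}}A^m_{\qg,m+1}=\ddd{\qt^\qg_0\qt_{\qa,0}}O_{\qg,m+1}$ removes the extra term and reproduces $O_{\qa,m+2}$. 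For $p\geq m+3$ both $A^{m+1}_{\qa,p}$ and $A^m_{\qa,p}$ occupy the second branch, and the term subtracted in \eqref{bs} is precisely the $k=p-m-2$ summand of $A^m_{\qa,p}$, since $O_{\qg,p-(p-m-2)-1}=O_{\qg,m+1}$. Deleting that single summand lowers the upper limit of the sum from $p-m-2$ to $p-m-3$, which is exactly the definition of $A^{m+1}_{\qa,p}$.

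I do not expect a genuine obstacle: the lemma is an algebraic identity among the piecewise definitions, proven without any analysis of the correlators themselves. The only real care required is in matching the ranges of the two correction sums and in verifying that the minimal-$p$ boundary cases --- where one of the fields falls into its first branch --- also yield the stated relation; these are isolated as the separate sub-cases above.
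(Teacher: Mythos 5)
Your proof is correct and follows exactly the route the paper takes: the paper's proof of this lemma is simply the remark that it follows ``directly from the definition,'' and your case analysis (splitting on the branch of the piecewise definition of $A^m_{\qa,p}$, cancelling the correction sums, and isolating the single surviving term) is precisely that direct verification, carried out with the details filled in.
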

\begin{proof}
The lemma is proved directly from the definition.
\end{proof}
The recursive description of $A^m_{\qa,p}$ allows us to compute their action in terms of the flat coordinates.
\begin{proposition}
	We have
	\begin{align}
		\label{bt}
		A^{m}_{\qa,p}(v^{\qb,r})&=
			\sum_{j_1=0}^{r-1}\sum_{j_2=0}^{j_1-1}\dots\sum_{j_{m+1}=0}^{j_m-1}\binom{r}{j_1+1}\binom{j_1}{j_2+1}\dots\binom{j_m}{j_{m+1}+1}\prod_{\ell = 1}^m\ddd{\qt^{\qg_\ell}_0\qt_{\qg_{\ell+1},0}\qt_{1,0}^{j_{m+1-\ell}-j_{m+2-\ell}}}\\
			\notag
			&\times\ddd{\qt_{\qa,p-m-1}\qt_{\qg_1,0}\qt_{1,0}^{j_{m+1}+1}}\ddd{\qt_0^{\qg_{m+1}}\qt_{0}^\qb\qt_{1,0}^{r-j_1}}.
	\end{align}
	In particular, we have $A^{m}_{\qa,p}(v^{\qb,r}) = 0$ for $r\leq m$.
\end{proposition}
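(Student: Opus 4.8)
The plan is to prove \eqref{bt} by induction on $m$, where the base case $m=0$ and the inductive step both reduce to a single computation with the genus-zero topological recursion relation~\cite{witten1990two}. Recall that for the generating functions this relation reads $\ddd{\qt_{\qa,k}\qt_{\qb,q}\qt_{\qg,s}} = \ddd{\qt_{\qa,k-1}\qt^\mu_0}\ddd{\qt_{\mu,0}\qt_{\qb,q}\qt_{\qg,s}}$ for $k\geq 1$. Since $\qp_x=\qp/\qp t^{1,0}$ acts on a correlator by inserting one factor $\qt_{1,0}$, applying $\qp_x^{s}$ to this identity with $\qb=\qg$, $q=0$ and the third insertion equal to $\qt_{1,0}$, and expanding by the Leibniz rule, I obtain the relation that drives the whole argument:
\begin{equation}\label{eq:dTRR-plan}
	\ddd{\qt_{\qa,k}\qt_{\qg,0}\qt_{1,0}^{s+1}} = \sum_{a=0}^{s}\binom{s}{a}\ddd{\qt_{\qa,k-1}\qt^\mu_0\qt_{1,0}^{a}}\ddd{\qt_{\mu,0}\qt_{\qg,0}\qt_{1,0}^{s-a+1}},\quad k\geq 1.
\end{equation}

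For the base case $m=0$, I would write $v^{\qb,r}=\qp_x^{r}v^\qb=\ddd{\qt^\qb_0\qt_{1,0}^{r+1}}$ and compute directly from the definition $A^0_{\qa,p}=\qp/\qp t^{\qa,p}-\ddd{\qt^\qg_0\qt_{\qa,p-1}}\,\qp/\qp t^{\qg,0}$ that $A^0_{\qa,p}(v^{\qb,r})=\ddd{\qt_{\qa,p}\qt^\qb_0\qt_{1,0}^{r+1}}-\ddd{\qt^\qg_0\qt_{\qa,p-1}}\ddd{\qt_{\qg,0}\qt^\qb_0\qt_{1,0}^{r+1}}$. Applying \eqref{eq:dTRR-plan} with $k=p$, $s=r$ to the first term, its $a=0$ summand is precisely the second term and cancels it; after the relabeling $a=j_1+1$ the remaining summands $a=1,\dots,r$ are exactly the $m=0$ instance of \eqref{bt} (the up/down placement of the contracted index $\mu=\qg_1$ being immaterial since it is summed through the metric).

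For the inductive step I would feed the induction hypothesis, that \eqref{bt} holds for $m$ and all $\qa,p,\qb,r$, into the recursion \eqref{bs}, $A^{m+1}_{\qa,p}=A^m_{\qa,p}-\ddd{\qt^\qg_0\qt_{\qa,p-m-2}}A^m_{\qg,m+1}$. The crucial point is that in \eqref{bt} the nested sums, the binomial coefficients, the chain of correlators indexed by $\qg_1,\dots,\qg_{m+1}$ and the tail all depend only on $m,r,\qb$, while $\qa$ and $p$ enter only through the head correlator $\ddd{\qt_{\qa,p-m-1}\qt_{\qg_1,0}\qt_{1,0}^{j_{m+1}+1}}$. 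Hence $A^m_{\qa,p}(v^{\qb,r})$ and $\ddd{\qt^\qg_0\qt_{\qa,p-m-2}}A^m_{\qg,m+1}(v^{\qb,r})$ can be subtracted term by term, and they differ only in this head, the second contributing $\ddd{\qt^\qg_0\qt_{\qa,p-m-2}}\ddd{\qt_{\qg,0}\qt_{\qg_1,0}\qt_{1,0}^{j_{m+1}+1}}$. Since $A^{m+1}_{\qa,p}$ is defined only for $p\geq m+2$, we have $p-m-1\geq 1$, so \eqref{eq:dTRR-plan} applies to the head of the first term; its $a=0$ summand cancels the head of the second, and the remaining $a\geq 1$ summands produce exactly one new head $\ddd{\qt_{\qa,p-m-2}\qt^\mu_0\qt_{1,0}^{a}}$ and one new chain link $\ddd{\qt_{\mu,0}\qt_{\qg_1,0}\qt_{1,0}^{j_{m+1}-a+1}}$. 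Putting $a=j_{m+2}+1$ turns $\sum_{a=1}^{j_{m+1}}\binom{j_{m+1}}{a}$ into $\sum_{j_{m+2}=0}^{j_{m+1}-1}\binom{j_{m+1}}{j_{m+2}+1}$, and the index shift $\qg_i\mapsto\qg_{i+1}$ then matches every factor with the instance of \eqref{bt} for $m+1$.

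Finally, the vanishing $A^m_{\qa,p}(v^{\qb,r})=0$ for $r\leq m$ is read off directly from \eqref{bt}, since the nested summation ranges force $r-1\geq j_1>j_2>\dots>j_{m+1}\geq 0$, which has no solution unless $r\geq m+1$. The conceptual content of the proof is entirely contained in the single identity \eqref{eq:dTRR-plan}; I expect the only real difficulty to be purely bookkeeping, namely keeping the chain labels $\qg_1,\dots,\qg_{m+1}$, the summation bounds and the binomial coefficients correctly aligned under the shift $\qg_i\mapsto\qg_{i+1}$ when matching the output of the recursion against the target formula \eqref{bt}.
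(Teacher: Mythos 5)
Your proof is correct and follows essentially the same route as the paper: the base case $m=0$ is handled by the genus-zero topological recursion relation exactly as in the paper's computation, and the induction on $m$ via the recursion \eqref{bs} is precisely the step the paper declares ``straightforward'' (you have merely spelled out the bookkeeping of the head cancellation and index shift $\qg_i\mapsto\qg_{i+1}$, which checks out). No gaps; your differentiated form of the recursion relation and the emptiness-of-summation argument for the vanishing statement are both valid.
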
 
\begin{proof}
 For $A^{0}_{\qa,p}$, we have
	\begin{align*}
		A^{0}_{\qa,p}(v^{\qb,r}) &= \ddd{\qt_{\qa,p}\qt^\qb_0\qt_{1,0}^{r+1}}-\ddd{\qt_{\qa,p-2}\qt^\mu}\ddd{\qt_{\mu,0}\qt^\qb_0\qt_{1,0}^{r+1}}\\
		&=\sum_{j=0}^{r-1}\binom{r}{j+1}\ddd{\qt_{\qa,p-1}\qt_{\mu,0}\qt_{1,0}^{j+1}}\ddd{\qt^\mu_0\qt_{0}^\qb\qt_{1,0}^{r-j}},
	\end{align*}
	here we use the genus zero topological recursion relation to derive the second line. Hence, \eqref{bt} holds true for $m=0$. Then it is straightforward to prove the general case by induction on $m$ using the recursive relation \eqref{bs}. The proposition is proved. 
\end{proof}
	
\begin{corollary}
	\label{bu}
	In terms of the canonical coordinates, we have
	\[
		A^{m}_{\qa,p}(v^{\qb,m+1}) = \sum_i\qp_\qe h_{\qa,p-m-1}\Psi_{i}^\qe\Psi_i^\qb (u^{i,1})^{m+2}.
		\]
\end{corollary}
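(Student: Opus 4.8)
The plan is to deduce this directly from the closed formula~\eqref{bt} by specializing to $r=m+1$ and then rewriting everything in canonical coordinates. First I would examine the nested summation in~\eqref{bt}: the bounds force $j_1>j_2>\dots>j_{m+1}\geq 0$ together with $j_1\leq r-1=m$, so for $r=m+1$ there are exactly $m+1$ strictly decreasing indices taking values in $\{0,1,\dots,m\}$; hence the only surviving term is $j_k=m+1-k$, $k=1,\dots,m+1$. For this term every binomial coefficient $\binom{r}{j_1+1},\binom{j_1}{j_2+1},\dots$ equals $1$, each exponent $j_{m+1-\ell}-j_{m+2-\ell}$ collapses to $1$ (since $j_{m+1-\ell}=\ell$ and $j_{m+2-\ell}=\ell-1$), while $j_{m+1}+1=1$ and $r-j_1=1$.

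Consequently~\eqref{bt} reduces to a single product of three-point correlators,
\[
A^{m}_{\qa,p}(v^{\qb,m+1}) = \ddd{\qt_{\qa,p-m-1}\qt_{\qg_1,0}\qt_{1,0}}\Big(\prod_{\ell=1}^m\ddd{\qt^{\qg_\ell}_0\qt_{\qg_{\ell+1},0}\qt_{1,0}}\Big)\ddd{\qt_0^{\qg_{m+1}}\qt_0^\qb\qt_{1,0}},
\]
with the indices $\qg_1,\dots,\qg_{m+1}$ summed. Next I would convert each factor to canonical coordinates: the middle correlators and the last one are handled by Lemma~\ref{bj} (giving $\sum_j\Psi_j^{\qg_\ell}\Psi_{j\qg_{\ell+1}}u^{j,1}$ and $\sum_k\Psi_k^{\qg_{m+1}}\Psi_k^\qb u^{k,1}$ after raising indices), while the first correlator is treated by the descendant analogue~\eqref{bn}, which yields $\sum_{i}\qp_\qe h_{\qa,p-m-1}\Psi_i^\qe\Psi_{i\qg_1}u^{i,1}$.

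Finally I would contract the chain of $\Psi$-factors. Each summed index $\qg_\ell$ appears once with a lowered index and once with a raised index in two adjacent factors, so the orthogonality relation~\eqref{bd}, $\Psi_{i\qa}\Psi_j^\qa=\qd_{ij}$, telescopes the product and forces all the separate summation labels to coincide with a single $i$. What survives is the free factor $\Psi_i^\qe$ from the first correlator, the free factor $\Psi_i^\qb$ from the last one, and one power of $u^{i,1}$ from each of the $m+2$ correlators, producing precisely $\sum_i\qp_\qe h_{\qa,p-m-1}\Psi_i^\qe\Psi_i^\qb(u^{i,1})^{m+2}$. The only genuinely delicate point is the bookkeeping in the first two steps: verifying that the strictly decreasing bounds isolate a single term and that the remaining index contractions chain up correctly; once this is set up, the computation is routine.
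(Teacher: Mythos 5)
Your proposal is correct and follows exactly the paper's route: specialize the closed formula~\eqref{bt} to $r=m+1$ (where the strictly decreasing summation indices collapse the nested sum to the single term $j_k=m+1-k$, yielding the chain of genus-zero three-point correlators), and then convert to canonical coordinates via~\eqref{bn} and Lemma~\ref{bj}, telescoping the $\Psi$-chain with the orthogonality relation~\eqref{bd}. The paper compresses the second half into the phrase ``straightforward computation using~\eqref{bn}''; your write-up simply makes that computation explicit, and it checks out.
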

\begin{proof}
	Taking $r=m+1$ in the identity \eqref{bt}, we have
	\[
		A^{m}_{\qa,p}(v^{\qb,m+1}) = \ddd{\qt_{\qa,p-m-1}\qt_{\qg_1,0}\qt_{1,0}}\ddd{\qt_0^{\qg_{m+1}}\qt^\qb_0\qt_{1,0}}\prod_{k=1}^m\ddd{\qt_0^{\qg_k}\qt_{\qg_{k+1},0}\qt_{1,0}}.
		\]
	Then we prove the corollary by a straightforward computation using \eqref{bn}. 
\end{proof}

We continue to construct universal identities using the operators $A^m_{\qa,p}$. It is straightforward to obtain the following result.
\begin{proposition}
	For any $p\geq 3g-2$, we have a universal identity given by
	\begin{equation}
		\label{bv}
		A^{3g-3}_{\qa,p}(\mathcal F_g) = \ddd{\qt_{\qa,p-3g-2}\qt^\qb_0\qt_{\mu,0}}M[g]^\mu_\qb\int_{\oM_{g,1}}\psi_1^{3g-2}.
	\end{equation}
\end{proposition}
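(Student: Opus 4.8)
The plan is to combine the decomposition of Theorem~\ref{ae} with the strong low-order vanishing of $A^{3g-3}_{\qa,p}$, so that only the leading monomial of the $F_g^{KdV}$-part contributes, and then to match the outcome against the right-hand side using the canonical-coordinate identities already proved.

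First I would observe that $A^{3g-3}_{\qa,p}$ is a vector field: the Eguchi--Xiong operators $O_{\qa,p}$ of \eqref{bo}--\eqref{bp} are first order, and the correction terms in the definition of $A^m_{\qa,p}$ carry only genus-zero correlators as coefficients. Writing $\mathcal F_g = F_g(u^{j,0},\dots,u^{j,3g-2})$ and applying the chain rule gives
\[
A^{3g-3}_{\qa,p}(\mathcal F_g) = \sum_{j=1}^N\sum_{r=0}^{3g-2} \diff{F_g}{u^{j,r}}\, A^{3g-3}_{\qa,p}(u^{j,r}).
\]
Since $u^{j,r}$ depends only on $v^{\qb,s}$ with $s\leq r$, and since $A^{3g-3}_{\qa,p}(v^{\qb,s})=0$ for $s\leq 3g-3$ by the vanishing statement accompanying \eqref{bt}, the chain rule forces $A^{3g-3}_{\qa,p}(u^{j,r})=0$ for $r\leq 3g-3$; as $F_g$ involves no jet of order above $3g-2$, only the $r=3g-2$ term survives.

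Next I would evaluate the two surviving factors separately. For the operator on the top jet, again only the $s=3g-2$ summand of the chain rule remains, so $A^{3g-3}_{\qa,p}(u^{j,3g-2}) = \diff{u^j}{v^\qb}\,A^{3g-3}_{\qa,p}(v^{\qb,3g-2})$; substituting $\diff{u^j}{v^\qb}=\Psi_{j\qb}/\Psi_{j1}$, Corollary~\ref{bu} with $m=3g-3$, and the orthogonality relation \eqref{bd} collapses the internal sum to $\Psi_{j1}^{-1}\,\qp_\qe h_{\qa,p-3g+2}\,\Psi_j^\qe\,(u^{j,1})^{3g-1}$. For the coefficient I would use the decomposition \eqref{ah}: the $H_g$-contribution to $\diff{F_g}{u^{j,3g-2}}$ vanishes by \eqref{eq:RestrictionOnH} with $n=1$, and among the monomials of $F_g^{KdV}$ in \eqref{ar} only the partition $\mu=(3g-2)$ carries the variable $u^{j,3g-2}$; hence $\diff{F_g}{u^{j,3g-2}} = f_j(u)^{1-g}\,C_{g;(3g-2)}\,(u^{j,1})^{-g}$.

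Finally I would multiply the two factors, sum over $j$, and use $f_j=\Psi_{j1}^2$ to reach
\[
A^{3g-3}_{\qa,p}(\mathcal F_g) = C_{g;(3g-2)} \sum_j \qp_\qe h_{\qa,p-3g+2}\,\Psi_j^\qe\,\Psi_{j1}^{1-2g}\,(u^{j,1})^{2g-1}.
\]
To recognize this as the right-hand side of \eqref{bv}, I would invoke identity \eqref{bn}, which writes $\ddd{\qt_{\qa,p-3g+2}\qt^\qb_0\qt_{\mu,0}}$ in canonical coordinates, together with the canonical form $M[g]^\mu_\qb=\sum_i \Psi_i^\mu\Psi_{i\qb}\Psi_{i1}^{2-2g}(u^{i,1})^{2g-2}$ derived in the proof of Theorem~\ref{af}; contracting these over $\qb$ and $\mu$ and applying \eqref{bd} twice collapses the double sum to exactly the displayed expression, while $C_{g;(3g-2)}=\int_{\oM_{g,1}}\psi_1^{3g-2}$ supplies the remaining constant. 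The only delicate point is the bookkeeping of the powers of $\Psi_{j1}$ and $u^{j,1}$ and the repeated use of the orthogonality relations to collapse the index sums; once these are tracked correctly, the identity \eqref{bv} follows by direct substitution.
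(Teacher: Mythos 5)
Your proof is correct and takes essentially the same route as the paper: the paper's terse opening step ``combining Theorem~\ref{ae} and Corollary~\ref{bu}'' is precisely the chain-rule and low-jet-vanishing argument you spell out (isolating the top jet $u^{j,3g-2}$, extracting the coefficient $f_j^{1-g}C_{g;(3g-2)}(u^{j,1})^{-g}$), and your final matching of both sides in canonical coordinates via \eqref{bn}, the canonical expression for $M[g]$, and the orthogonality \eqref{bd} is identical to the paper's. You also correctly resolved the index in \eqref{bv}, reading the subscript $p-3g-2$ as $p-(3g-2)$, which is what Corollary~\ref{bu} with $m=3g-3$ forces.
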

\begin{proof}
	Combining Theorem~\ref{ae} and Corollary~\ref{bu}, it follows that
	\[
		A^{3g-3}_{\qa,p}(\mathcal F_g) = \sum_i\qp_\qe h_{\qa,p-3g-2}\frac{\Psi_i^\qe}{\Psi_{i1}^{2g-1}}(u^{i,1})^{2g-1}\int_{\oM_{g,1}}\psi_1^{3g-2}.
		\]
		Using \eqref{bn} and the expression for $M[g]$, the right-hand side of \eqref{bv} reads
		\[
			\text{R.H.S.} = \kk{\sum_i \qp_\qe h_{\qa,p-3g-2}\frac{\Psi_i^\qe\Psi_{i}^\qb\Psi_{i\mu}}{\Psi_{i1}}u^{i,1}}\kk{\sum_j\frac{\Psi_j^\mu\Psi_{j\qb}}{\Psi_{j1}^{2g-2}}(u^{j,1})^{2g-2}}.
			\]
		The proposition is proved.
\end{proof}
It is clear that the identity \eqref{bv} is the general form of \eqref{bl} for $n=1$ and $k_1 = 3g-2$. However, it is not easy to generalize this for $n\geq 2$. 
\begin{definition}
	Define $A^{m_1,\dots,m_n}_{\qa_1,p_1;\dots;\qa_n,p_n}$ to be the order $n$ differential operator
	\[
		A^{m_1,\dots,m_n}_{\qa_1,p_1;\dots;\qa_n,p_n} = A^{m_1}_{\qa_1,p_1}\comp\dots\comp A^{m_n}_{\qa_n,p_n}.
		\]
\end{definition}

Generally speaking, the action of operator $A^{m_1,\dots,m_n}_{\qa_1,p_1;\dots;\qa_n,p_n}$ is hard to describe. We have the following properties.
\begin{proposition}
	\label{bw}
	We have
	\[
		A^{m_1,\dots,m_n}_{\qa_1,p_1;\dots;\qa_n,p_n}(v^{\qb,r}) = 0,\quad r\leq m_1+\dots+m_n.
		\]
\end{proposition}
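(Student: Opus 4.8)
The plan is to prove the statement by induction on $n$, controlling not merely the vanishing but a quantitative invariant of $A^{m_1,\dots,m_n}_{\qa_1,p_1;\dots;\qa_n,p_n}(v^{\qb,r})$ built from two gradings. On the ring of differential polynomials in the positive-level jet variables $v^{\qg,s}$ ($s\geq 1$) with coefficients in the level-zero functions $C^\infty(v)$, I would use the differential degree $\deg_{\qp_x}$ (with $\deg_{\qp_x}v^{\qg,s}=s$, as above) together with a polynomial degree $\deg_{\mathrm{poly}}$ counting the positive-level factors ($\deg_{\mathrm{poly}}v^{\qg,s}=1$ for $s\geq 1$ and $\deg_{\mathrm{poly}}c(v)=0$). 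Both are additive on products. Set $\mathrm{ex}:=\deg_{\qp_x}-\deg_{\mathrm{poly}}$, the \emph{excess}, and write $\mathrm{ex}(f)$ for the maximum of $\mathrm{ex}$ over the monomials of $f$ (with $\mathrm{ex}(0)=-\infty$). The elementary but crucial observation is that every nonzero $f$ has $\mathrm{ex}(f)\geq 0$, since each positive-level factor $v^{\qg,s}$ contributes $s-1\geq 0$ and a pure level-zero coefficient contributes $0$.

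First I would record the structural form of the operators. The vanishing $A^m_{\qa,p}(v^{\qg,s})=0$ for $s\leq m$ from \eqref{bt} (applied with $r=0,1,\dots,m$) shows that $A^m_{\qa,p}$ kills all level-zero functions and all $v^{\qg,s}$ with $s\leq m$, so as a derivation in the jet coordinates
\[
A^m_{\qa,p} \;=\; \sum_{\qg}\sum_{s\geq m+1} A^m_{\qa,p}(v^{\qg,s})\,\frac{\qp}{\qp v^{\qg,s}}.
\]
This yields a vanishing criterion: if $f$ contains no jet variable of level exceeding $m$ then $A^m_{\qa,p}(f)=0$; since a monomial carrying a level-$\ell$ factor has excess at least $\ell-1$, the criterion applies whenever $\mathrm{ex}(f)\leq m-1$.

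The heart of the argument is the excess-reduction estimate $\mathrm{ex}\big(A^m_{\qa,p}(f)\big)\leq \mathrm{ex}(f)-m$. I would first prove $\mathrm{ex}\big(A^m_{\qa,p}(v^{\qg,s})\big)\leq s-m-1$ for $s\geq m+1$: by \eqref{bt} this function is a sum of monomials, each a product of exactly $m+2$ genus-zero correlators whose differential degrees sum to $s+1$ (by the telescoping $(j_{m+1}+1)+(s-j_1)+\sum_{\ell}(j_{m+1-\ell}-j_{m+2-\ell})=s+1$). Each of these correlators has at least three insertions, hence strictly positive $\deg_{\qp_x}$, so by $\qp_x$-homogeneity each of its monomials has $\deg_{\mathrm{poly}}\geq 1$; therefore every monomial of $A^m_{\qa,p}(v^{\qg,s})$ has $\deg_{\mathrm{poly}}\geq m+2$ and $\deg_{\qp_x}=s+1$, giving excess $\leq s-m-1$ (the extremal case $s=m+1$ is exactly Corollary~\ref{bu}). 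The general estimate then follows from the displayed form of $A^m_{\qa,p}$ together with the Leibniz rule: $A^m_{\qa,p}$ acts trivially on the $C^\infty(v)$-coefficients and, on a monomial of $f$, the operator $\qp/\qp v^{\qg,s}$ removes a factor of excess $s-1$ and multiplies by $A^m_{\qa,p}(v^{\qg,s})$ of excess $\leq s-m-1$, a net change of $-m$.

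Finally I would run the induction with the strengthened hypothesis $\mathrm{ex}\big(A^{m_1,\dots,m_n}_{\qa_1,p_1;\dots;\qa_n,p_n}(v^{\qb,r})\big)\leq (r-1)-\sum_{i=1}^n m_i$: the base case $n=0$ is $\mathrm{ex}(v^{\qb,r})=r-1$, and the inductive step is a single application of the excess-reduction estimate (to $A^{m_1}_{\qa_1,p_1}$) to the function $A^{m_2,\dots,m_n}_{\qa_2,p_2;\dots;\qa_n,p_n}(v^{\qb,r})$. When $r\leq \sum_i m_i$ the right-hand side is negative, which is incompatible with $\mathrm{ex}\geq 0$ for a nonzero function, forcing the vanishing. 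The main obstacle is the excess-reduction step, and specifically the uniform lower bound $\deg_{\mathrm{poly}}\geq m+2$ on \emph{every} monomial of $A^m_{\qa,p}(v^{\qg,s})$: one must verify from \eqref{bt} that the $m+2$ correlator factors are genuinely of positive differential degree with total degree $s+1$, and it is the $\qp_x$-homogeneity of the genus-zero correlators that guarantees no lower-polynomial-degree monomials appear and spoil the bound.
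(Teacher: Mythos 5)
Your proof is correct, and it rests on the same foundations as the paper's own argument: the explicit formula \eqref{bt}, the fact that a genus-zero $(j+2)$-point correlator $\ddd{\qt_{\qa,p}\qt_{\qb,q}\qt_{1,0}^j}$ is a homogeneous differential polynomial of degree $j$, and a degree count that forces vanishing once $r\leq m_1+\dots+m_n$. The difference is in the bookkeeping, and it is a meaningful one. The paper tracks only the maximal jet level: from \eqref{bt} it deduces $A^m_{\qa,p}(v^{\qb,r})\in C^\infty(v)[v^{\qg,1},\dots,v^{\qg,r-m}]$, asserts that this bound iterates (``it is easy to see'') to give levels at most $r-\sum_i m_i$ for the composition, and then invokes homogeneity of degree $r+n$ for the final contradiction. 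As your write-up implicitly recognizes, the level bound alone does not iterate: applying $A^{m}_{\qa,p}$ by the Leibniz rule to a monomial like $v^{\qg,2}v^{\qd,2}$ leaves untouched factors of the top level, so the intermediate claim genuinely needs the extra grading you carry, namely the polynomial degree, equivalently your excess $\deg_{\qp_x}-\deg_{\mathrm{poly}}$. Your single invariant --- which drops by at least $m$ under $A^m_{\qa,p}$ (since a factor of excess $s-1$ is replaced by one of excess at most $s-m-1$) and is nonnegative on any nonzero differential polynomial --- packages the paper's two separate counts (level bound and $\qp_x$-homogeneity) into one monotone quantity, making the induction airtight; it also recovers the paper's level statement as a corollary, since the level of any jet factor is at most the excess plus one. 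So your route is best described as a rigorous refinement of the published argument rather than a different one: what it buys is precisely the justification of the iteration step that the paper leaves implicit.
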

\begin{proof}
	Note that as a differential polynomial, genus zero $n$-point functions are of differential degree $n-2$ for $n\geq 2$. In particular, 
	\[
\ddd{\qt_{\qa,p}\qt_{\qb,q}\qt_{1,0}^j}\in C^\infty(v)[v^{\qg,1},\dots,v^{\qg,j}],\quad j\geq 1.
		\]
	Hence, it follows from \eqref{bt} that 
	\[
		A^m_{\qa,p}(v^{\qb,r})\in C^\infty(v)[v^{\qg,1},\dots,v^{\qg,r-m}].
		\]
	Then it is easy to see that
	\[
		A^{m_1,\dots,m_n}_{\qa_1,p_1;\dots;\qa_n,p_n}(v^{\qb,r})\in C^\infty(v)[v^{\qg,1},\dots,v^{\qg,r-m_1-\dots-m_n}].
		\]
	If $A^{m_1,\dots,m_n}_{\qa_1,p_1;\dots;\qa_n,p_n}(v^{\qb,r})$ is non-zero, then it is a differential polynomial of degree $r+n$ and hence
	\[
		r-m_1-\dots-m_n\geq 1.
		\]
The proposition is proved.
\end{proof}
\begin{corollary}
	\label{bx}
	Fix $m_1,\dots,m_{\ell}$, $r_1,\dots,r_n$, with $m_i\geq 1$, $r_i\geq 2$ and $m_1+\dots+m_{\ell} = 3g-3$ for some $g\geq 2$.
	Then we have 
	\[
		A^{m_1,\dots,m_n}_{\qa_1,p_1;\dots;\qa_\ell,p_\ell}(v^{\qb_1,r_1}\dots v^{\qb_n,r_n}) = 0,\quad r_1+\dots+r_n\leq 3g-4+n.
		\]
\end{corollary}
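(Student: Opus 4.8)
The plan is to exploit the fact that the operator in question, which I will write as $A=A^{m_1}_{\qa_1,p_1}\circ\cdots\circ A^{m_\ell}_{\qa_\ell,p_\ell}$, is a composition of commuting first-order operators, and then to reduce the computation — via the Leibniz rule — to the single-factor vanishing already established in Proposition~\ref{bw}. First I would record two structural facts: each $A^{m_i}_{\qa_i,p_i}$ is a derivation (a vector field), and these vector fields pairwise commute, which is exactly the involution property proved above. Consequently, for any subset $I\subseteq\{1,\dots,\ell\}$ the sub-composition $\prod_{i\in I} A^{m_i}_{\qa_i,p_i}$ is a well-defined operator independent of the order of its factors, and it is again of the type covered by Proposition~\ref{bw}, with superscript data $\{m_i\}_{i\in I}$ summing to $\sum_{i\in I} m_i$.

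Using the multivariate Leibniz rule for a composition of commuting derivations, I would then expand
\[
A\bigl(v^{\qb_1,r_1}\cdots v^{\qb_n,r_n}\bigr) = \sum_{\phi\colon\{1,\dots,\ell\}\to\{1,\dots,n\}} \ \prod_{j=1}^n \Bigl(\prod_{i\in\phi^{-1}(j)} A^{m_i}_{\qa_i,p_i}\Bigr)\bigl(v^{\qb_j,r_j}\bigr),
\]
where $\phi$ runs over all ways of assigning each of the $\ell$ operators to one of the $n$ factors. The variable coefficients of the $A^{m_i}$ cause no trouble, since this identity uses only the Leibniz rule $X(fg)=X(f)g+fX(g)$ and commutativity, and it is checked by a routine induction on $\ell$.

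The heart of the argument is then a counting estimate. For a fixed $\phi$, set $I_j=\phi^{-1}(j)$ and $S=\{\,j : I_j\neq\emptyset\,\}$. By Proposition~\ref{bw}, the factor $\bigl(\prod_{i\in I_j} A^{m_i}_{\qa_i,p_i}\bigr)(v^{\qb_j,r_j})$ vanishes whenever $r_j\leq\sum_{i\in I_j} m_i$; hence a nonzero summand forces $r_j\geq\sum_{i\in I_j} m_i+1$ for every $j\in S$, while $r_j\geq 2$ holds for the remaining factors by hypothesis. Summing these inequalities and using $\sum_{i=1}^\ell m_i = 3g-3$ together with $|S|\leq n$, I obtain
\[
\sum_{j=1}^n r_j \ \geq\ \sum_{j\in S}\Bigl(\sum_{i\in I_j} m_i+1\Bigr) + 2\,(n-|S|) \ =\ (3g-3)+2n-|S| \ \geq\ 3g-3+n.
\]
Thus every summand vanishes as soon as $r_1+\cdots+r_n\leq 3g-4+n$, which is exactly the claim.

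The only genuinely delicate point will be the bookkeeping in the Leibniz expansion: one must verify that restricting a commuting composition to a subset of its factors again lands in the class of operators to which Proposition~\ref{bw} applies, and one must be careful to feed the ``untouched'' factors (those with $I_j=\emptyset$) their bound $r_j\geq 2$ into the estimate — this is precisely what produces the additive $n$ in the inequality and hence the threshold $3g-4+n$. Everything else is routine.
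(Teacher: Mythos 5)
Your proof is correct and follows essentially the same route as the paper's: a Leibniz expansion distributing the vector fields $A^{m_i}_{\qa_i,p_i}$ over the factors $v^{\qb_j,r_j}$, then applying Proposition~\ref{bw} to each nonempty group and the hypothesis $r_j\geq 2$ to the untouched factors, yielding $\sum_j r_j\geq (3g-3)+2n-|S|\geq 3g-3+n$ for any nonzero summand. The only cosmetic difference is that you make the bookkeeping explicit via the sum over assignments $\phi$, where the paper phrases the same distribution argument ``without loss of generality.''
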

\begin{proof}
	By successively using the Leibniz rule, the action \[A^{m_1,\dots,m_\ell}_{\qa_1,p_1;\dots;\qa_\ell,p_\ell}(v^{\qb_1,r_1}\dots v^{\qb_n,r_n})\] can be computed by distributing the action of vector fields $A^{m_i}_{\qa_i,p_i}$ on each $v^{\qb_j,r_j}$. If the action is to give a non-zero result, without loss of generality, we can assume that the following expression is non-zero:
	\[
		A^{m_1,\dots,m_{\ell_1}}_{\qa_1,p_1;\dots;\qa_{\ell_1},p_{\ell_1}}(v^{\qb_1,r_1})\dots A^{m_{\ell_{s-1}+1},\dots,m_{\ell_s}}_{\qa_{\ell_{s-1}+1},p_{\ell_{s-1}+1};\dots;\qa_{\ell_s},p_{\ell_s}}(v^{\qb_s,r_s})v^{\qb_{s+1},r_{s+1}}\dots v^{\qb_n,r_n},
		\]
	for some $s\leq n$. Then it follows from Proposition~\ref{bw} that 
	\[
		r_1\geq m_1+\dots+m_{\ell_1}+1,\quad\dots,\quad r_s\geq m_{\ell_{s-1}+1}+\dots+m_\ell+1.
		\]
	Therefore, the action is non-zero unless
	\[
		r_1+\dots+r_n\geq m_1+\dots+m_\ell+s+2(n-s)\geq 3g-3+n.
		\]
		The corollary is proved.
\end{proof}
The above corollary implies that, with the help of the decomposition \eqref{ah}, the action
\[
	A^{m_1,\dots,m_n}_{\qa_1,p_1;\dots;\qa_n,p_n}(v^{\qb,r})(\mathcal F_g),\quad m_1+\dots+m_n = 3g-3,\quad m_i\geq 1,\quad g\geq 2
	\]
can be represented in terms of genus zero correlators, and hence produce universal identities that can be viewed as general forms of \eqref{bl}. However, it is not straightforward to write down the explicit forms of these identities. Let us illustrate the idea by considering $g=n=2$. 

We fix $p_1\geq 2$ and $p_2\geq 3$. Denote by \[
	Q^{m;\qb,r}_{\qa,p} = A^m_{\qa,p}(v^{\qb,r}),
	\]
then it follows from Corollary~\ref{bx} that
\[
	A^{2,1}_{\qa_2,p_2;\qa_1,p_1}(\mathcal F_2) = Q^{2;\qb_2,3}_{\qa_2,p_2}Q^{1;\qb_1,2}_{\qa_1,p_1}\frac{\qp^2 F_2}{\qp v^{\qb_1,2}v^{\qb_2,3}}+Q^{2;\qb_2,3}_{\qa_2,p_2}\diff{Q^{1;\qb_1,4}_{\qa_1,p_1}}{v^{\qb_2,3}}\diff{F_2}{v^{\qb_1,4}}
	\]
By using Corollary~\ref{bu}, we see that 
\begin{equation}
	\label{by}
	Q^{1;\qb_1,2}_{\qa_1,p_1} = \sum_i\qp_\qa h_{\qa_1,p_1-2}\Psi_i^\qe\Psi_i^{\qb_1}(u^{i,1})^3,\quad Q^{2;\qb_2,3}_{\qa_2,p_2} = \sum_i\qp_\qa h_{\qa_2,p_2-3}\Psi_i^\qe\Psi_i^{\qb_2}(u^{i,1})^4,
\end{equation}
and by using \eqref{ah} and the expression for $F_2^{KdV}$ we see that 
\[
	\frac{\qp^2 F_2}{\qp v^{\qb_1,2}v^{\qb_2,3}} = -\frac{7}{1920}\sum_i\frac{\Psi_{i\qb_1}\Psi_{i\qb_2}}{\Psi_{i1}^5}\frac{1}{(u^{i,1})^3}.
	\]
It is a straightforward computation to obtain that
\[
	Q^{2;\qb_2,3}_{\qa_2,p_2}Q^{1;\qb_1,2}_{\qa_1,p_1}\frac{\qp^2 F_2}{\qp v^{\qb_1,2}v^{\qb_2,3}} = -\frac{7}{1920}\ddd{\qt_{\qa_1}\qt_{\qb,0}\qt^\mu_0}\ddd{\qt_{\qa_2,p_2-3}\qt^\qb_0\qt_{\ql,0}}M[2]^\ql_\mu
	\]
As for the second term, it follows from the identity \eqref{bt} that 
\[
	Q^{1;\qb_1,4}_{\qa_1,p_1} = \sum_{j=0}^3\sum_{\ell = 0}^{j-1}\binom{r}{j+1}\binom{j}{\ell+1}\ddd{\qt_{\qa_1,p_1-2}\qt^\qe_0\qt_{1,0}^{\ell+1}}\ddd{\qt_{\qe,0}\qt_{\qg,0}\qt_{1,0}^{j-\ell}}\ddd{\qt^\qg_0\qt_0^{\qb_1}\qt_{1,0}^{4-j}}
	\]
Hence we see that 
\begin{align*}
	\diff{Q^{1;\qb_1,4}_{\qa_1,p_1}}{v^{\qb_2,3}}=&\,6\ddd{\qt_{\qa_1,p_1-2}\qt^\qe_0\qt_{1,0}}\ddd{\qt_{\qe,0}\qt_{\qg,0}\qt_{1,0}}c^{\qg\qb_1}_{\qb_2}+3\ddd{\qt_{\qa_1,p_1-2}\qt^\qe_0\qt_{1,0}}c_{\qe\qg\qb_2}\ddd{\qt^\qg_0\qt_0^{\qb_1}\qt_{1,0}}\\
	&+\qp_{\qb_2}\ddd{\qt^\qe_0\qt_{\qa_1,p_1-2}}\ddd{\qt_{\qe,0}\qt_{\qg,0}\qt_{1,0}}\ddd{\qt^\qg_0\qt_0^{\qb_1}\qt_{1,0}},
\end{align*}
and we have
\[
	Q^{2;\qb_2,3}_{\qa_2,p_2}\diff{Q^{1;\qb_1,4}_{\qa_1,p_1}}{v^{\qb_2,3}}\diff{F_2}{v^{\qb_1,4}} = \frac{5}{576}\ddd{\qt_{\qa_1}\qt_{\qb,0}\qt^\mu_0}\ddd{\qt_{\qa_2,p_2-3}\qt^\qb_0\qt_{\ql,0}}M[2]^\ql_\mu.
	\]
To summarize, we have the following universal identity for $g=2$:
\[
	A^{2,1}_{\qa_2,p_2;\qa_1,p_1}(\mathcal F_2) = \frac{29}{5760}\ddd{\qt_{\qa_1}\qt_{\qb,0}\qt^\mu_0}\ddd{\qt_{\qa_2,p_2-3}\qt^\qb_0\qt_{\ql,0}}M[2]^\ql_\mu.
	\]

\begin{remark}
    \label{revre}
    As a final remark, let us explain more concretely why the operator $\nordbullet\prod_{i=1}^n O_{\alpha_i,p_i}\nordbullet$ defined by simply taking the normal order product is not suitable for our construction of universal identities. We begin by expressing the Eguchi-Xiong operator by 
    \[
    O_{\qa,p} = \sum_{r=0}^p H^{\mu,r}_{\qa,p}\diff{}{t^{\mu,r}} = \sum_{k\geq 0}G^{\qb,k}_{\qa,p}\diff{}{v^{\qb,k}},
    \]
    here the coefficients $ H^{\mu,r}_{\qa,p}$ and $G^{\qb,k}_{\qa,p}$ are certain differential polynomials given by products of genus zero correlation functions. We know that $G^{\qb,k}_{\qa,p} = A^{p-1}_{\qa,p}(v^{\qb,k})$,  hence it vanishes for $k<p$. It is immediate to obtain that 
    \[G^{\qb,k}_{\qa,p} =  \sum_{r=0}^p H^{\mu,r}_{\qa,p}\ddd{\qt_{\mu,r}\qt^\qb_0\qt_{1,0}^{k+1}}.\]
    Now we simply consider the operator $\tilde O = \nordbullet O_{\qa_1,p_1}O_{\qa_2,p_2}\nordbullet$, which reads
    \begin{align*}
     \tilde O=&\,\sum_{\substack{r_1\leq p_1\\r_2\leq p_2}}H^{\mu_1,r_1}_{\qa_1,p_1}H^{\mu_2,r_2}_{\qa_2,p_2}\frac{\qp^2}{\qp t^{\mu_1,r_1}\qp t^{\mu_2,r_2}}\\
    =&\sum_{\substack{r_1\leq p_1\\r_2\leq p_2}}\sum_{k,s\geq 0}H^{\mu_1,r_1}_{\qa_1,p_1}H^{\mu_2,r_2}_{\qa_2,p_2}\left(\ddd{\qt_{\mu_1,r_1}\qt^\qb_0\qt_{1,0}^{k+1}}\diff{}{v^{\qb,k}}\right)\comp\left(\ddd{\qt_{\mu_2,r_2}\qt^\ql_0\qt_{1,0}^{s+1}}\diff{}{v^{\ql,s}}\right)\\
    =&\sum_{r_2\leq p_2}\sum_{k,s\geq 0}G^{\qb,k}_{\qa_1,p_1}H^{\mu_2,r_2}_{\qa_2,p_2}\left(\diff{}{v^{\qb,k}}\ddd{\qt_{\mu_2,r_2}\qt^\ql_0\qt_{1,0}^{s+1}}\right)\diff{}{v^{\ql,s}}\\
    &+\sum_{k,s\geq 0} G^{\qb,k}_{\qa_1,p_1}G^{\ql,s}_{\qa_2,p_2}\frac{\qp^2}{\qp v^{\qb,k}\qp v^{\ql,s}}
    \end{align*}
    Notice that in terms of jet coordinates, the differential operator $\tilde O$ consists of not only second order operators but also first order operators. In particular, the coefficients $G^{\qb,k}_{\qa_1,p_1}G^{\ql,s}_{\qa_2,p_2}$ of the second order part are nonzero if and only if $k\geq p_1$ and $s\geq p_2$, while the coefficients of the first order part are non-vanishing for $k\geq p_1$ and $s+1\geq k$. This makes the construction of universal identity potentially problematic. As an example, we set $p_1=2$ and $p_2 = 3g-3$ for $g\geq 2$. Then we know, by applying Theorem \ref{ae}, that
    \begin{align*}
    \tilde O(F_g) =&\, G^{\qb,2}_{\qa_1,2}G^{\ql,3g-3}_{\qa_2,3g-3}\frac{\qp^2 F_g}{\qp v^{\qb,2}\qp v^{\ql,3g-3}}\\
    &+\sum_{r_2\leq 3g-3}\sum_{\substack{k\geq 2\\s\geq 1}}G^{\qb,k}_{\qa_1,2}H^{\mu_2,r_2}_{\qa_2,3g-3}\left(\diff{}{v^{\qb,k}}\ddd{\qt_{\mu_2,r_2}\qt^\ql_0\qt_{1,0}^{s+1}}\right)\diff{F_g}{v^{\ql,s}}.
    \end{align*}
    It is then obvious that all the gradients $\diff{F_g}{v^{\ql,s}}$ for $s\geq 1$ are involved in such an identity. In particular, we need to compute $\diff{H_g}{v^{\ql,s}}$, of which we do not have any control at the present time. Therefore, we need to modify $\tilde O$ to get rid of terms corresponding to $H_g$, and this is exactly the essential role played by the additional terms appeared in \eqref{aw}.
\end{remark}

\section{Conclusion}\label{ag}
In this paper, we study the universal identities for tau-functions (or, more precisely, the free energy functions) of the Dubrovin-Zhang hierarchies. The result is that we can derive a family of universal identities for each genus $g\geq 1$, and these identities don't seem to follow directly from the known relations among the tautological classes on the moduli spaces of curves. 

Moreover, the identities are derived in a particular way that does clarify the structure of the free energy functions in the Dubrovin-Zhang formalism. It is worth to remark that the decomposition \eqref{ah} should be viewed as decomposing the free energy function $F_g$ into leading terms and lower order terms. It is interesting to ask how to identify in a similar explicit way the next order term in the decomposition \eqref{ah} and derive corresponding universal identities.

In this paper, we combine Givental's quantization formalism and Dubrovin-Zhang's loop equation to derive \eqref{ah}. We remark that this interaction is using the 2005 theorem of Dubrovin and Zhang that is not fully publicly presented yet (but available in~\cite{dubrovin2005normal}). One of the further research directions that the authors find very important is to use the enormous technical development of this field in the past 20 years to revisit this theorem and make it fully publicly available. 

The key observation is that, viewed as polynomials in $\frac{1}{\ql-u^1},\dots,\frac{1}{\ql-u^N}$, the leading terms of both sides of the loop equation can be explicitly written. However, it is difficult to consider even the next order terms. Therefore, one may ask if the property of $F_g$ can be studied with other approaches. The \textit{polynomiality theorem} \cite{liu2021linearization} may be a good candidate to study $F_g$. Indeed, the relation between the polynomiality theorem and some tautological relations on the moduli space is studied in \cite{iglesias2022bi}. The polynomiality should give some constraints on the form of $F_g$ and by combining with the loop equation, one may find more structures for $F_g$.

It is also interesting to note that the results of this paper can be applied in a wider context than the free energy functions in the Dubrovin-Zhang formalism. To this end, one can consider the partition functions of not necessarily homogeneous semisimple cohomological field theories. Their relation to the Dubrovin-Zhang tau-functions can be described by the following system of observations:
\begin{itemize}
	\item A homogeneous semisimple CohFT determines a formal Frobenius manifold semisimple at the origin, and for a particular choice of calibration the Dubrovin-Zhang tau function coincides with the corresponding CohFT partition function. 
	\item Vice versa,  the formal expansion of a semisimple Frobenius manifold near each its semisimple point determines a homogeneous CohFT in all genera; moreover, by Teleman's result~\cite{teleman2012structure} the homogeneous CohFT in this case is uniquely determined by its genus 0 part. In this case the partition function of thus constructed CohFT is obtained from the Dubrovin-Zhang tau function by a lower triangular element of the Givental group. 
	\item If we drop the assumption of homogeneity for a CohFT, many of the used techniques still work. For instance, their partition functions are still tame and in the same orbit of the Givental group, and they are tau-functions of some Hamiltonian hierarchies ~\cite{buryak2012deformations,buryak2012polynomial}. But, in general, we don't have the second Hamiltonian structure and lose the loop equation.
\end{itemize}
We note that Proposition~\ref{prop:functions-varphi} can also be proved for the partition functions of not necessarily homogeneous semisimple cohomological field theories through the analysis of the Givental formula in terms of graphs, as in~\cite{dunin2013givental}. In particular, in combination with Proposition~\ref{bb} this means that the KdV free energy functions serve as universal leading terms for the partition functions of any semisimple cohomological field theory. However, we don't expect that we can omit analysis through the loop equation for the next order terms. 

Finally, in deriving the universal identities, the operators $O_{\{\qa_1,k_1;\dots,\qa_n,k_n\}}$ play an important role. They are not very straightforward to define (and remarkably their structure is related to the tautological relations responsible for the DR/DZ equivalence conjecture and polynomiality of the conservation laws of DZ hierarchies of more general F-CohFTs, see~\cite{buryak2022tautological}), but they possess nice properties and their actions can be explicitly written. However, as shown in Sect.\,\ref{bz}, those operators are not enough to derive more general universal identities. We propose some operators $A^{m_1,\dots,m_n}_{\qa_1,p_1;\dots;\qa_n,p_n}$ that are more general, and to some extent they do serve the purpose, but it is not yet a fully satisfactory set of operators, since their action is hard to determine explicitly. The operators $A^{m_1,\dots,m_n}_{\qa_1,p_1;\dots;\qa_n,p_n}$ can be interpreted as the leading terms of the operators coming from more general tautological relations on the genus zero moduli space studied in~\cite{buryak2022tautological}, and we hope that this link might help to derive more general universal identities.

\vskip 1em
\noindent \textbf{Acknowledgements} S.~S. was supported by the Netherlands Organization for Scientific Research. Z.~W. is a JSPS International Research Fellow and his research is supported by JSPS KAKENHI Grant Number 23KF0114. Z.~W. would like to thank Korteweg-de Vries Institute, University of Amsterdam for its hospitality where part of the work was carried out. Z.~W. would like to thank Si-Qi Liu and Youjin Zhang for very helpful discussions. The authors thank the anonymous referees for useful suggestions.


\end{document}